	\setlist[enumerate]{label=(\roman*)}  
	\setlist[enumerate,2]{label=(\alph*)}  
\definecolor{pice}{HTML}{c0d3e1} 
\colorlet{boxcolor}{pice!64}
\tikzstyle{bn}=[fill=black, draw=black, shape=circle, inner sep=1.5pt]
\tikzstyle{medium box}=[fill=boxcolor, draw=black, shape=rectangle, minimum width=0.7cm, minimum height=0.7cm]
\tikzstyle{horiz state}=[fill=boxcolor, draw=black, regular polygon, regular polygon sides=3, minimum width=1cm, shape border rotate=90, inner sep=0pt]
\tikzstyle{arrow}=[->]
\tikzstyle{dashed box}=[-, dashed]
\tikzstyle{over arrow}=[-, black, preaction={draw=white, double}]
\definecolor{myurlcolor}{rgb}{0,0,0}
\definecolor{mycitecolor}{rgb}{0,0,0}
\definecolor{myrefcolor}{rgb}{0,0,0}
\newtheorem{theorem}{Theorem}[section]
\newtheorem{proposition}[theorem]{Proposition}
\newtheorem{lemma}[theorem]{Lemma}
\newtheorem{corollary}[theorem]{Corollary}
\newtheorem{definition}[theorem]{Definition}
\newtheorem{question}[theorem]{Question}
\theoremstyle{definition}
\newtheorem{remark}[theorem]{Remark}
\newtheorem{convention}[theorem]{Convention}
\newcommand{\cat}[1]{{\mathsf{#1}}} 
\newcommand{\id}{\mathrm{id}} 		
\newcommand{\R}{\mathbb{R}}		
\tikzset{pullback/.style={minimum size=1.2ex,path picture={	
			\draw[opacity=1,black,-,#1] (-0.5ex,-0.5ex) -- (0.5ex,-0.5ex) -- (0.5ex,0.5ex);%
}}}
\newcommand{\et}{\parallel}
\DeclareMathOperator{\cop}{copy}
\DeclareMathOperator{\discard}{del}
\title{Markov Categories and Entropy}
\author{Paolo Perrone}
\affil{University of Oxford,\\Department of Computer Science}
\date{}
\begin{document}

\maketitle

\begin{abstract}
 Markov categories are a novel framework to describe and treat problems in probability and information theory.
 In this work we combine the categorical formalism with the traditional quantitative notions of entropy, mutual information, and data processing inequalities. 
 We show that several quantitative aspects of information theory can be captured by an enriched version of Markov categories, where the spaces of morphisms are equipped with a divergence or even a metric.
 
 Following standard practices of information theory, we get measures of mutual information by quantifying, with a chosen divergence, how far a joint source is from displaying independence of its components. 
 
 More strikingly, Markov categories give a notion of determinism for sources and channels, and we can define entropy exactly by quantifying how far a source or channel is from being deterministic. This recovers Shannon and Rényi entropies, as well as the Gini-Simpson index used in ecology to quantify diversity, and it can be used to give a conceptual definition of generalized entropy.
 
 No previous knowledge of category theory is assumed.
\end{abstract}

\tableofcontents

\section*{Introduction}
\addcontentsline{toc}{section}{Introduction}

In this work we integrate two main themes of information theory. On one hand there is a qualitative description of information flow, for example by means of graphical representation of the stochastic dependence relations, or by means of category-theoretic ideas.
On the other hand there is quantitative reasoning, based on measures such as entropy and mutual information, and on inequalities such as data processing inequalities.
We can incorporate the quantitative aspects into the categorical framework using the theory of \emph{enriched} categories. (Its previous knowledge is however not required to understand this work.)

Since the early days of information theory there has been interest in categorical structures to describe probabilistic processes (the first published reference seems to be due to Čencov~\cite{chentsov}).
Recently, there has been growing interest in \emph{Markov categories}, defined in their current form by Fritz in \cite{fritz2019synthetic}.\footnote{Markov categories are related to older structures called ``copy-discard'' or ``garbage-share'' categories~\cite{gadducci-thesis,chojacobs2019strings}. See \cite[Remark~2.2]{fritz2022free} for a detailed history of the concept.} 
They can be seen as an abstraction of categories of kernels, which come equipped with a graphical calculus representing the information flow faithfully.
Indeed, the graphical calculus of Markov categories is known to satisfy a \emph{d-separation theorem}~\cite{fritz2022dseparation}, and hence can be thought of as a general theory of probabilistic graphical models, alongside Bayesian networks and Markov random fields.
There is a correspondence between the graphical representation and the mathematical structures that allow us to prove theorems simply by graphical manipulations. 

Several theorems of probability theory and related fields have been reproven in this way, and sometimes generalized. Among these results, several theorems on sufficient statistics \cite{fritz2019synthetic,jacobs-statistics}, the zero-one laws of Kolmogorov and Hewitt-Savage \cite{fritzrischel2019zeroone}, the Blackwell-Sherman-Stein theorem on comparison of statistical experiments \cite{fritz2020representable}, de Finetti's theorem~\cite{fritz2021definetti,ours_LICS}, and the ergodic decomposition theorem~\cite{ergodic}.
Markov categories have also been used to model aspects of information flow \cite{fritz2022dilations}, capturing several qualitative concepts of information theory, such as dependence and independence, and signalling. 

In this work we turn to more quantitative concepts of information theory, in particular divergences and entropy, and show how they fit into the formalism of Markov categories. 
In order to incorporate quantitative statements into the categorical formalism we make use of \emph{enriched category theory}~\cite{basicconcepts}, a version of category theory where the set of arrows between any two objects is replaced by a more general structure. In our case, we take a metric or divergence space, where we can measure ``how far'' two morphisms are from being equal, or equivalently, ``how far'' a diagram is from commuting. 
While at first it might seem that metrics have more desirable properties than more general divergences, our formalism will work in general.
We focus on three choices of divergences: the \emph{Kullback-Leibler divergence} (or \emph{relative entropy}), the more general \emph{Rényi divergences}, and the \emph{total variation distance}.

It is customary, in information theory, to define mutual information as a measure of departure from the case of stochastic independence. 
This fits very well into the Markov categories formalism, where there is a native, abstract notion of stochastic independence, based on equality of two suitably constructed morphisms (see \Cref{mi}). By measuring the departure from this case, one can reconstruct exactly measures such as Shannon and Rényi's mutual information.

Markov categories also come with a notion of \emph{determinism}, again based on an equation between morphisms (see \Cref{H}). By measuring the departure from this case, and choosing our divergences appropriately, we can recover exactly Shannon and Rényi's entropies, and from the total variation distance one obtains the \emph{Gini-Simpson index}, used for example in ecology to quantify diversity~\cite{leinster-entropy}. Our approach therefore gives an equivalent, abstract definition of (generalized) entropy, at least for the discrete case.

\paragraph{Previous work on category theory and entropy.}
Entropy and its properties have often been of interest for the category theory community. In \cite{entropy-loss} \cite{short-entropy} and \cite{infloss-stoch}, Shannon's entropy was given a categorical characterization formalizing the idea of measuring information loss.
In \cite{entropy-bayes}, relative entropy (the KL divergence) was given a characterization in terms of Bayesian inference for the discrete case, and in \cite{entropy-sbs} for the general Standard Borel case. A 2-dimensional generalization was given in \cite{2-relative-entropy}.
Entropy can be studied through the lens of the operad of convex spaces, and in \cite{entropy-operads} it was shown to be a derivation on such an operad. Its nature as a derivation has also been explored from the point of view of homology in \cite{homology-entropy}.
The compositional properties of entropy have also been studied in terms of polynomial functors~\cite{polynomial-entropy}.
On the quantum side, von Neumann entropy was given a categorical characterization in~\cite{entropy-vn}, and there is work on a characterization of quantum relative entropy~\cite{entropy-quantum}.
From the point of thermostatics and thermodynamics, there is work on the categorical significance of entropy and related quantities~\cite{compositional-thermostatics}.
From an algebraic perspective, a categorical generalization of the concept of algebraic entropy has been given in \cite{entropy-in-a-cat}.
Also, both classical and quantum entropy, and their relation to contextuality, have been explored in \cite{topos-entropy}.

This work is not the first approach to entropy which combines metric geometry and category theory. The first ideas on the matter seem to be due to Gromov~\cite{gromov-entropy} and have inspired, besides this work, a number of other independent approaches, such as \emph{tropical probability theory}~\cite{tropical-probability1,tropical-probability2,tropical-probability3,tropical-probability4,tropical-probability5}.

Finally, category theory, metric geometry, and entropy are also main themes of the book \cite{leinster-entropy}, about entropy-like quantities used as measures of diversity, for example in the context of ecology.

\paragraph{Outline of this work.}
In \Cref{background} we give an overview of Markov categories, focusing on the two main examples used in this work, the category $\cat{FinStoch}$ of finite alphabets and stochastic matrices (noisy channels) between them, and the category $\cat{Stoch}$ of infinite measurable alphabets and Markov kernels between them.

In \Cref{divmcat} we review the notion of divergence, or statistical distance, and we define an enrichment on Markov categories (\Cref{defdivmarkov}). We give an interpretation of the inequalities involved, in particular, a data processing inequality (\Cref{qconvdproc}). By reviewing the Markov-categorical notion of joints and marginals, we give an equivalent characterization of enrichment in terms of them, which can be seen as a monotonicity condition in the number of observed variables, together with a generalized chain rule (\Cref{divjm}).
We then turn to particular examples, where we show that the KL divergence (relative entropy), the Rényi $\alpha$-divergences, and the total variation distance all give enrichments on the Markov categories $\cat{Stoch}$ and $\cat{FinStoch}$. We also show that in general, the Tsallis $q$-divergences do not give an enrichment. 
In \Cref{partitions} we show that in our examples, the divergence between nondiscrete probability measures can be expressed as a supremum over countable partitions, and express the result as an enriched universal property, the first one in our formalism. 
In \Cref{conddiv} we then define a conditional version of divergences, which can be seen as a measure of departure from almost-sure equality of channels.

In \Cref{mi} we review the notion of independence and conditional independence in Markov categories, and define mutual information as a measure of departure from the independence case. This is in line with the traditional information-theoretic approach, and it recovers the usual notions of Shannon mutual information and $\alpha$-mutual information for their corresponding divergences (\Cref{micases}).
We show that all these measures of mutual information, by construction, satisfy a data processing inequality (\Cref{datapmi}), which once again implies a monotonicity condition in the number of observed variables.
We also show that our measure of conditional divergence, quantifying the departure from almost sure conditional independence, recovers classical measures of conditional mutual information (\Cref{condmi}).

In \Cref{H} we review the notion of deterministic sources and channels in a Markov category, and define entropy as a measure of departure from determinism. 
This recovers some well known measures of randomness in the discrete case (\Cref{Hcasefin}). In particular, the KL divergence gives Shannon entropy, the Rényi $\alpha$-divergence gives the Rényi entropy, but of a different order ($2-\alpha$), and the total variation distance gives the Gini-Simpson index. 
Similarly, measuring the departure from almost sure determinism gives us conditional entropy (\Cref{condent}).
In the nondiscrete case, these measures of entropy are all maximal for atomless distributions (\Cref{Hcaseinf}). We argue that this is due to the fact that measurable spaces are insufficient to describe sources and channels in the continuous case, and suggest a more geometrical approach (\Cref{futurework}).

Finally, in \Cref{Div} we spell out the details of the category of divergence spaces, which we are using as enrichment to our Markov categories.
The content of the appendix, or any previous knowledge of enriched category theory, is not required to understand the rest of this work.

\paragraph{Acknowledgements.}
The author would like to thank Tobias Fritz, Tomá\v{s} Gonda and Sam Staton for the insightful discussions.

\section{Background: Markov categories}\label{background}

A Markov category is an abstraction of a system of noisy processing units and data that they can share as input and output.

\paragraph{Alphabets and channels.}
First of all, a Markov category consists of a collection of \emph{objects}, denoted by $X$, $Y$, and so on, which we think of as spaces of possible states or data, or alphabets. We represent them as wires, in this work, horizontal.
\[
 \tikzfig{entropy-object}
\]
In this work we will mostly consider as objects either \emph{finite alphabets}, which will form the Markov category $\cat{FinStoch}$, or possibly infinite, \emph{measurable alphabets}, which will form the Markov category $\cat{Stoch}$. 
The objects of $\cat{FinStoch}$ are finite sets, and the objects of $\cat{Stoch}$ are measurable spaces. We denote a measurable space by $(X,\Sigma_X)$ (where $\Sigma_X$ is the $\sigma$-algebra), or more briefly by $X$ when it does not cause ambiguity. 

Between two objects $X$ and $Y$ we can have \emph{morphisms} $f:X\to Y$, which we can interpret as channels, devices, or programs, which are in general noisy, involving randomness. We represent them as boxes to be read horizontally from left to right.
\[
 \tikzfig{entropy-channel}
\]
In $\cat{FinStoch}$, a channel is a stochastic matrix from $X$ to $Y$, i.e.~a matrix of nonnegative entries with columns indexed by the elements of $X$, and rows indexed by the elements of $Y$,
\[
 \begin{tikzcd}[row sep=0]
  X \times Y \ar{r}{f} & {[0,1]} \\
  (x,y) \ar[mapsto]{r} & f(y|x)
 \end{tikzcd}
\]
such that each column sums to one,
\[
 \sum_{y\in Y} f(y|x) = 1 \qquad\mbox{for every } x\in X .
\]
We can interpret $f(y|x)$ as a conditional or transition probability from state $x\in X$ to state $y\in Y$, or we can interpret $f$ as a family of probability measures $f_x$ over $Y$ indexed by the elements of $X$. That is, if we denote by $PY$ the set of probability measures on $Y$, a stochastic matrix $f$ can equivalently be seen as a function 
\begin{equation}\label{fx}
 \begin{tikzcd}[row sep=0]
  X \ar{r} & PY \\
  x \ar[mapsto]{r} & f_x .
 \end{tikzcd}
\end{equation}

In $\cat{Stoch}$, a morphism $f:X\to Y$ is a Markov kernel from $X$ to $Y$, by which we mean an assignment 
\[
 \begin{tikzcd}[row sep=0]
  X \times \Sigma_Y \ar{r}{f} & {[0,1]} \\
  (x,S) \ar[mapsto]{r} & f(S|x) ,
 \end{tikzcd}
\]
which is measurable in the first argument, and which is a probability measure in the second argument. 
Just as for stochastic matrices, we can also view a kernel equivalently as a function in the form \eqref{fx}, which assigns to each $x\in X$ a probability measure $f_x\in PY$. This function defines a kernel if and only if it is measurable in $x$ for a suitably defined $\sigma$-algebra on $PY$ (see \cite{giry} for more).
Given a measurable (deterministic) function $f:X\to Y$, we can always obtain a kernel $K_f$ from $X$ to $Y$ as follows: for each $x\in X$ and $S\in\Sigma_Y$,
\[
 K_f(S|x) \coloneqq \delta_{f(x)}(S) = 1_S(f(x)) = \begin{cases}
                                    1 & f(x)\in S \\
                                    0 & f(x)\notin S .
                                   \end{cases}
\]
These can be seen as channels with no noise. 

We can model probability measures as channels with no inputs, as follows.
First of all, we have a distinguished object called the \emph{unit}, which we write $I$, and which we do not draw (it's represented by an empty region). It represents a situation of no information. In $\cat{FinStoch}$ and $\cat{Stoch}$ it is the one-point space, where there is no distinction between states to be made.
A \emph{source}, or (random) \emph{state} on $X$ is now a morphism $p:I\to X$, which we depict as follows.
\[
 \tikzfig{entropy-source}
\]
In $\cat{FinStoch}$, a source is a stochastic matrix on $X$ of one column, i.e.~a finite probability measure on $X$.
In $\cat{Stoch}$ it is a Markov kernel to $X$ with no input, i.e.~a probability measure on the measurable space $X$.

\paragraph{Identities and sequential composition.}
The fact that we have a \emph{category} means the following.
First of all, we have an \emph{identity morphism} $\id_X:X\to X$ for each object (alphabet) $X$, which represents no change in the state of $X$. We draw it simply with a wire:
\[
 \tikzfig{entropy-id}
\]
In $\cat{FinStoch}$, identities are identity matrices. In $\cat{Stoch}$ they are the ``Dirac delta'' kernels defined by the identity function,
\[
 \id(S|x) = \delta_x(S) = 1_S(x) = \begin{cases}
                                    1 & x\in S \\
                                    0 & x\notin S 
                                   \end{cases}
\]
for each $x\in X$ and $S\in\Sigma_X$.

Moreover, we have a notion of \emph{sequential composition} of channels: given channels $f:X\to Y$ and $g:Y\to Z$, we can form a channel $g\circ f:X\to Z$, which we draw as follows.

\[
 \tikzfig{entropy-channel-channel} 
\]
In $\cat{FinStoch}$ the composition is given by the \emph{Chapman-Kolmogorov formula}:
\[
 g\circ f\,(z|x) \coloneqq \sum_{y\in Y} g(z|y)\,f(y|x) ,
\]
and in $\cat{Stoch}$ it is given by its continuous analogue: for every measurable subset $S\subseteq Z$,
\[
 g\circ f\,(S|x) \coloneqq \int_Y g(S|y)\,f(dy|x) ,
\]
by which we mean the integral with respect to the measure $f_x$ on $Y$, for every $x$.
This makes the transitions $f$ and $g$ independent, as in a Markov process (hence the name, ``Markov category''), which models for example connecting devices whose sources of noise are independent. 
(Markov categories can also model more general, non-Markov stochastic processes, by means of \emph{joint sources and morphisms}, see \Cref{divjm} for more, as well as the original source \cite{fritz2019synthetic}.)

To have a \emph{category}, we have to require that this composition is associative, and that the identities behave indeed like identities. This is the case in $\cat{Stoch}$ and $\cat{FinStoch}$, as it is well known. 

\paragraph{Parallel composition.}
Markov categories also come with a notion of \emph{parallel} composition. 
First of all, given objects $X$ and $A$, we want a \emph{tensor product} object, which we denote by $X\otimes A$, and which we interpret as the object whose states are composite states. For example, in $\cat{FinStoch}$ and in $\cat{Stoch}$ it is given by the cartesian product of sets and of measurable spaces (the latter equipped with the product $\sigma$-algebra). 
Now given channels $f:X\to Y$ and $h:A\to B$, we can form the tensor product channel $f\otimes h:X\otimes A\to Y\otimes B$, which we represent as follows,
\[
 \tikzfig{entropy-tensor-channel}
\]
and which we interpret as processing $X$ and $A$ independently.
Compare this with a generic channel $g:X\otimes A\to Y\otimes B$,
\[
 \tikzfig{entropy-double-channel}
\] 
where for example, $Y$ can possibly depend on both $X$ and $A$.
In $\cat{FinStoch}$, the tensor product of the stochastic matrices $f:X\to Y$ and $g:A\to B$ is given by the product of the individual entries,
\[
 f\otimes h\,(y,b|x,a) \coloneqq f(y|x) \, h(b|a) ,
\]
and in $\cat{Stoch}$ it is defined analogously. 
In particular, for sources $p$ and $q$ on $X$ and $Y$, 
\[
 \tikzfig{entropy-tensor-source}
\]
the tensor product is just the product of the probabilities,
\[
 p\otimes q\,(x,y) = p(x)\,p(y) ,
\]
taken independently.

For technical reasons we require this tensor product to be associative and unital up to isomorphism (where the unit is given by the object $I$), and to be symmetric, i.e.~for all objects $X$ and $Y$ we need a distinguished isomorphism $X\otimes Y\cong Y\otimes X$, which we draw as follows.
\[
 \tikzfig{entropy-sym}
\]
In $\cat{Stoch}$ and $\cat{FinStoch}$, this morphism just switches the coordinates, $(x,y)\mapsto(y,x)$, with probability one.
These isomorphisms have to be compatible in such a way as to form what is called a \emph{symmetric monoidal category} (see for example \cite[Section~VII.1]{maclane-cwm1998} for more information).

\paragraph{Copy and discard.}
The last piece of structure that we need to form a Markov category is two distinguished maps for each object $X$: a map $\cop:X\to X\otimes X$ which we call ``copy'' or ``duplicate'', and represent as follows,
\[
 \tikzfig{entropy-copy}
\]
and a map $\discard:X\to I$ which we call ``delete'' or ``discard'', and represent as follows.
\[
 \tikzfig{entropy-discard}
\]
As the names suggest, the two maps can be interpreted as copying and discarding the state of $X$. 
In both $\cat{Stoch}$ and $\cat{FinStoch}$, the copy map assigns to each $x\in X$ the point $(x,x)\in X\times X$ with probability one. In other words, it is the kernel defined by the diagonal embedding $X\to X\times X$. 
The discard map, in $\cat{Stoch}$ and $\cat{FinStoch}$, corresponds to summing (or integrating) the probabilities. For example, given a probability measure $p$ on $X$, we obtain the trivial probability measure $1$ on $I$ by summing,
\[
 \tikzfig{entropy-discard}
 \qquad\qquad
 \sum_{x\in X} p(x) = 1 .
\]
Similarly, given a (joint) probability measure over $X\times Y$, summing over all the $X$, i.e.~\emph{discarding} the state of $X$, gives the (marginal) distribution over $Y$:
\[
 \tikzfig{entropy-double-source}
 \qquad\qquad
 \sum_{x} p(x,y) = p_Y(y).
\]
More on this in \Cref{divjm}.

These copy and discard maps are required to satisfy the following conditions, called \emph{commutative comonoid axioms}: first of all, copying and then discarding one of the copies is the same as doing nothing:
\[
 \tikzfig{entropy-comonoid1}
\]
Second, copying the first copy has the same effect as copying the second copy (one just has three copies):
\[
 \tikzfig{entropy-comonoid2}
\]
Lastly, switching the two copies has no effect:
\[
 \tikzfig{entropy-comonoid3}
\]
Moreover, we require these copy and discard maps to be compatible with the tensor product.

Note that a version of this copy and discard structure is implicitly used whenever information is manipulated. For example, when we have channels $f,g:X\to Y$ and write expressions such as 
\[
 \big( f(x), g(x) \big) ,
\]
feeding the same value $x$ in both functions (and not, for example, $x$ and $x'$) we are implicitly using the copy map, as follows.
\[
 \tikzfig{entropy-copy-seq}
\]
Similarly, whenever we have a source $p$ on $X$, we can view it as a constant (noisy) channel $A\to X$ with an input $A$ which is not really used for processing. This can be expressed using the discard map as follows.
\[
 \tikzfig{entropy-discard-seq}
\]
More generally, any channel $f:X\to Y$ can also be seen as a channel $X\otimes A\to Y$ which does not use the input $A$, as follows.
\[
 \tikzfig{entropy-discard-seq2}
\]
 
The last property that we require in a Markov category is \emph{normalization} or \emph{counitality}: applying a morphism $f$ and discarding its output is the same as discarding the input from the start.
\[
 \tikzfig{entropy-norm}
\]
In $\cat{FinStoch}$, this is exactly the condition that the sum of each column of a stochastic matrix is one, i.e.~that transition probabilities are normalized. 

These structures and properties are what is needed to form a Markov category. 
For reference, here is the rigorous, concise definition.

\begin{definition}\label{defmarkov}
 A \emph{Markov category} is a symmetric monoidal category $(\cat{C},\otimes,I)$ together with a chosen commutative comonoid structure for each object $X$, which is compatible with tensor products, and for which all morphisms are counital.
\end{definition}

The counitality or normalization condition is sometimes dropped, and instead of a Markov category one talks about a \emph{garbage-share (GS)}~\cite{gadducci-thesis,fritz2022free} or \emph{copy-discard (CD) category}~\cite{chojacobs2019strings}. 

For more information on the theory of Markov categories we refer to the original source \cite{fritz2019synthetic}, and to the other material cited in the introduction.
Note that in most other articles the graphical calculus is written vertically, from bottom to top, instead of from left to right.

\section{Divergences on Markov categories}\label{divmcat}

\begin{definition}\label{defdiv} 
 A \emph{divergence} or \emph{statistical distance} on a set $X$ is a function 
 \[
  \begin{tikzcd}[row sep=0]
   X\times X \ar{r}{D} & {[0,\infty]} \\
   (x,y) \ar[mapsto]{r} & D(x\et y)
  \end{tikzcd}
 \]
 such that $D(x\et x)=0$.
 
 We call the pair $(X,D)$ a \emph{divergence space}.
 
 We call the divergence $D$ \emph{strict} if $D(x\et y)=0$ implies $x=y$.
\end{definition}

Every metric space is a strict divergence space. Divergences, however, are not required to be symmetric, nor to satisfy the triangle inequality (and in our convention, infinity is allowed). Still, the same intuition can help.

\begin{remark}
 For the readers who find enriched category theory helpful, a divergence is to a (Lawvere) metric as a reflexive multigraph is to a category (or as a reflexive relation is to a preorder).
 The category of divergence spaces, and its usage as an enriching category, is explained in more detail in \Cref{Div}.
\end{remark}
 
Here is an example of a non-metric divergence. (First, a convention.) 

\begin{convention}\label{logzero}
 In expressions such as $x \ln \dfrac{y}{z}$, we set $ 0 \ln \dfrac{0}{x} = 0 \ln \dfrac{0}{0} = 0$, and $x \ln \dfrac{x}{0} = \infty$ for $x\ne 0$.
 In particular, $0 \ln 0 = 0$. 
\end{convention}

\begin{definition}\label{defKL}
 Let $X$ be a finite set, and let $p$ and $q$ be probability distributions on $X$. The \emph{relative entropy}, or \emph{Kullback-Leibler} (\emph{KL}) \emph{divergence}, between $p$ and $q$ is the quantity
 \[
 D_{KL}(p\et q) \coloneqq \sum_{x\in X} p(x) \ln \dfrac{p(x)}{q(x)} ,
 \]
 using \Cref{logzero}.
\end{definition}

The space $PX$ of probability distributions on a finite set $X$, together with the relative entropy, forms a divergence space. Note that $D_{KL}(p\et q)$ is finite if and only if the support of $p$ is contained in the support of $q$, or in measure-theoretic terms, $p$ is absolutely continuous with respect to $q$. 

We now consider Markov categories where sources and channels are equipped with a family of chosen, compatible divergences.

\begin{definition}\label{defdivmarkov}
 A \emph{divergence} on a Markov category $\cat{C}$ amounts to the following data.
 \begin{itemize}
  \item For each pair of objects $X$ and $Y$, a divergence $D_{X,Y}$ on the set of morphisms $X\to Y$, or more briefly just $D$;
 \end{itemize}
 such that 
 \begin{itemize}
  \item The composition of morphisms in the following form
  \[
   \begin{tikzcd}
    X \ar[bend left]{r}{f} \ar[bend right]{r}[swap]{f'}
    & Y \ar[bend left]{r}{g} \ar[bend right]{r}[swap]{g'}
    & Z
   \end{tikzcd}
  \]
 satisfies the following inequality,
  \begin{equation}\label{enrichmentcondition}
   D(g\circ f\et g'\circ f') \le D(f\et f') + D(g\et g') ;
  \end{equation}
  \item The tensor product of morphisms in the following form
  \[
   \begin{tikzcd}[row sep=0]
    X\otimes A \ar[bend left]{r}{f\otimes h} \ar[bend right]{r}[swap]{f'\otimes h'} & Y\otimes B
   \end{tikzcd}
  \]
  satisfies the following inequality,
  \begin{equation}\label{tensortriangle}
   D\big( (f\otimes h)\et  (f'\otimes h') \big) \le D(f\et  f') + D(h\et  h') .
  \end{equation}
 \end{itemize}
\end{definition}

An interpretation of inequalities \eqref{enrichmentcondition} and \eqref{tensortriangle} is that one can bound the divergence between complex configurations of sources and channels, obtained through sequential and parallel composition, in terms of their simpler components. 
For example, the distance or divergence between the two systems depicted below
\[
\tikzfig{entropy-paste}
\]
is bounded by $D(p,p') + D(f,f') + D(g,g')$. More generally, for any two string diagrams of any configuration, the distance or divergence between the resulting constructions will always be bounded by the divergence between the basic building blocks.\footnote{If the configurations do not correspond exactly, because their ``wiring'' is different, one still has a bound for each way of partially pattern-matching them.}
In particular, setting for example $p=p'$ and $f=f'$ but not $g=g'$ in the diagrams above, we see that the divergence between $g$ and $g'$ is not increased by pre-processing, post-processing or parallel processing $g$ and $g'$ with the same subsystem. For the case of post-processing, this gives Shannon-like data processing inequalities, see \Cref{qconvdproc}.

Readers familiar with enriched category theory might recognize an enrichment in \Cref{defdivmarkov}. This is indeed the case, and for the interested readers, more details are given in \Cref{Div}. 
Note also that in the definition we are using the monoidal structure of $\cat{C}$, but not the Markov structure. The latter will however be used to give a simpler description, in \Cref{equivenrich}.

In this work we will focus on divergences on the category of finite alphabet channels ($\cat{FinStoch}$) and on the category of possibly infinite and continuous Markov kernels ($\cat{Stoch}$). If we have a divergence on the space $PY$ of probability measures over $Y$, we can define a divergence between two channels $f,g:X\to Y$ by taking the supremum over the inputs,
\begin{equation}\label{Dtocat}
 D(f\et g) \coloneqq \sup_{x\in X} D_Y \big( f_x\et g_x \big) ,
\end{equation}
recalling that $f_x$ is the probability distribution on $Y$ given by mapping a measurable set $T\subseteq Y$ to $f(T|x)$ (or just $y\to f(y|x)$ in the finite case), and analogously for $g_x$.
If $Y$ is finite, we can take the maximum instead of the supremum.

For divergences obtained in this form, the conditions of \Cref{defdivmarkov} can be checked in a particularly simple form. 
Let's call a \emph{family of divergences} a choice of a divergence $D_X$ on $PX$ for each measurable set $X$ (or finite set $X$, in the finite case). For example, the Kullback-Leibler divergence is one such family.

\begin{proposition}\label{simpler}
 A family of divergences $\{D_X\}$ on measurable sets (resp.~finite sets) gives a divergence on $\cat{Stoch}$ (resp.~$\cat{FinStoch}$) if and only if 
 \begin{equation}\label{simplerenrich}
  D_Y\big( f\circ p\et f'\circ p' \big) \le  D_X\big(p\et p' \big) + \sup_{x\in X} D_Y\big( f_x\et  f'_x \big) 
 \end{equation}
 for all probability distributions $p,p'$ on $X$ and kernels $f,f':X\to Y$,
 and 
 \begin{equation}\label{simplertens}
  D_{X\otimes A}(p\otimes q\et  p'\otimes q') \le D_X(p\et p') + D_A(q\et q') 
 \end{equation}
 for each probability distribution $p,p'$ on $X$ and $q,q'$ on $A$.
\end{proposition}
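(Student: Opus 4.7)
The plan is to show each direction separately. The forward implication (necessity) is almost immediate: if the family $\{D_X\}$ gives a divergence on the Markov category via \eqref{Dtocat}, then one obtains \eqref{simplerenrich} by specializing \eqref{enrichmentcondition} to the case where the first pair of morphisms are sources $p, p' : I \to X$ (so that $\sup_\star$ is trivial and $D(p \et p')$ becomes $D_X(p \et p')$), and composing with $f, f' : X \to Y$. Similarly, \eqref{simplertens} follows from \eqref{tensortriangle} applied to sources $p, p' : I \to X$ and $q, q' : I \to A$.

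For the reverse implication (sufficiency), the key observation is how the sup-definition \eqref{Dtocat} interacts with composition and tensor products at each input. Concretely, for any $x \in X$ one has $(g \circ f)_x = g \circ f_x$ viewed as a source $I \to Z$, and for any $(x,a) \in X \otimes A$ one has $(f \otimes h)_{(x,a)} = f_x \otimes h_a$. These pointwise identities let me reduce the inequalities for general morphisms to the hypotheses, which only involve sources.

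For \eqref{enrichmentcondition}, I would fix $x \in X$, apply \eqref{simplerenrich} to the sources $f_x, f'_x$ on $Y$ and the channels $g, g' : Y \to Z$ to get
\[
 D_Z\big( g \circ f_x \et g' \circ f'_x \big) \le D_Y\big(f_x \et f'_x\big) + \sup_{y \in Y} D_Z\big(g_y \et g'_y\big) = D_Y\big(f_x \et f'_x\big) + D(g \et g'),
\]
and then take the supremum over $x \in X$, using that $\sup_x D_Y(f_x \et f'_x) = D(f \et f')$. For \eqref{tensortriangle}, I would fix $(x,a) \in X \otimes A$, apply \eqref{simplertens} to the sources $f_x, f'_x$ on $Y$ and $h_a, h'_a$ on $B$, and then take the supremum over $(x,a)$, using the elementary fact that $\sup_{(x,a)}[\alpha(x) + \beta(a)] \le \sup_x \alpha(x) + \sup_a \beta(a)$ to separate the two terms.

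The argument is essentially bookkeeping, and I do not expect a genuine obstacle: the reflexivity $D(f \et f) = 0$ follows pointwise from $D_Y(f_x \et f_x) = 0$, and no measurability concerns arise because we only take suprema (rather than integrals) over the input coordinate. The only mild subtlety is the asymmetric form of \eqref{simplerenrich}, where the channel-part appears as a supremum while the source-part appears bare; the proof above handles this cleanly by taking the outer sup over $x$ only at the very end, after the channel contribution has already been absorbed into $D(g \et g')$.
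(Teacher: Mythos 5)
Your proposal is correct and follows essentially the same route as the paper: the necessity direction is the paper's ``take $X$ and $A$ to be one-point spaces,'' and the sufficiency direction is the paper's reduction of the suprema to pointwise inequalities, which you make slightly more explicit via the identities $(g\circ f)_x = g\circ f_x$ and $(f\otimes h)_{(x,a)} = f_x\otimes h_a$. No gaps.
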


\begin{proof}
 Let $\{D_X\}$ be a family of divergences, and suppose that \eqref{simplerenrich} and \eqref{simplertens} are satisfied for all distributions and kernels. Condition \eqref{enrichmentcondition}, using \eqref{Dtocat}, reads as follows.
 \[
  \sup_{x\in X} D_Z\big( (g\circ f)_x\et  (g'\circ f')_x \big) \le \sup_{x\in X} D_Y( f_x\et  f'_x ) + \sup_{y\in Y} D_Z( g_y\et  g'_y ) . 
 \]
 The inequality above holds in particular if it holds for every $x$ individually, without taking the supremum. We are left with
 \[
  D_Z\big( (g\circ f)_x\et  (g'\circ f')_x \big) \le \sup_{x\in X} D_Y( f_x\et  f'_x ) + \sup_{y\in Y} D_Z( g_y\et  g'_y )
 \]
 for every $x$, which up to renaming ($f_x$ to $p$, $g$ to $f$, etc.) is \eqref{simplerenrich}.
 
 Similarly, \eqref{tensortriangle} reads as follows.
 \[
  \sup_{x\in X, a\in A} D_{Y\otimes B} \big( f_x\otimes h_a\et f'_x\otimes h'_a \big) 
  \le \sup_{x\in X} D_Y\big( f_x\et  f'_x \big)  + \sup_{a\in A} D_B\big( h_a\et  h'_a \big) .
 \]
 Once again, the inequality holds in particular if it holds for every $x$ and $a$ individually, without taking the supremum, i.e.~if for all $x$ and $a$,
 \[
  D_{Y\otimes B} \big( f_x\otimes h_a\et f'_x\otimes h'_a \big) \le D_Y\big( f_x\et  f'_x \big) + D_B\big( h_a\et  h'_a \big)
 \]
 Again, up to renaming, it is sufficient to prove \eqref{simplertens}.
 
 The converse statement is obtained by taking $X$ and $A$ to be one-point spaces (and suitably renaming). 
\end{proof}

\subsection{Data processing and other inequalities}\label{qconvdproc}

In \Cref{defdivmarkov}, we have seen that in order to have a divergence on our Markov category via \eqref{Dtocat}, for each diagram in the following form,
   \[
   \begin{tikzcd}
    X \ar[bend left]{r}{f} \ar[bend right]{r}[swap]{f'}
    & Y\ar[bend left]{r}{g} \ar[bend right]{r}[swap]{g'}
    & Z
   \end{tikzcd}
   \]
the following inequality needs to be satisfied,
  \[
   D(g\circ f\et g'\circ f') \le D(f\et f') + D(g\et g') .
  \]
  
Let's look at the information-theoretic meaning of this inequality. First of all, setting $g=g'$,
   \[
   \begin{tikzcd}
    X \ar[bend left]{r}{f} \ar[bend right]{r}[swap]{f'}
    & Y \ar{r}{g} 
    & Z
   \end{tikzcd}
   \]
we get 
\begin{equation}\label{Ddataprocgen}
 D(g\circ f\et g\circ f') \le D(f\et f') ,
\end{equation}
and for sources,
\begin{equation}\label{Ddataproc}
 D(g\circ p\et g\circ p') \le D(p\et p') .
\end{equation}
We can interpret this condition as a \emph{data processing inequality}: the idea is that if we process $X$ with the channel $g$, then we might lose some distinctions, and hence the divergence between the two inputs is decreased by processing.
This is particularly important when $g$ is a deterministic function, but the condition holds more in general, also when $g$ is a kernel with randomness.
In terms of random variables, the condition reads
\[
 D\big(g(X) \et g(Y) \big) \le D(X \et Y).
\]
See for example \cite[Example~2]{renyi-div} for more on this idea.
 
If instead in \eqref{enrichmentcondition} we set $f=f'$,
\[
   \begin{tikzcd}
    X \ar{r}{f}
    & Y \ar[bend left]{r}{g} \ar[bend right]{r}[swap]{g'}
    & Z
   \end{tikzcd}
   \]
the corresponding condition that we get is that
  \begin{equation}\label{Dquasiconv}
   D(g\circ f\et g'\circ f) \le D(g\et g') ,
  \end{equation}
which is related, but not equivalent, to quasi-convexity of $D$. 

One might now ask, if a divergence satisfies both \eqref{Ddataprocgen} and \eqref{Dquasiconv} for all channels, does it also satisfy \eqref{simplerenrich}? A partial answer is, \emph{it does when the divergence is a metric}. 
Indeed, let $X$, $Y$ and $Z$ be objects of a category $\cat{C}$, and consider divergences $D$ on the hom-sets $\cat{C}(X,Y)$, $\cat{C}(Y,Z)$, and $\cat{C}(X,Z)$ (for example given by taking the supremum of divergences on $Y$ and $Z$, as in \eqref{Dtocat}). 
 Suppose moreover that the divergence on $\cat{C}(X,Z)$ satisfies a metric triangle inequality.
 If given morphisms as follows,
 \[
  \begin{tikzcd}
    X \ar[bend left]{r}{f} \ar[bend right]{r}[swap]{f'}
    & Y \ar[bend left]{r}{g} \ar[bend right]{r}[swap]{g'}
    & Z
   \end{tikzcd}
 \]
 we have that
 \[
  D(g\circ f\et g\circ f') \le D(f\et f')
 \]
 and 
 \[
  D(g\circ f'\et g'\circ f') \le D(g\et g') ,
 \]
 then since $D$ on $\cat{C}(X,Z)$ satisfies a triangle inequality,
 \begin{align*}
  D(g\circ f\et g'\circ f') &\le D(g\circ f\et g\circ f') + D(g\circ f'\et g'\circ f') \\
  &\le D(g\et g') + D(f\et f') .
 \end{align*}
(In general, if there is no metric triangle inequality, this argument does not work.)
 
Let's try to interpret this diagrammatically. 
The condition on composition \eqref{enrichmentcondition} gives a sort of ``horizontal'' or ``sequential'' triangle inequality, which we could intuitively draw as follows.
\begin{equation}\label{enrichmentcondition2}
 \begin{tikzcd}
    X \ar[bend left]{r}{f} \ar[bend right]{r}[swap]{f'}
    & Y \ar[bend left]{r}{g} \ar[bend right]{r}[swap]{g'}
    & Z
   \end{tikzcd}
   \qquad \le \qquad
    \begin{tikzcd}
    X \ar[bend left]{r}{f} \ar[bend right]{r}[swap]{f'}
    & Y 
   \end{tikzcd}
   \quad+\quad
    \begin{tikzcd}
    Y \ar[bend left]{r}{g} \ar[bend right]{r}[swap]{g'}
    & Z
   \end{tikzcd}
\end{equation}
The idea is that we can bound the divergence between \emph{sequential} compositions of processes in terms of their components. 
Instead, a metric triangle inequality is ``vertical'', we could write it as follows,
\[
 \begin{tikzcd}
  X \ar[bend left=45]{rr}{a} \ar{rr}{b} \ar[bend right=45]{rr}[swap]{c} && Y
 \end{tikzcd}
 \qquad\le\qquad
 \begin{tikzcd}[row sep=tiny, column sep=small]
  X \ar[bend left=45]{rr}{a} \ar{rr}{b} && Y \\
  & + \\
  X \ar{rr}{b} \ar[bend right=45]{rr}[swap]{c} && Y
 \end{tikzcd}
 \]
and in general it may or may not hold regardless of whether the ``horizontal'' inequality \eqref{enrichmentcondition2} holds.
However, if this metric triangle inequality holds, we can decompose the sequential composition as follows,
\[
 \begin{tikzcd}
    X \ar[bend left]{r}{f} \ar[bend right]{r}[swap]{f'}
    & Y \ar[bend left]{r}{g} \ar[bend right]{r}[swap]{g'}
    & Z
   \end{tikzcd}
   \qquad \le \qquad
    \begin{tikzcd}[row sep=small]
    X \ar[bend left]{r}{f} \ar[bend right]{r}[swap]{f'}
    & Y \ar[bend left]{r}{g}
    & Z \\
    & + \\
    X \ar[bend right]{r}[swap]{f'}
    & Y \ar[bend left]{r}{g} \ar[bend right]{r}[swap]{g'}
    & Z
   \end{tikzcd}
\]
and if we have an inequality for both terms on the right-hand side (the data processing and quasi-convexity conditions, intuitively),
\[
 \begin{tikzcd}
    X \ar[bend left]{r}{f} \ar[bend right]{r}[swap]{f'}
    & Y \ar[bend left]{r}{g}
    & Z 
   \end{tikzcd}
   \quad \le \quad
   \begin{tikzcd}
    X \ar[bend left]{r}{f} \ar[bend right]{r}[swap]{f'}
    & Y
   \end{tikzcd}
\]
and
\[
    \begin{tikzcd}
    X \ar[bend right]{r}[swap]{f'}
    & Y \ar[bend left]{r}{g} \ar[bend right]{r}[swap]{g'}
    & Z
   \end{tikzcd}
   \quad\le\quad
   \begin{tikzcd}
    Y \ar[bend left]{r}{g} \ar[bend right]{r}[swap]{g'}
    & Z ,
   \end{tikzcd}
\]
we can effectively obtain the sequential inequality \eqref{enrichmentcondition2}.

The tensor product condition \eqref{tensortriangle} is again a sort of triangle inequality, but in yet again a different ``direction'', namely in the direction of parallel processing. It says that we can bound the divergence between \emph{parallel} sets of independent processes in terms of their components.
Again, in general this is independent of whether the divergences satisfy a metric triangle inequality or not.

\subsection{Characterization in terms of joints and marginals}\label{divjm}

We have an equivalent characterization of Markov categories with divergences, which focuses on \emph{joint} and \emph{marginal} morphisms and distributions, rather than sequential composition. 
First of all, recall that given finitely supported probability distributions $p$ on $X$ and $q$ on $Y$, a \emph{joint distribution} of $p$ and $q$ (or of $X$ and $Y$, if we see them as random variables) is a probability distribution $r$ on $X\times Y$ such that 
\[
 \sum_{y\in Y} r(x,y) = p(x) \qquad\mbox{and}\qquad \sum_{x\in X} r(x,y) = q(y) .
\]
We call $p$ and $q$ the \emph{marginals} of $p$.
In a generic Markov category, given sources $p$ on $X$ and $q$ on $Y$, a \emph{joint source} of $p$ and $q$ is a source $r$ on $X\otimes Y$, 
\[
 \tikzfig{entropy-jm}
\] 
such that the following holds,
\[
 \tikzfig{entropy-jm2}
\]
in formulas, $(\id_X\circ\discard_Y)\circ r = p$ and $(\discard_X\circ\id_Y)=q$. 
We call $p$ and $q$ the \emph{marginal sources} of $r$, and denote them by $p=r_X$ and $q=r_Y$.
Notice that the ``discard'' maps correspond exactly to the sums in the finite probability case: marginalizing, or integrating over $X$, is encoded by ``discarding'' our information about $X$.

More generally, if $p$ and $q$ depend on an additional parameter $A$, a \emph{joint morphism} of $p:A\to X$ and $q:A\to X$ is a morphism $r:A\to X\otimes Y$,
\[
 \tikzfig{entropy-jmm}
\] 
such that the following holds.
\[
 \tikzfig{entropy-jmm2}
\]
(In formulas we have again that $(\id_X\circ\discard_Y)\circ r = p$ and $(\discard_X\circ\id_Y)=q$.)
We call $p$ and $q$ the \emph{marginal morphisms} of $r$, and again write $p=r_X$ and $q=r_Y$.
This corresponds, for finite probability distributions, to the conditions
\[
 \sum_{y\in Y} r(x,y|a) = p(x|a) \qquad\mbox{and}\qquad \sum_{x\in X} r(x,y|a) = q(y|a) .
\]

Recall also that from a finitely supported probability measure $p$ on $X$ and a stochastic matrix $f:X\to Y$ we can form the \emph{joint probability} $fp$ on $X\times Y$ by 
\begin{equation}\label{jointformula}
 fp\,(x,y) \coloneqq p(x) \, f(y|x) .
\end{equation}
(Note that other authors write this differently, for example $f\circ p$, while for us $f\circ p$ denotes the resulting distribution on $Y$.)
In a generic Markov category $\cat{C}$, given a source $p$ on $X$ and a morphism $f:X\to Y$, we can form the \emph{joint source} $fp$ on $X\otimes Y$ as follows.
\[ 
 \tikzfig{entropy-pjoint}
\]
(In formulas, $fp = (\id_X\otimes f)\circ\cop_X\circ p$.)
The copy map is used, analogously to how the (same) value $x$ appears twice in equation \ref{jointformula}.
The marginals of this joint source, just like in the finite probability case, are $(fp)_X=p$ and $(fp)_Y=f\circ p$:
\[
 \tikzfig{entropy-pjointm} 
\] 

More generally, if $p$ and $f$ depend on an additional parameter $A$, we can form the \emph{joint morphism} $fp:A\to X\otimes Y$ as follows.
\begin{equation}\label{joint}
 \tikzfig{entropy-joint}
\end{equation}
(In formulas, $fp=(\id_X\otimes f)\circ(\cop_X\otimes\id_A)\otimes(p\otimes\id_A)\otimes\cop_A$.)
This corresponds, for finite probability measures, to the formula
\[
 fp\,(x,y|a) \coloneqq p(x|a) \, f(y|x,a) .
\]
As in this equation both $x$ and $a$ appear twice, in \eqref{joint} the copy maps of both $X$ and $A$ are used.
Once again, it is easy to check that the marginals of this joint morphism are $(fp)_X=p$ and $(fp)_Y=f\circ p$.

We are now ready for the equivalent characterization of Markov categories with divergences.

\begin{theorem}\label{equivenrich}
 Let $\cat{C}$ be a Markov category equipped with a divergence $D$ on each hom-set $\cat{C}(X,Y)$. 
 Then the conditions of \Cref{defdivmarkov}, together, are equivalent to the following conditions, together:
 \begin{enumerate}
  \item\label{dataprocmarg} For any $f,f':X\to Y\otimes Z$, 
  we have that if we take the marginals on $Z$,
  \[
   D(f_Z\et f'_Z) \le D(f\et f') ;
  \]
  \item\label{pastejoints} Given $f,f':X\to Y$ and $g,g':X\otimes Y\to Z$,
  the following inequality holds for the joint morphisms,
  \[
   D(gf\et g'f') \le D(f\et f') + D(g\et g') . 
  \]
  \item\label{prodmarg} Given $f,f':X\to Y$ and any object $A$, we have that 
  \[
   D(f\otimes\discard_A\et f'\otimes\discard_A) \le D(f\et f') .
  \]
 \end{enumerate}
\end{theorem}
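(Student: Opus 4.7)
The plan is to prove the equivalence by establishing each direction separately, using the construction of joint morphisms from \Cref{divjm}.

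For the forward direction, assume the sequential composition and tensor product inequalities of \Cref{defdivmarkov}. Condition \ref{dataprocmarg} follows immediately by writing $f_Z = (\discard_Y \otimes \id_Z) \circ f$ and applying \eqref{enrichmentcondition} with the same $\discard_Y \otimes \id_Z$ on both sides, so that the $D(\discard_Y \otimes \id_Z, \discard_Y \otimes \id_Z) = 0$ term drops out. Condition \ref{prodmarg} is the tensor inequality \eqref{tensortriangle} specialised to $h = h' = \discard_A$, since $D(\discard_A, \discard_A) = 0$. Condition \ref{pastejoints} is proved by writing the joint morphism $gf$ as the explicit composite of $f, g, \cop$, identities, and tensor products from equation \eqref{joint}, then applying \eqref{enrichmentcondition} and \eqref{tensortriangle} at each stage (every copy or identity contributes a zero term).

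For the reverse direction, assume \ref{dataprocmarg}, \ref{pastejoints}, and \ref{prodmarg}. For the sequential inequality \eqref{enrichmentcondition} with $f, f': X \to Y$ and $g, g': Y \to Z$, I would extend $g$ to $\hat g: X \otimes Y \to Z$ by ignoring the $X$ input, namely $\hat g = g \circ (\discard_X \otimes \id_Y)$. A direct computation shows that the joint morphism $\hat g f: X \to Y \otimes Z$ (in the sense of \Cref{divjm}) has $Z$-marginal equal to $g \circ f$. Then \ref{dataprocmarg} gives $D(g \circ f, g' \circ f') \leq D(\hat g f, \hat g' f')$; \ref{pastejoints} gives $D(\hat g f, \hat g' f') \leq D(f, f') + D(\hat g, \hat g')$; and \ref{prodmarg}, applied after using the braiding to move $\discard_X$ to the right-hand side, gives $D(\hat g, \hat g') \leq D(g, g')$. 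Chaining these inequalities yields \eqref{enrichmentcondition}. For the tensor inequality \eqref{tensortriangle} with $f, f': X \to Y$ and $h, h': A \to B$, I would identify $f \otimes h$ with the joint morphism of $f \otimes \discard_A: X \otimes A \to Y$ and a suitable extension $\tilde h: X \otimes A \otimes Y \to B$ of $h$ that discards both $X$ and $Y$; applying \ref{pastejoints} and then \ref{prodmarg} (twice) to each factor produces the bound $D(f, f') + D(h, h')$.

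The main technical obstacle will arise in the reverse direction, in applying \ref{prodmarg} to situations where the $\discard$ sits on a different side from its literal position in the statement. This is a coherence matter within the symmetric monoidal structure of $\cat{C}$, and should reduce to the naturality of the braiding and unitors (or, equivalently, to noting that \ref{prodmarg} is symmetric in the position of $A$ up to the monoidal isomorphisms). Once this is taken care of, the rest of the equivalence is routine diagram chasing.
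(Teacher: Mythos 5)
Your proof follows essentially the same route as the paper's: the same immediate specializations for conditions (i) and (iii), the same sequential decomposition of the joint morphism $gf$ into tensor products of $f$, $g$, copies and identities for (ii), and in the reverse direction the same constructions --- your $\hat g = g\circ(\discard_X\otimes\id_Y)$ is the paper's $\tilde g = g\otimes\discard_X$ up to the braiding, and your presentation of $f\otimes h$ as a joint of $f\otimes\discard_A$ with an extension of $h$ discarding $X$ and $Y$ is exactly the paper's $\tilde f$, $\tilde g$. The coherence issue you flag about which side the discard map sits on is present in the paper's own argument and is handled there just as informally, so your proposal is correct and matches the paper's proof.
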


Here is how to interpret the third condition of \Cref{equivenrich}. Recall that, by using the discard maps, we can treat $f,f':X\to Y$ equivalently as channels $X\otimes A\to Y$ which don't really depend on $A$. The condition says that the divergence $D(f\et f')$ does not increase on whether we consider $f,f'$ as channels $X\otimes A\to Y$ instead of $X\to Y$. In particular, the divergence between constant functions cannot be more than the divergence between their constant values.
(If our divergences are in the form \eqref{Dtocat}, this is automatically true, more on that later.)

\begin{proof}[Proof of \Cref{equivenrich}.]
 First, suppose that the conditions \ref{dataprocmarg}--\ref{prodmarg} are satisfied. 
 To prove \eqref{enrichmentcondition}, consider $f,f':X\to Y$ and $g,g':Y\to Z$. Form the joint channel $\tilde{g}f$ as follows,
 \[
  \tikzfig{entropy-joint2}
 \]
 in formulas $\tilde{g}=g\otimes\discard_X$,
 and form analogously $\tilde{g}'f'$. 
 Notice that the marginal $(\tilde{g}f)_Z$ is exactly $g\circ f$, and the same is true for the primed letters.
 Now we can use conditions \ref{dataprocmarg}, \ref{pastejoints}, and \ref{prodmarg}, in order, which gives us that 
 \begin{align*}
  D(g\circ f\et g\circ f') &= D\big((\tilde{g}f)_Z\et (\tilde{g}'f')_Z\big)\\
  &\le D(\tilde{g}f\et \tilde{g}'f') \\
  &\le D(f\et f') + D(\tilde{g}\et \tilde{g}') \\
  &= D(f\et f') + D(g\otimes\discard_X\et g'\otimes\discard_X) \\
  &\le D(f\et f') + D(g\et g') ,
 \end{align*}
 i.e.~condition~\eqref{enrichmentcondition}.
 
 To prove \eqref{tensortriangle}, consider channels $f,f':X\to Y$ and $h,h':A\to B$. We can now treat the product $f\otimes h$ as a particular joint morphism of $\tilde{f}:X\otimes A\to Y$ and $\tilde{g}:(X\otimes A)\otimes Y\to Z$ as follows.
 \[ 
  \tikzfig{entropy-product-joint} 
 \] 
 (In formulas, $\tilde{f}\coloneqq f\otimes\discard_A$ and $\tilde{g}\coloneqq\discard_Y\otimes\discard_X\otimes h$.) 
 Form $\tilde{f}'$ and $\tilde{g}'$ analogously. Now conditions \ref{pastejoints} and \ref{prodmarg} imply that
 \begin{align*}
  D(f\otimes h\et f'\otimes h') &= D(\tilde{g}\tilde{f}\et \tilde{g}'\tilde{f}') \\
  &\le D(\tilde{f}\et \tilde{f}') + D(\tilde{g}\et \tilde{g}') \\
  &\le D(f\et f') + D(h\et h') ,  
 \end{align*}
 i.e.~condition~\eqref{tensortriangle}.
 
 Conversely, suppose that the conditions \eqref{enrichmentcondition} and \eqref{tensortriangle} of \Cref{defdivmarkov} are satisfied. Then \ref{dataprocmarg} follows from \eqref{enrichmentcondition} by taking as $g,g'$ the marginalization on $Z$,
 \[
  \tikzfig{entropy-margz}
 \]
 in formulas, $\discard_X\otimes\id_Z$.
 Similarly, \ref{prodmarg} follows from \eqref{tensortriangle} by taking as $h,h'$ the marginalization on $Y$.
 To prove condition~\ref{pastejoints}, let $f,f':X\to Y$ and $g,g':X\otimes Y\to Z$. 
 We can write the joint morphism $gf$ as the following sequential composition.
 \[ 
  \tikzfig{entropy-joint3} 
 \]
 (In formulas, $a=f\otimes\id_X$, $b=\cop_X\otimes\id_X$, $c=\id_Y\otimes g$.)
 We can do the same for $f'$ and $g'$.
 Iterating \eqref{enrichmentcondition}, and using \eqref{tensortriangle}, we have that
 \begin{align*}
  D(gf\et g'f') &= D(c\circ b\circ a \circ \cop_X\et  c'\circ b'\circ a' \circ \cop_X) \\
  &\le D(c\et c') + D(b\et b') + D(a\et a') + 0 \\
  &= \begin{multlined}[t]
      D(f\otimes\id_X\et  f'\otimes\id_X) \\
       + D(\cop_X\otimes\id_X\et  \cop_X\otimes\id_X) \\
        + D(\id_Y\otimes g\et  \id_Y\otimes g')
     \end{multlined} \\
  &\le D(f\et f') + 0 + D(g\et g') ,
 \end{align*}
 which is exactly condition \ref{pastejoints}.
\end{proof}

\begin{corollary}\label{easychar}
 If $\cat{C}$ is $\cat{Stoch}$ or $\cat{FinStoch}$, and the divergences are given by taking the supremum over the inputs, as in \eqref{Dtocat}, conditions \ref{dataprocmarg} and \ref{pastejoints} are reduced to the following simpler form.
 \begin{enumerate}
  \item\label{dataprocmarg2} For any (joint) sources $p,p'$ on $X\otimes Y$ forming the marginals on $X$,
  \[
   D(p_X\et p'_X) \le D(p\et p') ;
  \]
  \item\label{pastejoints2} Given sources $p,p'$ on $X$ and channels $f,f':X\to Y$,
  the following inequality holds for the joint sources,
  \[
   D(fp\et f'p') \le D(p\et p') + \sup_{x\in X} D(f_x\et f'_x) . 
  \]
 \end{enumerate}
 In this context, condition \ref{prodmarg} of \Cref{equivenrich} is automatically satisfied.
\end{corollary}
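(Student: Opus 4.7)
The plan is to verify each of the three assertions by exploiting a single observation: when the divergence between morphisms is defined by the supremum over inputs as in~\eqref{Dtocat}, an inequality at the level of channels $X\to Y$ is equivalent to the family of pointwise inequalities for the sources $\{f_x\}_{x\in X}$. One direction (from source-level to channel-level) is obtained by taking $\sup_x$ of both sides; the other (from channel-level to source-level) is obtained by specialising to $X=I$.

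I would dispatch condition~\ref{prodmarg} first by a direct computation: $(f\otimes\discard_A)_{(x,a)}=f_x$ independently of $a$, so
\[
D(f\otimes\discard_A\et f'\otimes\discard_A)=\sup_{(x,a)}D_Y(f_x\et f'_x)=\sup_x D_Y(f_x\et f'_x)=D(f\et f'),
\]
giving equality and hence automatic satisfaction of~\ref{prodmarg}.

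For the equivalence of~\ref{dataprocmarg} with~\ref{dataprocmarg2}, I would observe that marginalisation commutes with evaluation at an input, i.e.\ $(f_Z)_x=(f_x)_Z$ for any $f\colon X\to Y\otimes Z$, since the marginal is nothing but post-composition with $\discard_Y\otimes\id_Z$. Then~\ref{dataprocmarg2} applied pointwise at each $x$ and supremised over $x$ yields~\ref{dataprocmarg}, while taking $X=I$ in~\ref{dataprocmarg} gives~\ref{dataprocmarg2}.

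The equivalence of~\ref{pastejoints} with~\ref{pastejoints2} is the main step. The crucial identification is that, for $f\colon X\to Y$ and $g\colon X\otimes Y\to Z$, the joint morphism $gf\colon X\to Y\otimes Z$ satisfies at each input $x$ the identity $(gf)_x = g_{(x,\,\cdot\,)}\,f_x$, i.e.\ the source-level joint of $f_x$ (a source on $Y$) with the channel $g_{(x,\,\cdot\,)}\colon Y\to Z$ defined by $y\mapsto g_{(x,y)}$. This is a direct unpacking of the Chapman--Kolmogorov-style formula $gf(y,z\mid x)=f(y\mid x)\,g(z\mid x,y)$ in $\cat{Stoch}$ or $\cat{FinStoch}$. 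Granted this, applying~\ref{pastejoints2} pointwise and then taking $\sup_x$, with the interchange $\sup_x\sup_y=\sup_{(x,y)}$, recovers~\ref{pastejoints}; the converse is again the special case $X=I$. The only point demanding real care is the identification above and the bookkeeping of suprema; I do not expect any substantive obstacle beyond that.
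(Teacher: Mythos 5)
Your proposal is correct and is essentially the paper's (implicit) argument: the corollary is stated without a separate proof precisely because it follows the pattern already used for \Cref{simpler}, namely that a sup-over-inputs inequality follows from the corresponding pointwise inequality for the source-level statements, while the converse is obtained by specialising the input object to $I$. Your pointwise identifications $(f_Z)_x=(f_x)_Z$ and $(gf)_x=g_{(x,\cdot)}f_x$, together with $\sup_x(a_x+b_x)\le\sup_x a_x+\sup_x b_x$ and the direct computation $(f\otimes\discard_A)_{(x,a)}=f_x$ for condition (iii), are exactly the details the paper leaves to the reader.
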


We can interpret the first condition as a data processing inequality, as well as a \emph{monotonicity} condition for the divergence $D$, in the sense that additional data give additional distinctions. In terms of random variables, it reads 
\[
 D(X\et X') \le D(X,Y\et X',Y') .
\]
The second condition, instead, can be interpreted as a \emph{generalized chain rule} for the divergence $D$, which in terms of random variables would read as follows,
\[
 D(X,Y\et X',Y') \le D(X\et X') + \sup_{x\in X} D(Y\et Y'|X=x) .
\]

\subsection{Particular divergences}

Several of the divergences used in information theory, probability, and statistics, are examples of divergences on $\cat{Stoch}$ and $\cat{FinStoch}$ in the sense of \Cref{defdivmarkov}.
Here are some examples, and a nonexample (\Cref{nonexample}). A complete classification of all the divergences on $\cat{Stoch}$ and $\cat{FinStoch}$ is for now still an open question.

\subsubsection{The KL divergence (relative entropy)}

Recall the relative entropy from \Cref{defKL}. 

\begin{definition}
 The \emph{relative entropy} or \emph{KL} divergence on $\cat{FinStoch}$ is defined as follows for each pair of morphisms $f,g:X\to Y$,
 \[
 D(f\et g) \coloneqq \max_{x\in X} D_{KL}\big( f(x)\et g(x) \big) = \max_{x\in X} \sum_{y\in Y} f(y|x) \ln \dfrac{f(y|x)}{g(y|x)},
 \]
 again using \Cref{logzero}.
\end{definition}

As it is well known, this quantity is always positive \cite[Theorem~2.6.3]{cover-thomas}. 

\begin{proposition}\label{enrichedKL}
 The relative entropy is a divergence on $\cat{FinStoch}$. 
\end{proposition}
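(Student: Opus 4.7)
The plan is to invoke \Cref{easychar} and verify the two simpler inequalities \ref{dataprocmarg2} and \ref{pastejoints2}, both of which reduce to the classical chain rule for the KL divergence together with Gibbs' inequality (nonnegativity). Since the divergence on channels is the maximum over inputs of the KL divergence on the corresponding probability distributions, this route is strictly easier than checking \Cref{defdivmarkov} directly.

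The central technical step is the chain rule: for any joint distributions $r, s$ on $X \times Y$ with marginals $r_X, s_X$ and conditionals $r(\cdot|x), s(\cdot|x)$,
\[
 D_{KL}(r \et s) \;=\; D_{KL}(r_X \et s_X) \;+\; \sum_{x \in X} r_X(x)\, D_{KL}\bigl(r(\cdot|x) \et s(\cdot|x)\bigr).
\]
I would prove this by substituting $r(x,y) = r_X(x)\, r(y|x)$ and $s(x,y) = s_X(x)\, s(y|x)$ inside the logarithm, splitting the log of a product into a sum, and rearranging the double sum; \Cref{logzero} absorbs the degenerate cases ($r_X(x) = 0$ kills the whole $x$-slice on both sides; if $s_X(x) = 0$ while $r_X(x) > 0$, both sides are $+\infty$).

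Given the chain rule, condition \ref{dataprocmarg2} is immediate: every summand $r_X(x)\,D_{KL}(r(\cdot|x)\et s(\cdot|x))$ is nonnegative by Gibbs' inequality, so
\[
 D_{KL}(r_X \et s_X) \;\le\; D_{KL}(r \et s).
\]
For condition \ref{pastejoints2}, the joint source $fp$ on $X \otimes Y$ has marginal $p$ on $X$ and conditional $f_x$ on $Y$ given $x$, and similarly for $f'p'$, so the chain rule yields
\[
 D_{KL}(fp \et f'p') \;=\; D_{KL}(p \et p') \;+\; \sum_{x \in X} p(x)\, D_{KL}(f_x \et f'_x) \;\le\; D_{KL}(p \et p') \;+\; \max_{x \in X} D_{KL}(f_x \et f'_x),
\]
using $\sum_x p(x) = 1$. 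This is exactly \ref{pastejoints2}.

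The main obstacle is bookkeeping around the extended-real cases: when $p(x) > 0$ but $p'(x) = 0$, the left-hand side of the chain rule is $+\infty$, so one has to verify the identity interprets sensibly under \Cref{logzero}, and when $p(x) = 0$ the factor $p(x)\,D_{KL}(f_x\et f'_x)$ must be read as $0$ even if $f_x$ is not absolutely continuous with respect to $f'_x$. Once these conventions are checked, no further analysis is needed and the proposition follows.
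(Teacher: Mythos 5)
Your proposal is correct and follows essentially the same route as the paper: invoke \Cref{easychar}, establish the chain rule for relative entropy (which the paper simply cites from Cover--Thomas rather than proving), deduce condition \ref{dataprocmarg2} from nonnegativity of the conditional term, and deduce condition \ref{pastejoints2} by bounding the convex combination $\sum_x p(x)\,D_{KL}(f_x\et f'_x)$ by its maximum. The only difference is that you sketch a proof of the chain rule and attend more explicitly to the extended-real edge cases, which the paper leaves implicit.
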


We prove this proposition using the \emph{chain rule for relative entropy} \cite[Theorem~2.5.3]{cover-thomas}: given joint distributions $p$ and $p'$ on $X\times Y$, we have that 
 \begin{equation}\label{chainruleKL}
  D(p\et p') = D(p_X\et p'_X) + \sum_{x\in X} p_X(x)\,D(p_{Y|x}\et p'_{Y|x}) ,
 \end{equation}
 where $p_X$ and $p'_X$ denote the marginal distributions on $X$, and $p_{Y|x}$ and $p'_{Y|x}$ denote the conditional distributions on $Y$ depending on $x$.
 The second term on the right is sometimes called \emph{conditional relative entropy} (see again \cite[Section~2.5]{cover-thomas})

\begin{proof}[Proof of \Cref{enrichedKL}]
 We can use the convenient characterization of \Cref{easychar}.  
 To prove condition~\ref{dataprocmarg2}, the chain rule~\eqref{chainruleKL} implies immediately that 
 \[
  D(p_X\et p'_X) \le D(p\et p') .
 \]
 
 To prove condition~\ref{pastejoints2}, given $p,p'\in PX$ and channels $f,f':X\to Y$, apply the chain rule \eqref{chainruleKL} for the joints $fp$ and $f'p'$, obtaining
 \[
  D(fp\et f'p') = D(p\et p') + \sum_{x\in X} p(x)\,D(f_x\et f'_x) .
 \]
 The last term is a convex combination indexed by $x$, and is hence bounded by its largest term,
 \[
  D(fp\et f'p') \le D(p\et p') + \sup_{x\in X} D(f_x\et f'_x) . \qedhere
 \] 
\end{proof}

\begin{remark}
The inequality \eqref{simplertens}, for relative entropy, is an equality: given finite probability measures $p$ and $p'$ on $X$ and $q$ and $q'$ on $A$, we have that 
\[
 D(p\otimes q\et p'\otimes q') = D(p\et p') + D(q\et q') ,
\]
as an easy calculation shows.
\end{remark}

Let's now extend these results to the infinite case ($\cat{Stoch}$). 
Let $X$ be a measurable space with $\sigma$-algebra $\Sigma_X$. Given probability measures $p$ and $q$ on $X$, we say that $p$ is \emph{absolutely continuous} with respect to $q$, and we write $p\ll q$, if whenever for a measurable set $S\in \Sigma_X$, $q(S)=0$, then also $p(S)=0$.
The Radon-Nikodym theorem~\cite[Section~3.2]{bogachev} says that if (and only if) $p\ll q$, we can find a measurable function $g:X\to\R$ such that for every $S\in\Sigma_X$,
\[
 p(S) = \int_S g\, dq .
\]
This function $g$, which is uniquely defined $q$-almost everywhere, is called the \emph{Radon-Nikodym derivative} of $p$ w.r.t.~$q$ and is denoted by $dp/dq$. 
Using this, relative entropy can be extended to general measurable spaces as follows.

\begin{definition}
 Let $p$ and $q$ be probability measures on a measurable space $X$. The \emph{relative entropy} or \emph{Kullback-Leibler divergence} between $p$ and $q$ is given by 
 \[
  D(p\et q) \coloneqq \int_X \ln \left( \dfrac{dp}{dq} \right)\,dp ,
 \]
 if $p\ll q$, and $D(p\et q)=\infty$ otherwise.
\end{definition}

We can as usual suprematize over the inputs, and obtain a divergence between Markov kernels. 
\begin{equation}\label{KLStoch}
 D(f\et g) \coloneqq \sup_{x\in X} D(f_x,g_x) =  \sup_{x\in X} \int_Y \ln \left( \dfrac{df_x}{dg_x}(x,y)\right) f(dy|x) ,
\end{equation}
if $f_x\ll g_x$ for all $x\in X$, and $\infty$ otherwise.

\begin{proposition}\label{KLStochdiv}
 The divergence in \eqref{KLStoch} is a divergence on $\cat{Stoch}$. 
\end{proposition}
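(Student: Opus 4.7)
The plan is to follow the same strategy as in the proof of \Cref{enrichedKL}, invoking \Cref{easychar} to reduce the claim to verifying the two conditions \ref{dataprocmarg2} and \ref{pastejoints2}. The central tool is again a chain rule for relative entropy, now in its measure-theoretic version: if $p, p'$ are probability measures on $X \otimes Y$ with $p \ll p'$ and $p, p'$ both admit kernel disintegrations $k, k' : X \to Y$ (so that $p(A \times B) = \int_A k(B|x)\, p_X(dx)$ and analogously for $p'$), then $p_X \ll p'_X$, $k_x \ll k'_x$ for $p_X$-a.e.~$x$, and
\[
 D(p \et p') = D(p_X \et p'_X) + \int_X D(k_x \et k'_x)\, p_X(dx).
\]
This follows from the factorization $\frac{dp}{dp'}(x,y) = \frac{dp_X}{dp'_X}(x) \cdot \frac{dk_x}{dk'_x}(y)$ of the Radon-Nikodym derivative, combined with Fubini/Tonelli applied to $\ln$ of the product.

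Given this chain rule, condition \ref{dataprocmarg2} is immediate: the conditional relative entropy $\int_X D(k_x \et k'_x)\, p_X(dx)$ is nonnegative, so $D(p_X \et p'_X) \le D(p \et p')$ whenever $p \ll p'$. When $p \not\ll p'$, the right-hand side is $\infty$ and there is nothing to prove. For condition \ref{pastejoints2}, I would observe that the joint source $fp$ on $X \otimes Y$ comes equipped, essentially by definition, with the kernel disintegration $f$ (and similarly $f'p'$ with $f'$). Applying the chain rule to these joints yields
\[
 D(fp \et f'p') = D(p \et p') + \int_X D(f_x \et f'_x)\, p(dx),
\]
and bounding the integral by its essential supremum gives
\[
 D(fp \et f'p') \le D(p \et p') + \sup_{x \in X} D(f_x \et f'_x),
\]
which is \ref{pastejoints2}. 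The absolute-continuity hypotheses in the chain rule correspond exactly to both terms on the right being finite; in the complementary case one checks directly that at least one of those terms is $\infty$, making the inequality trivial.

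The main obstacle is the rigorous justification of the chain rule in the general measurable (rather than standard Borel) setting, since general disintegrations need not exist. The point is that I never need disintegrations to exist abstractly: they are \emph{given} by the kernels $f, f'$ that produce the joints $fp, f'p'$ from the start. So the chain rule only needs to be proved assuming a disintegration is specified, which reduces to the Radon-Nikodym factorization and Fubini as sketched. Verifying that $\frac{dp}{dp'}$ indeed factors this way, and tracking the various absolute-continuity conditions (in particular, deducing $p_X \ll p'_X$ and $f_x \ll f'_x$ for $p$-a.e.~$x$ from $fp \ll f'p'$), is the bulk of the technical work but is standard measure theory.
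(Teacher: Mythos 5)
Your treatment of condition~\ref{pastejoints2} is essentially the paper's: the paper proves exactly the Radon--Nikodym factorization you describe (\Cref{jointdens}) and the resulting chain rule $D(fp\et f'p') = D(p\et p') + \int_X D(f_x\et f'_x)\,p(dx)$ (\Cref{KLcontcr}), then bounds the integral by the supremum over $x$. Your observation that no abstract disintegration theorem is needed there, because the disintegration is supplied by the kernels $f,f'$ that build the joints, is precisely why that part of the argument works in all of $\cat{Stoch}$. (A minor imprecision: absolute continuity does not make the terms on the right finite, only non-trivially defined; but your case analysis for the infinite case is sound.)

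The genuine gap is in condition~\ref{dataprocmarg2}. There the joint sources $p,p'$ on $X\otimes Y$ are \emph{arbitrary} probability measures on the product measurable space; they are not handed to you in the form $fp_X$ for a kernel $f$, and on general (non--standard-Borel) measurable spaces such a disintegration need not exist. Your own escape hatch --- ``the disintegrations are given by the kernels that produce the joints'' --- applies only to condition~\ref{pastejoints2}, so your derivation of $D(p_X\et p'_X)\le D(p\et p')$ from the chain rule silently assumes disintegrations that may fail to exist. The paper sidesteps this by invoking the general data processing inequality for relative entropy under arbitrary measurable maps, applied to the projection $X\times Y\to X$ \cite[Theorem~9]{renyi-div}. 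If you want a self-contained argument, you can either use the supremum-over-countable-partitions characterization of $D$ (cf.\ \Cref{Dsupcountable}), under which marginalization only shrinks the class of partitions being suprematized over, or note that $dp_X/dp'_X$ is the $p'$-conditional expectation of $dp/dp'$ given the first coordinate and apply Jensen's inequality to $t\mapsto t\ln t$. Either route closes the gap; as written, your proof of condition~\ref{dataprocmarg2} does not go through in general $\cat{Stoch}$.
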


In order to prove the proposition, we first need a couple of technical lemmas. 

\begin{lemma}\label{jointdens}
 Let $p$ and $p'$ be measures on a measurable space $X$, and let $f$ and $f'$ be kernels $X\to Y$. Suppose that $p\ll p'$ and that for $p'$-almost all $x\in X$, $f_x\ll f'_x$. Then $fp\ll f'p'$, and for $f'p'$-almost all $x\in X$ and $y\in Y$,
 \begin{equation}\label{jdeq}
  \dfrac{d(fp)}{d(f'p')}(x,y) = \dfrac{dp}{dp'}(x)\,\dfrac{df_x}{df'_x}(x,y) .
 \end{equation}
\end{lemma}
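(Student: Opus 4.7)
The plan is to verify both absolute continuity and the product formula by unpacking the definition of the joint source. Recall that in $\cat{Stoch}$ the joint $fp$ on $X\times Y$ acts on a measurable set $E\subseteq X\times Y$ by
\[
 (fp)(E) \;=\; \int_X f_x(E_x)\,p(dx),
\]
where $E_x = \{y\in Y : (x,y)\in E\}$ denotes the $x$-section; this is just the categorical formula \eqref{joint} read out in $\cat{Stoch}$. The same applies with primes.

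First I would prove $fp \ll f'p'$. Suppose $(f'p')(E)=0$, so $\int_X f'_x(E_x)\,p'(dx)=0$. Then $f'_x(E_x)=0$ for $p'$-almost all $x$; since $p\ll p'$, this also holds for $p$-almost all $x$. Combining with the hypothesis that $f_x\ll f'_x$ for $p'$-almost all $x$ (hence for $p$-almost all $x$), we conclude $f_x(E_x)=0$ for $p$-almost all $x$, so $(fp)(E)=0$.

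Next I would identify the density. Write $g(x,y) := \tfrac{dp}{dp'}(x)\,\tfrac{df_x}{df'_x}(y)$, and compute, for any measurable $E\subseteq X\times Y$,
\begin{align*}
 \int_E g\, d(f'p')
 &= \int_X \left[\int_{E_x} \frac{df_x}{df'_x}(y)\, f'_x(dy)\right] \frac{dp}{dp'}(x)\, p'(dx) \\
 &= \int_X f_x(E_x)\,\frac{dp}{dp'}(x)\,p'(dx) \\
 &= \int_X f_x(E_x)\,p(dx) \;=\; (fp)(E),
\end{align*}
where the inner equality uses the definition of $df_x/df'_x$ and the outer equality uses the definition of $dp/dp'$. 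By uniqueness of Radon–Nikodym derivatives, $g = d(fp)/d(f'p')$ as functions on $X\times Y$, modulo $f'p'$-null sets, which is exactly \eqref{jdeq}.

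The main obstacle is measurability: for the iterated integral above to make sense, one must choose the versions of $dp/dp'$ and of $df_x/df'_x$ so that $(x,y)\mapsto \tfrac{df_x}{df'_x}(y)$ is jointly measurable in $X\times Y$. This is the standard ``measurable Radon–Nikodym theorem for kernels'' — one such version can be produced from a martingale/limit construction or, when the underlying spaces admit disintegrations, from a regular conditional density. I would either cite this standard result or, if working in the Standard Borel setting of \cat{BorelStoch}, argue directly that the derivative can be chosen jointly measurable; once joint measurability is in hand, the calculation above is purely a Fubini rearrangement.
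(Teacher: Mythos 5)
Your proof is correct and follows essentially the same route as the paper's: establish $fp\ll f'p'$ by sectioning a null set and using the two absolute-continuity hypotheses, then verify the candidate density by a Fubini computation and invoke uniqueness of Radon--Nikodym derivatives. Your explicit remark about choosing a jointly measurable version of $(x,y)\mapsto df_x/df'_x(y)$ addresses a point the paper's proof leaves implicit, but it does not change the argument.
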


\begin{proof}[Proof of \Cref{jointdens}.]
 First of all, for each measurable subset $S$ of $X\times Y$, 
 \[
  fp\,(S) = \int_X f_x(S_x) \, p(dx) ,
 \]
 where for each $x\in X$, we denote by $S_x$ the set $\{y\in Y : (x,y)\in S\}$, which is a measurable subset of $Y$.
 Similarly, 
 \[
  f'p'\,(S) = \int_X f'_x(S_x) \, p'(dx) .
 \]
 Suppose now that $f'p'\,(S)=0$. Then in the equation above, the set $T=\{x\in~X:f'_x(S_x) \ne 0\}$ must have $p'$-measure zero, and since $p\ll p'$, this set has also $p$-measure zero. 
 Moreover, for all $x\in X\setminus T$ we have $f'_x(S_x)=0$, and since $f_x\ll f'_x$ for $p'$-almost all $x$, we also must have $f_x(S_x)=0$ for all $x$ in $X\setminus T$ up to a $p'$-null set (hence, $p$-null set). Therefore 
 \[
  fp\,(S) = \int_X f_x(S_x) \, p(dx) = \int_{X\setminus T} f_x(S_x) \, p(dx) = 0 ,
 \]
 which means $fp\ll f'p'$. 
 
 Now, every Radon-Nikodym derivative of $fp$ w.r.t.~$f'p'$ must satisfy
 \[
  fp\,(S) = \int_S \dfrac{d(fp)}{d(f'p')} \,d(fp) 
 \]
 for each measurable subset $S$ of $X\times Y$. 
 We have that 
 \begin{align*}
  \int_S \dfrac{dp}{dp'}(x)\,\dfrac{df_x}{df'_x}(x,y) \, f'p'(dx\,dy) &= \int_X \int_{S_x} \dfrac{dp}{dp'}(x)\,\dfrac{df_x}{df'_x}(x,y) \, f'(dy|x)\,p'(dx) \\
  &= \int_X \dfrac{dp}{dp'}(x) \int_{S_x} \dfrac{df_x}{df'_x}(x,y) \, f'(dy|x)\,p'(dx) \\
  &= \int_X \dfrac{dp}{dp'}(x) \, f_x(S_x)\,p'(dx) \\
  &= \int_X f_x(S_x)\,p(dx) \\
  &= pf\,(S) .
 \end{align*}
 Therefore, by the Radon-Nikodym theorem, the two sides of \eqref{jdeq} are equal $f'p'$-almost everywhere.
\end{proof}

\begin{lemma}\label{KLcontcr}
 Let $p$ and $p'$ be measures on a measurable space $X$, and let $f$ and $f'$ be kernels $X\to Y$. Then we have the following \emph{chain rule}:
 \begin{equation}\label{KLcontcreq}
  D(fp\et f'p') = D(p\et p') + \int_X D(f_x\et f'_x)\, p(dx)
 \end{equation}
\end{lemma}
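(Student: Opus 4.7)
The plan is to reduce \eqref{KLcontcreq} to a direct computation using the density formula from \Cref{jointdens}, after dispatching the cases where either side is infinite. I would proceed as follows.

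First, I would dispose of the degenerate absolute continuity cases. If $p\not\ll p'$, choose a set $A\in\Sigma_X$ with $p'(A)=0$ and $p(A)>0$; then $f'p'(A\times Y)=\int_A f'_x(Y)\,p'(dx)=0$ while $fp(A\times Y)=p(A)>0$, so $fp\not\ll f'p'$ and both sides of \eqref{KLcontcreq} are $\infty$. Similarly, if $p\ll p'$ but the set $B=\{x\in X:f_x\not\ll f'_x\}$ has $p$-measure positive, then $\int_X D(f_x\et f'_x)\,p(dx)=\infty$, so the right-hand side is $\infty$; the same failure of absolute continuity propagates to $fp\not\ll f'p'$ by a measurable selection of witnessing null sets $S_x\subseteq Y$ for $x\in B$, again making the left-hand side $\infty$. (I would only sketch this case, since it is routine measure theory.)

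In the main case, $p\ll p'$ and $f_x\ll f'_x$ for $p'$-almost all $x$, \Cref{jointdens} gives $fp\ll f'p'$ together with
\[
 \dfrac{d(fp)}{d(f'p')}(x,y) \;=\; \dfrac{dp}{dp'}(x)\,\dfrac{df_x}{df'_x}(x,y)
\]
$f'p'$-almost everywhere, hence also $fp$-a.e. Taking logs turns the product into a sum, and I would substitute into the definition of $D(fp\et f'p')$:
\[
 D(fp\et f'p') \;=\; \int_{X\times Y}\!\left[\ln\dfrac{dp}{dp'}(x) + \ln\dfrac{df_x}{df'_x}(x,y)\right] fp(dx\,dy).
\]
Splitting the integral by linearity and unpacking $fp$ as iterated integration (using $fp(dx\,dy)=f(dy|x)\,p(dx)$, which is the very definition of the joint from \eqref{joint} in $\cat{Stoch}$), the first summand integrates to $\int_X \ln(dp/dp')\,dp = D(p\et p')$ after integrating out $y$, and the second summand becomes $\int_X \bigl(\int_Y \ln\frac{df_x}{df'_x}\,f_x(dy)\bigr)\,p(dx) = \int_X D(f_x\et f'_x)\,p(dx)$.

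The main technical obstacle is Fubini-type justification: the integrands are signed (the logarithms can take both signs) and may be $-\infty$ on null sets, so a clean application of Fubini requires either nonnegativity or integrability of the positive and negative parts separately. I would handle this by applying Tonelli to $\ln^+$ and $\ln^-$ individually, observing that if any piece diverges then $D(fp\et f'p')=+\infty$ matches the right-hand side (since $D(p\et p')$ and $D(f_x\et f'_x)$ are themselves $\geq 0$ by the nonnegativity of relative entropy, and the splitting only adds nonnegative quantities). A secondary, minor point is the joint measurability of $(x,y)\mapsto\frac{df_x}{df'_x}(x,y)$, which can be taken as part of the standard convention for disintegrations and is implicit in the statement of \Cref{jointdens}.
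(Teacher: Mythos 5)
Your proof follows essentially the same route as the paper's: apply \Cref{jointdens} to factor the Radon--Nikodym derivative of the joints, take logarithms, and split the iterated integral into $D(p\et p')$ plus the conditional term $\int_X D(f_x\et f'_x)\,p(dx)$. The extra care you take with the non-absolutely-continuous cases and with the Tonelli justification for the signed integrand goes slightly beyond what the paper writes down (which simply assumes absolute continuity and computes), but the core argument is identical.
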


One can call the last term, analogously to the discrete case, the \emph{conditional relative entropy}. 

\begin{proof}[Proof of \Cref{KLcontcr}]
 We can assume $p\ll p'$ and $f_x\ll f'_x$ for all $x$, otherwise condition~\ref{pastejoints2} holds immediately.
 By \Cref{jointdens},
 \begin{align*}
  D(fp\et f'p') &= \int_{X\times Y} \ln\left( \dfrac{d(fp)}{d(f'p')}(x,y) \right)  fp\,(dx\,dy) \\
   &= \int_X \int_Y \ln\left( \dfrac{dp}{dp'}(x)\,\dfrac{df_x}{df'_x}(x,y) \right)  f(dy|x)\, p(dx) \\
   &= \int_X \int_Y \left( \ln\dfrac{dp}{dp'}(x) + \ln\dfrac{df_x}{df'_x}(x,y) \right)  f(dy|x)\, p(dx) \\
   &= \int_X  \ln \left( \dfrac{dp}{dp'}  \right)\,dp + \int_X \int_Y \left( \ln\dfrac{df_x}{df'_x}(x,y) \right) f(dy|x)\, p(dx)\\
   &= D(p\et p') + \int_X D(f_x\et f'_x)\, p(dx) . \qedhere
 \end{align*}
\end{proof}

We can now prove the proposition.

\begin{proof}[Proof of \Cref{KLStochdiv}.]
 As for the discrete case, we can use the characterization of \Cref{easychar}.
 Condition~\ref{dataprocmarg2} holds since a more general data processing inequality holds, for general measurable functions, not just marginalizations~\cite[Theorem~9]{renyi-div}.
 
 To prove condition~\ref{pastejoints2}, we can use \Cref{KLcontcr}. Let $p,p'\in PX$ and kernels $f,f':X\to Y$. Then in \eqref{KLcontcreq}, once again, we can bound the last term by taking the supremum over $x$, so that 
 \[
  D(fp\et f'p') \le D(p\et p') + \sup_x D(f_x\et f'_x) . \qedhere
 \]
\end{proof}

\subsubsection{The Rényi or alpha-divergence enrichments.}

The Rényi divergence is a generalization or deformation of the Kullback-Leibler divergence, and it is related to the Rényi entropy. 
A comprehensive analysis of this divergence, which includes all the results used in this section, can be found in \cite{renyi-div}. 

\begin{definition}
 Let $X$ be a finite set, let $p$ and $q$ be probability distributions on $X$, and let $\alpha\in [0,\infty]$. 
 We define the \emph{Rényi divergence of order $\alpha$} or \emph{$\alpha$-divergence} between $p$ and $q$, $D_\alpha(p\et q)$, as follows. 
 
 First of all, if $p(x)=0$ but $q(x) \ne 0$ for some $x\in X$, we set $D_{\alpha}(p\et q) \coloneqq \infty$ for all $\alpha$. 
 Instead, if $p(x)=0$ whenever $q(x)=0$,
 \begin{itemize}
  \item For $\alpha> 0, \alpha\ne 1$, $D_{\alpha}(p\et q)$ is the quantity
 \begin{equation}
  D_{\alpha}(p\et q) \coloneqq \dfrac{1}{\alpha-1} \, \ln \left( \sum_{x\in X} \dfrac{p(x)^\alpha}{q(x)^{\alpha-1}} \right) ;
 \end{equation}
 \item For $\alpha=1$, we set $D_1(p\et q) \coloneqq D_{KL}(p\et q)$, the relative entropy; 
 \item For $\alpha=0$, we set $D_0(p\et q) \coloneqq \lim_{\alpha\to 0} D_\alpha(p\et q)$;
 \item For $\alpha=+\infty$, we set $D_\infty(p\et q) \coloneqq \lim_{\alpha\to\infty} D_\alpha(p\et q)$.
 \end{itemize}
\end{definition}

Note that indeed, 
\[
\lim_{\alpha\to 1} D_\alpha(p\et q) = D_{KL}(p\et q) ,
\]
see \cite[Section~II.C]{renyi-div} for more on this.

Just as for relative entropy, one can generalize this definition to measurable spaces as follows.
\begin{definition}
 Let $X$ be a measurable space, and let $p$ and $q$ be probability measures on $X$. Given $\alpha> 0, \alpha\ne 1$, the \emph{Rényi divergence of order $\alpha$} or \emph{$\alpha$-divergence} between $p$ and $q$ is the number
 \begin{equation}
  D_{\alpha}(p\et q) \coloneqq \dfrac{1}{\alpha-1} \, \ln \left( \int_X \left( \dfrac{dp}{dq} \right)^{\alpha-1} dp \right) 
 \end{equation}
 if $p\ll q$, and $\infty$ otherwise.
 
 For $\alpha=1$, we set $D_1(p\et q) \coloneqq D_{KL}(p\et q)$, the relative entropy. 
 For $\alpha=0$, we set $D_0(p\et q) \coloneqq \lim_{\alpha\to 0} D_\alpha(p\et q)$.
 For $\alpha=+\infty$, we set $D_\infty(p\et q) \coloneqq \lim_{\alpha\to\infty} D_\alpha(p\et q)$.
\end{definition}
Once again, 
\[
\lim_{\alpha\to 1} D_\alpha(p\et q) = D_{KL}(p\et q) .
\]

\begin{proposition}\label{alphadiv}
 For all $\alpha\in[0,1]$, the Rényi divergence $D_\alpha$ induces a divergence on $\cat{Stoch}$.
\end{proposition}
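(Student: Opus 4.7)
The plan is to apply \Cref{easychar}, which reduces the problem on $\cat{Stoch}$ to checking two conditions: the marginal data processing inequality \ref{dataprocmarg2} and the chain-rule-style composition bound \ref{pastejoints2}. Condition~\ref{prodmarg} of \Cref{equivenrich} is automatic for supremum-type divergences given by \eqref{Dtocat}.

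Condition~\ref{dataprocmarg2} is immediate from the classical data processing inequality for Rényi divergence, valid for all $\alpha\ge 0$ (see \cite[Theorem~9]{renyi-div}), applied to the marginalization kernel $\discard_X\otimes\id_Y$. So the real content lies in condition~\ref{pastejoints2}. Reduce at once to the case $p\ll p'$ and $f_x\ll f'_x$ for $p'$-almost every $x$, since otherwise both sides are trivially $\infty$. By \Cref{jointdens} we have
\[
 \dfrac{d(fp)}{d(f'p')}(x,y) \;=\; \dfrac{dp}{dp'}(x)\,\dfrac{df_x}{df'_x}(x,y) ,
\]
so substituting into the definition of $D_\alpha$ and applying Fubini yields
\[
 \int_{X\times Y} \!\!\left(\dfrac{d(fp)}{d(f'p')}\right)^{\!\alpha-1} d(fp)
 \;=\; \int_X \left(\dfrac{dp}{dp'}\right)^{\!\alpha-1}\!(x)\, e^{(\alpha-1)\,D_\alpha(f_x\et f'_x)}\, p(dx) ,
\]
since $\int_Y (df_x/df'_x)^{\alpha-1}(x,y)\, f(dy|x) = e^{(\alpha-1)D_\alpha(f_x\et f'_x)}$ by definition.

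Now set $M\coloneqq\sup_x D_\alpha(f_x\et f'_x)$. The key observation is that for $\alpha\in(0,1)$ the exponent $\alpha-1$ is negative, so $(\alpha-1)\,D_\alpha(f_x\et f'_x)\ge (\alpha-1) M$, and hence $e^{(\alpha-1)D_\alpha(f_x\et f'_x)}\ge e^{(\alpha-1)M}$. Pulling this bound out of the integral and using the definition of $D_\alpha(p\et p')$ on the remaining factor gives
\[
 \int_{X\times Y} \!\!\left(\dfrac{d(fp)}{d(f'p')}\right)^{\!\alpha-1} d(fp) \;\ge\; e^{(\alpha-1)M}\cdot e^{(\alpha-1)D_\alpha(p\et p')} .
\]
Taking the logarithm and dividing by $\alpha-1<0$ reverses the inequality, producing $D_\alpha(fp\et f'p')\le D_\alpha(p\et p') + M$, which is exactly \ref{pastejoints2}. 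The boundary case $\alpha=1$ is already \Cref{KLStochdiv}, and $\alpha=0$ follows either by taking the limit $\alpha\to 0^+$ in the inequality just obtained, or directly from the explicit formula $D_0(p\et q)=-\ln q(\mathrm{supp}\,p)$.

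The main obstacle, and the reason the proposition is stated only for $\alpha\in[0,1]$, is precisely this sign manipulation: Rényi divergence does \emph{not} satisfy an exact chain rule like the one in \Cref{KLcontcr} for KL, so one cannot simply bound the conditional term by its supremum. Instead one must exponentiate, use $\alpha-1\le 0$ to flip the pointwise inequality, and then divide by $\alpha-1$ to flip it back. For $\alpha>1$ both flips disappear and the same calculation runs, but the paper's scope is limited to the range where $D_\alpha$ is known to behave well in all the relevant estimates, so I would not pursue that extension here.
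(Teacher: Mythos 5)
Your proof is correct and follows essentially the same route as the paper: reduce via \Cref{easychar}, cite the data processing inequality of \cite{renyi-div} for the marginal condition, and use the exponential form of the logarithmic chain rule (your displayed identity is exactly \Cref{chainalpha}, derived from \Cref{jointdens}) together with the sign of $\alpha-1$ to bound the conditional term by its supremum. The only cosmetic difference is that you write out the $\alpha\in(0,1)$ case in full where the paper details $\alpha>1$ and calls the other case ``similar,'' and your handling of the boundary values $\alpha=0,1$ matches the paper's.
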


In order to prove the proposition we make use of the following \emph{logarithmic chain rule} for the $\alpha$-divergence. 

\begin{lemma}\label{chainalpha}
 Let $p$ and $p'$ be probability measures on $PX$, and let $f,f':X\to Y$ be kernels.
 Suppose that $p\ll p'$ and that for $p'$-almost all $x\in X$, $f_x\ll f'_x$. 
 Then for $\alpha\in (0,\infty), \alpha\ne 1$,
 \begin{equation}\label{chainalphaeq}
  D_\alpha(fp\et f'p') = \dfrac{1}{\alpha-1} \, \ln \left( \int_X \left( \dfrac{dp}{dp'}(x) \right)^{\alpha-1} e^{(\alpha-1) D_\alpha(f_x\et f'_x)} \, p(dx) \right) .
 \end{equation} 
 (If one chooses the base of the logarithm to be anything other than $e$, one has to replace $e$ in the equation by the base of the logarithm.)
\end{lemma}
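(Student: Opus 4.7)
The plan is to reduce the statement to a direct computation, using \Cref{jointdens} to rewrite the Radon--Nikodym derivative $d(fp)/d(f'p')$ and then disintegrate the resulting integral over $X\times Y$ as an iterated integral, $p(dx)$ on the outside and $f(dy|x)$ on the inside.

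First I would observe that under the hypotheses $p\ll p'$ and $f_x\ll f'_x$ for $p'$-a.a.\ $x$, \Cref{jointdens} gives $fp\ll f'p'$ together with the pointwise formula
\[
 \frac{d(fp)}{d(f'p')}(x,y) \;=\; \frac{dp}{dp'}(x)\,\frac{df_x}{df'_x}(x,y)
\]
for $f'p'$-almost all $(x,y)$, hence also $fp$-almost all $(x,y)$. Since $(\alpha-1)\neq 0$, raising to the power $\alpha-1$ yields a well-defined nonnegative measurable function $fp$-a.e.

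Next I would substitute into the definition of $D_\alpha$, which reads
\[
 D_\alpha(fp\et f'p') \;=\; \frac{1}{\alpha-1}\,\ln\!\left( \int_{X\times Y} \left(\frac{d(fp)}{d(f'p')}\right)^{\alpha-1} d(fp) \right).
\]
Plugging in the product formula and disintegrating $fp(dx\,dy) = f(dy|x)\,p(dx)$ (which is precisely the definition of the joint $fp$ used in \eqref{jointformula} and its continuous analogue), I would pull the factor $(dp/dp')(x)^{\alpha-1}$ out of the inner integral to obtain
\[
 \int_X \left(\frac{dp}{dp'}(x)\right)^{\alpha-1} \!\!\left( \int_Y \left(\frac{df_x}{df'_x}(x,y)\right)^{\alpha-1} f(dy|x) \right) p(dx).
\]

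Finally I would recognize the inner integral: by the very definition of $D_\alpha(f_x\et f'_x)$ applied to the probability measures $f_x$ and $f'_x$ on $Y$, we have
\[
 \int_Y \left(\frac{df_x}{df'_x}\right)^{\alpha-1} df_x \;=\; e^{(\alpha-1)\,D_\alpha(f_x\et f'_x)}
\]
for $p$-a.a.\ $x$, and substituting this back and taking the logarithm with the factor $1/(\alpha-1)$ in front gives \eqref{chainalphaeq}. The main obstacle I anticipate is bookkeeping around almost-sure equalities, in particular checking that the function $x\mapsto D_\alpha(f_x\et f'_x)$ is measurable (so that the right-hand side makes sense) and that the set of $x$ where $f_x\not\ll f'_x$ is $p$-null so that replacing $f(dy|x)$ by $df_x$ inside the integral is legitimate; both follow from $p\ll p'$ together with the $p'$-a.e.\ absolute continuity assumption, and from the measurability of Radon--Nikodym derivatives jointly in $(x,y)$, which can be justified by a monotone class argument as in the proof of \Cref{jointdens}.
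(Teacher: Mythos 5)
Your proposal is correct and follows essentially the same route as the paper's proof: apply \Cref{jointdens} to factor the Radon--Nikodym derivative, disintegrate the integral over $X\times Y$ as an iterated integral, pull out $(dp/dp')^{\alpha-1}$, and recognize the inner integral as $e^{(\alpha-1)D_\alpha(f_x\et f'_x)}$. The extra attention you give to measurability and null-set bookkeeping is sensible but not spelled out in the paper either.
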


Equivalently, 
\[
 e^{(\alpha-1)\,D_\alpha(fp\et f'p')} = \int_X \left( \dfrac{dp}{dp'}(x) \right)^{\alpha-1} e^{(\alpha-1)\, D_\alpha(f_x\et f'_x)} \, p(dx) ,
\]
which means that the \emph{exponential} of the divergence between the joints is a weighted combination of the exponentials of the divergences between the channels.

\begin{proof}[Proof of \Cref{chainalpha}]
 Using \Cref{jointdens},
 \begin{align*}
  D_\alpha(fp\et f'p') &= \dfrac{1}{\alpha-1} \, \ln \left( \int_{X\times Y} \left( \dfrac{d(fp)}{d(f'p')}(x,y) \right)^{\alpha-1} fp\,(dx\,dy) \right) \\
  &= \dfrac{1}{\alpha-1} \, \ln \left( \int_X \int_Y \left( \dfrac{dp}{dp'}(x) \, \dfrac{df_x}{df'_x}(x,y) \right)^{\alpha-1} f(dy|x)\,p(dx) \right) \\
  &= \dfrac{1}{\alpha-1} \, \ln \left( \int_X \left( \dfrac{dp}{dp'}(x) \right)^{\alpha-1} \left( \int_Y \left(\dfrac{df_x}{df'_x} \right)^{\alpha-1} df_x \right) p(dx) \right) \\
  &= \dfrac{1}{\alpha-1} \, \ln \left( \int_X \left( \dfrac{dp}{dp'}(x) \right)^{\alpha-1} e^{(\alpha-1) D_\alpha(f_x\et f'_x)} \, p(dx) \right) . \qedhere
 \end{align*}
\end{proof}

\begin{proof}[Proof of \Cref{alphadiv}.]
 We can once again use the characterization of \Cref{easychar}.
 Just like for the relative entropy case, condition~\ref{dataprocmarg2} holds since a more general data processing inequality holds, for general measurable functions, not just marginalizations~\cite[Theorems~1 and~9]{renyi-div}.
 
 To prove condition~\ref{pastejoints2}, let $p$ and $p'$ be probability measure on $X$, and $f,f':X\to Y$ be kernels. 
 Just as for relative entropy, we can assume $p\ll p'$ and $f_x\ll f'_x$ for all $x$, otherwise condition~\ref{pastejoints2} holds immediately.
 
 Let first $\alpha>1$. By \Cref{chainalpha},
 \begin{align*}
  D_\alpha(fp\et f'p') &= \dfrac{1}{\alpha-1} \, \ln \left( \int_X \left( \dfrac{dp}{dp'}(x) \right)^{\alpha-1} e^{(\alpha-1) D_\alpha(f_x\et f'_x)} \, p(dx) \right) \\
  &\le \dfrac{1}{\alpha-1} \, \ln \left( \int_X \left( \dfrac{dp}{dp'}(x) \right)^{\alpha-1} \sup_{x'\in X} e^{(\alpha-1) D_\alpha(f_x\et f'_x)} \,p(dx) \right) \\
  &= \dfrac{1}{\alpha-1} \, \ln \left( \int_X \left( \dfrac{dp}{dp'}(x) \right)^{\alpha-1} p(dx) \cdot \sup_{x'\in X} e^{(\alpha-1) D_\alpha(f_x\et f'_x)} \right) \\
  &= \begin{multlined}[t]
      \dfrac{1}{\alpha-1} \, \ln \left( \int_X \left( \dfrac{dp}{dp'}(x) \right)^{\alpha-1} p(dx) \right) \\
       + \dfrac{1}{\alpha-1} \, \ln \left( \sup_{x\in X} e^{(\alpha-1) D_\alpha(f_x\et f'_x)} \right)
     \end{multlined} \\
  &= D_\alpha(p\et p') + \sup_{x\in X} D_\alpha(f_{x}\et f'_{x}) .
 \end{align*}
 
 For $\alpha<1$ the proof is similar, except that inside the logarithm one has to take the \emph{infimum} over $x$, which becomes a supremum after dividing by $\alpha-1$. 
 
 For $\alpha=0$ and $\alpha=\infty$, the argument follows by continuity.
 
 For $\alpha=1$ we have the KL divergence, for which this result has already been proven (\Cref{KLStochdiv}), or one can argue again by continuity. 
\end{proof}

\subsubsection{The total variation distance}

\begin{definition}
 Let $p$ and $q$ be probability distributions on a finite set $X$. The \emph{total variation distance} between $p$ and $q$ is given by 
 \[
  d_T(p,q) \coloneqq \dfrac{1}{2} \sum_{x\in X} |p(x) - q(x)| .
 \]
\end{definition}

Equivalently, in terms of measures of sets, we can write
\[
 d_T(p,q) = \sup_{A\subseteq X} |p(A) - q(A)| .
\]

Once again, for kernels $f,g:X\to Y$, we can define the distance as the maximum over the inputs,
 \[
  d_T(f,g) \coloneqq \dfrac{1}{2} \max_{x\in X} \sum_{y\in Y} \big| f(y|x) - g(y|x) \big| .
 \]

This distance generalizes to the infinite case as follows.

\begin{definition}
 Let $(X,\Sigma_X)$ be a measurable space, and let $p$ and $q$ be probability measures on $X$. 
 The \emph{total variation distance} between $p$ and $q$ is given by 
 \[
  d_T(p,q) \coloneqq \sup_{S\in\Sigma_X} \big| p(S) - q(S) \big| .
 \]
\end{definition}

An equivalent characterization of this distance is as follows,
\begin{equation}\label{totvarfunc}
 d_T(p,q) = \sup_{f:X\to [0,1]} \left| \int_X f\, dp - \int_X f\,dq \right| ,
\end{equation}
where the supremum is taken over all measurable functions $f:X\to [0,1]$. Notice that these functions include the kernels in the form $k(S|x)$, from $X$ to another space $Z$, for every fixed $S$, since such kernels are measurable in the variable $x$, and take values in $[0,1]$.

For kernels $f,g:X\to Y$, we can define the distance as the maximum over the inputs,
 \[
  d_T(f,g) \coloneqq \sup_{x\in X} \sup_{S\in\Sigma_Y} \big| f(S|x) - g(S|x) \big| .
 \]

\begin{proposition}
 The total variation distance gives a divergence on $\cat{Stoch}$.
\end{proposition}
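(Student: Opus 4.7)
The plan is to apply \Cref{easychar}, so I only need to verify the two simplified conditions (i) $d_T(p_X \et p'_X) \le d_T(p \et p')$ for joint measures $p, p'$ on $X \otimes Y$, and (ii) $d_T(fp \et f'p') \le d_T(p \et p') + \sup_x d_T(f_x \et f'_x)$ for sources on $X$ and kernels $X \to Y$.

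Condition~\ref{dataprocmarg2} is immediate from the definition. For any measurable $A \subseteq X$, the marginals satisfy $p_X(A) = p(A \times Y)$ and $p'_X(A) = p'(A \times Y)$, so $|p_X(A) - p'_X(A)| \le d_T(p \et p')$; taking the supremum over $A \in \Sigma_X$ gives the inequality.

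For condition~\ref{pastejoints2}, the key tool is the functional characterization \eqref{totvarfunc}. For any measurable $S \subseteq X \times Y$, I would write $S_x \coloneqq \{y \in Y : (x,y) \in S\}$ and recall the standard fact (a monotone class argument on product-measurable sets) that $x \mapsto f(S_x \mid x)$ is a measurable $[0,1]$-valued function; moreover by Fubini-type reasoning for kernels, $fp(S) = \int_X f(S_x \mid x) \, p(dx)$, and similarly for $f'p'$. Then I split via the triangle inequality:
\begin{align*}
\bigl| fp(S) - f'p'(S) \bigr|
&\le \Bigl| \int_X f(S_x \mid x) \, p(dx) - \int_X f(S_x \mid x) \, p'(dx) \Bigr| \\
&\quad + \Bigl| \int_X f(S_x \mid x) \, p'(dx) - \int_X f'(S_x \mid x) \, p'(dx) \Bigr|.
\end{align*}
The first term is bounded by $d_T(p \et p')$ via \eqref{totvarfunc}, since $x \mapsto f(S_x \mid x)$ is a measurable function into $[0,1]$. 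The second term is bounded by $\int_X |f(S_x \mid x) - f'(S_x \mid x)| \, p'(dx) \le \sup_{x \in X} d_T(f_x \et f'_x)$, because for each fixed $x$ the integrand is at most $d_T(f_x \et f'_x)$. Taking the supremum over $S \in \Sigma_{X \otimes Y}$ yields the required inequality.

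The only mildly delicate point is confirming measurability of $x \mapsto f(S_x \mid x)$ for arbitrary $S$ in the product $\sigma$-algebra and the identity $fp(S) = \int f(S_x \mid x) \, p(dx)$; both follow by the standard monotone-class argument, starting from measurable rectangles where everything is evident, and extending to the generated $\sigma$-algebra. Once that is in hand, the rest is a direct triangle-inequality computation and I do not anticipate any further obstacle.
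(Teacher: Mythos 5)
Your proposal is correct and follows essentially the same route as the paper: both verify the two simplified conditions of \Cref{easychar}, prove the marginal inequality directly from the definition, and handle the joint inequality by the same triangle-inequality split $|fp(S)-f'p(S)|+|f'p(S)-f'p'(S)|$, bounding the first term via the functional characterization \eqref{totvarfunc} and the second by integrating the pointwise bound $d_T(f_x\et f'_x)$. Your explicit attention to the measurability of $x\mapsto f(S_x\mid x)$ is a detail the paper leaves implicit, but it does not change the argument.
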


\begin{proof}
 As we did for the KL and Rényi divergences, we use the characterization of \Cref{easychar}.
 First of all, given a measurable subset $S\subseteq X\times Y$, denote by $S_x$ the measurable subset of $Y$ given by 
 \[
    S_x \coloneqq \{y\in Y : (x,y) \in S \} ,
 \]
 and define $S_y\subseteq X$ analogously, so that for probability measures $p$ on $X$ and $q$ on $Y$, the product measures gives
 \[
  p\otimes q \, (S) = \int_X q(S_x)\,p(dx) = \int_Y p(S_y)\,q(dy) .
 \] 
 
 Now, in order to prove condition~\ref{dataprocmarg2}, let $p$ and $p'$ be probability measures on $X\times Y$. 
 We have 
 \begin{align*}
  d_T(p_X,p'_X) &=  \sup_{T\in\Sigma_{X}} \big| p_X(T) - p'_X(T) \big| \\
   &= \sup_{T\in\Sigma_{X}} \big| p(T\times Y) - p'(T\times Y) \big| \\
   &\le \sup_{S\in\Sigma_{X\times Y}}  \big| p(S) - p'(S) \big| \\
   &= d_T(p,p') .
 \end{align*}

 To prove condition~\ref{pastejoints2}, let $p$ and $p'$ be probability measure on $X$, and $f,f':~X\to Y$ be kernels.
 Recall that for measurable $S\subseteq X\times Y$,
 \[
    S_x \coloneqq \{y\in Y : (x,y) \in S \} ,
 \]
 and $S_y\subseteq X$ is defined analogously. 
 Then 
 \begin{align*}
  d_T(fp,f'p') &= \sup_{S\in\Sigma_{X\times Y}}  \big| pf\,(S) - p'f'\,(S) \big| \\
   &\le \sup_{S\in\Sigma_{X\times Y}}  \big| pf\,(S) - p'f\,(S) \big| + \sup_{S\in\Sigma_{X\times Y}} \big| p'f\,(S) - p'f'\,(S) \big| \\ 
   &= \begin{multlined}[t]
       \sup_{S\in\Sigma_{X\times Y}} \left| \int_X f(S_x|x) \, p(dx) - \int_X f(S_x|x) \, p'(dx) \right| \\
        + \sup_{S\in\Sigma_{X\times Y}} \left| \int_X f(S_x|x) \, p'(dx) - \int_X f'(S_x|x) \, p'(dx) \right| 
      \end{multlined} \\
   &\le \begin{multlined}[t]
       \sup_{g:X\to [0,1]} \left| \int_X g(x) \, p(dx) - \int_X g(x) \, p'(dx) \right| \\
        + \sup_{S\in\Sigma_{X\times Y}} \int_X \big| f(S_x|x) - f'(S_x|x) \big| p'(dx) 
      \end{multlined} \\
   &\le \begin{multlined}[t]
       \sup_{g:X\to [0,1]} \left| \int_X g(x) \, p(dx) - \int_X g(x) \, p'(dx) \right| \\
        + \int_X \sup_{T\in\Sigma_{Y}} \big| f(T|x) - f'(T|x) \big| \, p'(dx) 
      \end{multlined} \\
   &\le \begin{multlined}[t]
       \sup_{g:X\to [0,1]} \left| \int_X g(x) \, p(dx) - \int_X g(x) \, p'(dx) \right| \\
        + \sup_{x\in X} \sup_{T\in\Sigma_{Y}} \big| f(T|x) - f'(T|x) \big| 
      \end{multlined} \\
   &= d_T(p,p') + \sup_{x\in X} d_T(f_x,f'_x) . \qedhere
 \end{align*}
\end{proof}

\subsubsection{Nonexample: q-divergences}\label{nonexample}

Let's now see an example of divergence that does not give an enrichment.
Given probability measures $p$ and $p'$ on a finite set $X$, the \emph{Tsallis divergence} of order $q$, or more simply \emph{$q$-divergence}, is given by 
\[
 D_q(p\et p') \coloneqq - \sum_x p(x) \ln_q \dfrac{p'(x)}{p(x)} ,
\]
where the \emph{$q$-logarithm} is given by 
\[
 \ln_q x \coloneqq \dfrac{x^{1-q}-1}{1-q} 
\]
for $q\neq 1$,
and by the traditional natural logarithm for $q=1$.
Just as the Rényi divergence, for $q=1$ the Tsallis divergence is equal to the KL divergence. 

For $q\ne 1$, in general, the Tsallis divergence does not equip $\cat{FinStoch}$ with a divergence in the sense of \Cref{defdivmarkov}. 
For example, for $q=2$, consider the following distributions and kernels.
\[
 p = \begin{bmatrix}
      1/2      \\
      1/2
     \end{bmatrix}
     \qquad
 p' = \begin{bmatrix}
      3/4     \\
      1/4
     \end{bmatrix}
     \qquad
 f = \begin{bmatrix}
      1/4   & 9/10     \\
      3/4   & 1/10
     \end{bmatrix}
     \qquad
 f' = \begin{bmatrix}
      1/20  & 1/2    \\
      19/20 & 1/2
     \end{bmatrix}
\]
As one can readily check,
\[
 f\circ p = \begin{bmatrix}
      23/40    \\
      17/40
     \end{bmatrix}
     \qquad
 f'\circ p' = \begin{bmatrix}
      13/80    \\
      67/80
     \end{bmatrix}
\]
and for $q=2$ we have,
\[
 D_q(f\et f') = 1/3 
 \qquad
 D_q(g\et g') = 16/19 ,
\]
so
\[
 D_q(f\et f') + D_q(g\et g') = 67/57 \approx 1.175 ,
\]
but
\[
 D_q(f\circ p \et g\circ p) = 1089/871 \approx 1.250 .
\]

Similar counterexamples can more generally be found for other \emph{$f$-divergences} (see \cite[Chapter~4]{csizar} for the definitions). 
This is related to the well known fact that for Tsallis entropy and related quantities, additivity and subadditivity in general fail.

\subsection{Divergence as a limit over countable partitions}\label{partitions}

It is well known that divergences in the uncountable case can be obtained by taking the supremum over all countable partitions \cite[Theorems~2 and~10]{renyi-div}.
As we show here, this fact can be interpreted as an \emph{enriched} universal property.

\begin{definition}
 Let $X$ be a measurable space. A \emph{measurable partition} of $X$ is a family $S=\{S_i\}_{i\in I}$ of pairwise disjoint, measurable subsets $S_i\subseteq X$, such that $\coprod_i S_i = X$. The sets $S_i$ are called the \emph{cells} of the partition. 
 
 A partition is called \emph{countable} (resp.\ \emph{finite}) if it has countably (resp.\ finitely) many cells. 
\end{definition}

Note that in our definition cells are allowed to be empty (other authors may have other conventions).

\begin{definition}
 Given partitions $S=\{S_i\}_{i\in I}$ and $T=\{T_j\}_{j\in J}$ of $X$, we say that $S$ \emph{refines} $T$ if there exists a function $h:I\to J$ such that for all $j\in J$,
 \[
  T_j = \coprod_{i\in h^{-1}(j)} S_i . 
 \]
\end{definition}

Let's now focus our attention on the lattice of countable, measurable partitions of $X$. Denote their lattice by $\mathrm{CPart}(X)$.

\begin{definition}
 Let $X$ be a measurable space. A \emph{compatible family} of measures on $\mathrm{CPart}(X)$ amounts to 
 \begin{itemize}
  \item For each partition $S=\{S_i\}_{i\in I}$ on $X$, a discrete measure $\mu_S$ on $I$; such that
  \item Whenever $S=\{S_i\}_{i\in I}$ refines $T=\{T_j\}_{j\in J}$, via a function $h:I\to J$, we have that for all $j\in J$,
  \[
   \mu_T(T_j) = \sum_{i\in h^{-1}(j)} \mu_S(S_i) . 
  \]
 \end{itemize}

\end{definition}

\begin{lemma}\label{bijpart}
 Let $X$ be a measurable space. There is a bijective correspondence between
 \begin{itemize}
  \item probability measures $p$ on $X$, and
  \item compatible families $\{\mu_S\}$ of probability measures on $\mathrm{CPart}(X)$.
 \end{itemize}
 Moreover, the measure $p$ is zero-one if and only if every measure $p_S$ of the corresponding compatible family is zero-one. 
\end{lemma}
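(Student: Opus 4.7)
The plan is to exhibit explicit maps in both directions and verify they are mutually inverse, with the zero-one statement following by inspection.

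For the forward direction, given a probability measure $p$ on $X$, define $\mu_S$ on the index set $I$ of any countable measurable partition $S = \{S_i\}_{i \in I}$ by $\mu_S(\{i\}) := p(S_i)$. Countable additivity of $p$ gives $\sum_i \mu_S(\{i\}) = p(X) = 1$, so $\mu_S$ is a discrete probability measure. If $S$ refines $T = \{T_j\}_{j \in J}$ via $h: I \to J$, then $\mu_T(\{j\}) = p(T_j) = p\bigl(\coprod_{i \in h^{-1}(j)} S_i\bigr) = \sum_{i \in h^{-1}(j)} \mu_S(\{i\})$ by countable additivity, which is the required compatibility.

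For the reverse direction, given a compatible family $\{\mu_S\}$, define $p$ on a measurable set $A \subseteq X$ via the two-cell partition $\{A, X \setminus A\}$ (indexed as $S_0 = A$, $S_1 = X \setminus A$) by $p(A) := \mu_{\{A, X \setminus A\}}(\{0\})$. Normalization $p(X) = 1$ holds because the one-cell partition $\{X\}$ refines $\{X, \emptyset\}$ via the map sending the unique index to $0$, so compatibility forces $\mu_{\{X, \emptyset\}}(\{0\}) = \mu_{\{X\}}(\{\star\}) = 1$. For countable additivity, given $A = \bigsqcup_n A_n$, consider the partition $T$ whose cells are the $A_n$ together with $X \setminus A$. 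Then $T$ refines $\{A, X \setminus A\}$ by sending each $A_n$-index to $0$ and the remaining index to $1$, so compatibility yields $p(A) = \sum_n \mu_T(\{n\})$. Independently, for each fixed $n$, the same partition $T$ refines $\{A_n, X \setminus A_n\}$ by sending the $A_n$-index to $0$ and every other index to $1$, giving $\mu_T(\{n\}) = \mu_{\{A_n, X \setminus A_n\}}(\{0\}) = p(A_n)$. Combining yields $p(A) = \sum_n p(A_n)$.

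Mutual inverseness is then essentially bookkeeping: going $p \mapsto \{\mu_S\} \mapsto p'$ gives $p'(A) = \mu_{\{A, X \setminus A\}}(\{0\}) = p(A)$ by construction; going $\{\mu_S\} \mapsto p \mapsto \{\mu'_S\}$ gives $\mu'_S(\{i\}) = p(S_i) = \mu_{\{S_i, X \setminus S_i\}}(\{0\})$, which equals $\mu_S(\{i\})$ by compatibility applied to the refinement $S \to \{S_i, X \setminus S_i\}$ sending $i$ to $0$ and every other index to $1$. The zero-one claim follows immediately in both directions from the identities $\mu_S(\{i\}) = p(S_i)$ and $p(A) = \mu_{\{A, X \setminus A\}}(\{0\})$. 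The main obstacle is the countable additivity step, which requires chaining compatibility through two distinct refinements of the same partition $T$; the rest reduces to careful bookkeeping with partition indexings, with the minor subtlety that empty cells are permitted (so that the one-cell partition $\{X\}$ can refine $\{X, \emptyset\}$).
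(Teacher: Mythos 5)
Your proposal is correct and follows essentially the same route as the paper: the same forward map $\mu_S(\{i\}) = p(S_i)$, the same reconstruction of $p$ from the binary partitions $\{A, X\setminus A\}$, and the same countable-additivity argument via the common refinement $\{A_n\}_n \cup \{X\setminus A\}$. If anything, you are slightly more explicit than the paper in verifying the round trip $\{\mu_S\} \mapsto p \mapsto \{\mu'_S\}$ on arbitrary (non-binary) partitions, which the paper leaves implicit in its surjectivity step.
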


\begin{proof}[Proof of \Cref{bijpart}]
 First of all, every measure $p$ on $X$ defines a compatible family by restricting the measure to the $\sigma$-algebra generated by each partition, that is, for each cell $S_i$ of the partition $S$, and for each set of the induced $\sigma$-algebra, we set $p_S(S_i)=p(S_i)$.

 To show that this assignment is injective, suppose that the measures $p$ and $q$ on $X$ are different. Then there exists a measurable subset $A\subseteq X$ where $p(A)\ne q(A)$. Taking now the partition $S_A=\{A,X\setminus A\}$ we see that $p_S\ne q_S$, since they differ on $A$.
 
 To show that this assignment is surjective, let $\{\mu_S\}$ be a compatible family. Define the set function $p$ as follows. Given a measurable subset $A$ of $X$, consider the partition $S_A=\{A,X\setminus A\}$, and set $p(A)\coloneqq \mu_{S_A}(A)$. 
 We have that $p(\varnothing)=0$ since $S_\varnothing=\{\varnothing,X\}$. 
 To show countable additivity, let $\{A_i\}_{i=1}^\infty$ be pairwise disjoint measurable subsets of $X$, denote by $A$ their (disjoint) union.
 Then the partition $\tilde{S}=\{A_i\}\cup \{X\setminus A\}$ is a countable measurable partition of $X$ refining each of the partitions $S_{A_i}$ as well as $S_A$. Since the measure $\mu_S$ is by construction countably additive, and by compatibility of the family, we have that 
 \[
  p(A) = \mu_{S_A}(A) = \mu_{\tilde{S}}(A) = \sum_{i=1}^\infty \mu_{\tilde{S}}(A_i) = \sum_{i=1}^\infty \mu_{S_{A_i}}(A_i) = \sum_{i=1}^\infty p(A_i),
 \]
 and so $p$ is countably additive.
 
 To conclude the proof, we have that if $p$ is zero-one, so are all the pushforwards $p_S$, and conversely, if all the measures $p_S$ are zero one, then for each measurable $S\subseteq X$, $p(S)=p_S(S)$ has to be either zero or one as well.
\end{proof}

Note that given a partition $\{S_i\}_{i\in I}$ of $X$, we can consider the set $I$ equipped with the discrete $\sigma$-algebra.
We can also construct a measurable function $q:~X\to I$ which for all $i\in I$ maps each point in the cell $S_i\subseteq X$ to the point $i$.
Given a measure $p$ on $X$, the pushforward measure $q_*p$ has $q_*p(i) = p(S_i)$ for all $i\in I$. 
\Cref{bijpart} says that this makes $X$ a limiting cone over such partitions.

\begin{corollary}\label{limitcountable}
 $X$ is the limit of the diagram of its countable measurable partitions and their refinements, 
 both in the category $\cat{Stoch}$ and in its subcategory of zero-one morphisms.
\end{corollary}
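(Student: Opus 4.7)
The plan is to exhibit $X$ as a limiting cone, with legs the canonical quotient maps, and then construct the unique factorization using \Cref{bijpart}. For each countable measurable partition $S=\{S_i\}_{i\in I}$, let $q_S:X\to I$ be the deterministic kernel that sends each $x\in S_i$ to $i$; this is measurable for the discrete $\sigma$-algebra on $I$ because each preimage $q_S^{-1}(i)=S_i$ is measurable. Whenever $S$ refines $T=\{T_j\}_{j\in J}$ via $h:I\to J$, the equality $h\circ q_S=q_T$ holds on the nose as a composition of deterministic kernels, so $(X,\{q_S\})$ is genuinely a cone over the diagram.

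Now suppose $(Y,\{f_S:Y\to I_S\})$ is any competing cone in $\cat{Stoch}$. The key observation is that for each fixed $y\in Y$, the family of pushforwards $\{(f_S)_y\}_S$, where $(f_S)_y$ denotes the probability measure on $I_S$ obtained by evaluating the kernel at $y$, is a compatible family of probability measures on $\mathrm{CPart}(X)$ in the sense of the definition preceding \Cref{bijpart}: the cone condition $h_*(f_S)_y=(f_T)_y$ for every refinement is exactly the compatibility constraint. By \Cref{bijpart}, this compatible family corresponds to a unique probability measure $p_y$ on $X$, characterized by $p_y(A)=(f_{S_A})_y(A)$ for each measurable $A\subseteq X$, where $S_A=\{A,X\setminus A\}$.

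To promote $y\mapsto p_y$ to a kernel $f:Y\to X$, I would verify measurability in the first argument: for each measurable $A\subseteq X$, the map $y\mapsto p_y(A)$ equals $y\mapsto (f_{S_A})_y(A)$, which is measurable because $f_{S_A}$ is a kernel by hypothesis. This $f$ satisfies $q_S\circ f=f_S$ for every partition $S$ because both sides, evaluated at $y$, give the same pushforward measure on $I_S$ (this uses compatibility across the partition $S$ and its various refinements $S_A$ for $A$ a union of cells). Uniqueness follows from the injectivity half of \Cref{bijpart}: any two factorizations $f,g:Y\to X$ inducing the same cone agree on every binary partition $S_A$, hence on every measurable set, hence as kernels.

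For the restriction to the subcategory of zero-one morphisms, the same argument goes through verbatim, using the ``moreover'' clause of \Cref{bijpart}: if every $f_S$ is a zero-one kernel, then each $(f_S)_y$ is a zero-one measure, so the reconstructed $p_y$ is zero-one, and hence $f$ lies in the zero-one subcategory. The main subtlety to be careful about is not the existence of $f$ but the naturality step, namely checking that $q_S\circ f=f_S$ holds for non-binary partitions; this is where compatibility of the cone across all common refinements of $S$ with the $S_A$'s is used, and it is the step I expect requires the most careful bookkeeping.
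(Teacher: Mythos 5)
Your proposal is correct and takes essentially the same approach as the paper, which proves this corollary by simply observing that it is a repackaging of \Cref{bijpart} together with the pushforward description of the cone legs $q_S$. You have spelled out the details the paper leaves implicit (the cone condition, measurability of $y\mapsto p_y(A)$, the factorization via the coarsening maps $S\to S_A$, uniqueness from the injectivity half of \Cref{bijpart}, and the zero-one clause), and each of these steps checks out.
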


This means that this limit is in particular a \emph{Markov limit}, similar to the Markov colimit of \cite[Section~3.2]{ergodic}.
(We will see in \Cref{H} that by \emph{deterministic morphisms} in $\cat{Stoch}$ we don't quite mean measurable functions, we mean more generally those kernels that only take values zero and one. Indeed, if one takes elements instead of measures, \Cref{bijpart} in general does not hold. See more in \cite{ergodic} as well as in \cite{ours_LICS}.) 

\begin{question}
 Is this limit preserved by the tensor product, analogously to the \emph{Kolmogorov products} of \emph{\cite{fritzrischel2019zeroone}}?
\end{question}

Not only is $X$ the limit of its countable partitions, but we can use those partitions to calculate divergences.

\begin{proposition}\label{Dsupcountable}
 For $D=D_{KL}$, $D=D_\alpha$ and $D=d_T$, and for probability measures $p$ and $q$ on a measurable space $X$,
 \[
  D(p\et q) = \sup_{S\in\mathrm{CPart}(X)} D(p_S \et q_S) .
 \]
 Moreover, the supremum can equivalently be taken over only the finite partitions.
\end{proposition}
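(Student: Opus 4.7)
The plan is to establish the two inequalities defining the supremum separately, and then argue that the supremum can be restricted to finite partitions.

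First, for the upper bound $\sup_{S\in\mathrm{CPart}(X)} D(p_S \et q_S) \le D(p\et q)$: every countable measurable partition $S = \{S_i\}_{i\in I}$ induces a measurable function $\pi_S : X \to I$ (with $I$ carrying the discrete $\sigma$-algebra) sending each $x \in S_i$ to $i$. Viewed as a deterministic kernel in $\cat{Stoch}$, $\pi_S$ satisfies $p_S = \pi_S \circ p$ and $q_S = \pi_S \circ q$ by construction. Since all three divergences satisfy the data processing inequality — this was used explicitly in the proofs of \Cref{KLStochdiv} and \Cref{alphadiv}, citing \cite[Theorems~1 and~9]{renyi-div}, and for $d_T$ it follows from the fact that the map $S \mapsto \pi_S^{-1}(S)$ sends partitions of $I$ back into $\Sigma_X$ — one gets $D(p_S \et q_S) \le D(p \et q)$, and the bound follows by taking the supremum.

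Second, for the lower bound $D(p\et q) \le \sup_S D(p_S \et q_S)$: for the total variation distance this is immediate from the definition, since given $\varepsilon > 0$ one may choose a measurable $A \subseteq X$ with $|p(A) - q(A)| > d_T(p,q) - \varepsilon$, and then the two-cell partition $\{A, X \setminus A\}$ realizes $d_T(p_S, q_S) = |p(A) - q(A)|$. For $D_{KL}$ and $D_\alpha$, this is the content of \cite[Theorems~2 and~10]{renyi-div}; the standard proof approximates the Radon-Nikodym derivative $dp/dq$ by simple functions whose level sets form a countable partition $S$, on which $D(p_S \et q_S)$ converges upward to $D(p \et q)$ by monotone convergence.

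For the last claim, that finite partitions suffice, note that trivially $\sup_{\text{finite}} \le \sup_{\text{countable}}$. For the reverse, given a countable partition $S = \{S_i\}_{i=1}^\infty$, set $U_n = \bigsqcup_{i>n} S_i$ and consider the coarser finite partitions $T_n = \{S_1, \ldots, S_n, U_n\}$. Data processing gives $D(p_{T_n} \et q_{T_n}) \le D(p_S \et q_S)$, so it suffices to show that the left-hand side converges to the right as $n \to \infty$. When $D(p_S \et q_S) < \infty$, the tail of the defining sum or integral vanishes, and the ``combined cell'' contribution $p(U_n)\ln(p(U_n)/q(U_n))$ (and its Rényi analogue) is bounded above by that tail via the log-sum inequality, so both pieces vanish; when $D(p_S \et q_S) = \infty$, one extracts partial sums of positive contributions of arbitrary size from the countable sum, showing $D(p_{T_n} \et q_{T_n}) \to \infty$ as well.

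The main obstacle is the last convergence argument, and in particular its $\infty$-valued case: one needs a careful decomposition of the sum defining $D(p_S \et q_S)$ into positive and negative parts to ensure that finite truncations already capture arbitrarily large positive mass. Once this is dispatched, the rest of the proof is assembling the data processing step with the cited variational characterization from \cite{renyi-div}.
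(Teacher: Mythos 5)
Your proposal is correct and rests on the same pillars as the paper's own (very terse) proof: the substantive lower bound for $D_{KL}$ and $D_\alpha$ is delegated to \cite[Theorems~2 and~10]{renyi-div}, and the total variation case is handled by binary partitions. The extra material you supply --- the explicit data-processing argument for the upper bound via the quotient kernel $\pi_S$, and the truncation $T_n=\{S_1,\dots,S_n,U_n\}$ with the log-sum-inequality estimate reducing countable to finite partitions --- is sound and simply makes explicit what the paper absorbs into the citation.
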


For $D=D_{KL}$ and $D=D_\alpha$, this result is given by \cite[Theorems~2 and~10]{renyi-div}.
For $D=d_T$, it suffices to take binary partitions and suprematize over them. 

Therefore, one could have defined the divergence in the uncountable case, equivalently, as the supremum over the countable coarse-grainings. 

For the readers familiar with enriched categories, \Cref{limitcountable} and \Cref{Dsupcountable} imply together an enriched version of the limit property:
\begin{theorem}
 $X$ is the \emph{enriched} limit in $\cat{Stoch}$ of the diagram of its countable measurable partitions and their refinements, 
 where we consider $\cat{Stoch}$ enriched in divergences by means of \Cref{defdivmarkov} using either $D=D_{KL}$, $D=D_\alpha$ or $D=d_T$.
\end{theorem}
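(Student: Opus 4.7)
The plan is to combine the two inputs already in place for this diagram: the ordinary conical limit property from \Cref{limitcountable}, and the supremum-over-partitions formula from \Cref{Dsupcountable}. The enriched limit property then follows from the observation that a limit in the category of divergence spaces, over a poset-shaped diagram, is computed by taking the underlying set-limit and equipping it with the supremum of the pulled-back divergences.

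First I would unpack the universal property being claimed. The diagram of countable measurable partitions of $X$ under refinement is a poset, and so may be regarded as a locally discrete enriched category. The enriched conical limit of the functor $S \mapsto I_S$ in $\cat{Stoch}$ is therefore an object $L$ equipped with coarse-graining projections $q_S: L \to I_S$ such that, for every test object $Y$, the canonical map from the divergence space $\cat{Stoch}(Y,L)$ to the limit in divergence spaces of the diagram $\cat{Stoch}(Y, I_S)$ is an isomorphism. Concretely this amounts to two things: (a) a bijection between morphisms $p: Y \to L$ and compatible cones $(p_S : Y \to I_S)_S$, and (b) the divergence identity
\[
 D(p \et p') = \sup_{S \in \mathrm{CPart}(X)} D(q_S \circ p \et q_S \circ p')
\]
for all $p, p' : Y \to L$.

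I would then take $L = X$ with its coarse-graining kernels. Part (a) is exactly \Cref{limitcountable}. For part (b), I would unfold the definition of the divergence between kernels from \eqref{Dtocat}: the left-hand side equals $\sup_{y \in Y} D(p_y \et p'_y)$, and the right-hand side equals $\sup_{S} \sup_{y \in Y} D((p_y)_S \et (p'_y)_S)$, where $(p_y)_S$ denotes the pushforward of the probability measure $p_y$ on $X$ along $q_S$. For each fixed $y \in Y$, \Cref{Dsupcountable} applied to $p_y$ and $p'_y$ gives
\[
 D(p_y \et p'_y) = \sup_{S} D((p_y)_S \et (p'_y)_S).
\]
Taking the supremum over $y \in Y$ on both sides and interchanging the two suprema (always permissible for a family of extended nonnegative reals) yields (b).

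I do not anticipate any serious obstacle. The only mild subtlety is the phrasing of the enriched universal property itself, since the paper's definition of enriched limits lives in \Cref{Div}; once one accepts that limits of divergence spaces over a poset carry the pointwise-supremum divergence, the argument is just \Cref{limitcountable} plus a pointwise application of \Cref{Dsupcountable} and a harmless swap of suprema, with the three cases $D_{KL}$, $D_\alpha$, $d_T$ handled uniformly because \Cref{Dsupcountable} covers all three.
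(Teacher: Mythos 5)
Your proposal is correct and matches the paper's own reasoning: the paper derives this theorem precisely by combining \Cref{limitcountable} (the underlying bijection of cones) with \Cref{Dsupcountable} (the supremum formula), which is exactly your parts (a) and (b). Your additional unpacking of the enriched universal property — that limits in $\cat{Div}$ over such diagrams carry the pointwise-supremum divergence, plus the harmless swap of suprema over $y$ and $S$ — is just a more explicit rendering of the step the paper leaves to "readers familiar with enriched categories."
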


This shows that in some cases \emph{$\cat{Stoch}$ has enriched universal properties}. It is possible that other quantitative bounds in probability and information theory can be interpreted in this light too, and that new quantitative bounds can be founds by means of this framework.
For example, a quantitative extension of Kolmogorov's extension theorem would read as follows: \emph{the divergence between two probability distributions on an infinite cartesian product $X^\mathbb{N}$ is the supremum of the divergence between their corresponding finite marginalizations}. We leave such questions to future work.

\subsection{Almost-sure equality and conditional divergences}\label{conddiv}

The notion of almost-sure equality of channels can be given in any Markov category, see \cite[Section~13]{fritz2019synthetic} and \cite[Section~5]{chojacobs2019strings}.
Given channels $f,g:X\to Y$ and a source $p$ on $X$, in a Markov category, one says that $f$ and $g$ are $p$-almost surely equal if and only if the following equation holds.
\begin{equation}\label{aseq}
\tikzfig{entropy-ase}
\end{equation}
Instantiating this in $\cat{FinStoch}$ we get that for all $x\in X$ and $y\in Y$, 
$$
p(x)\,f(y|x) = p(x)\,g(y|x) \,,
$$
i.e. $f(y|x)$ and $g(y|x)$ must agree on all the $x$ in the support of $p$. 
For $\cat{Stoch}$ and $\cat{BorelStoch}$ the situation is analogous, see the aforementioned references for the details.

In general, the divergence between both sides of \eqref{aseq}, i.e.\ $D(fp\et gp)$, can be seen as a measure of departure from the case of $f$ being $p$-almost surely equal to $g$. We can call the resulting quantity the \emph{conditional divergence}, and denote it by $D(f\et g\,|\,p)$. Let's see this for our usual examples.
\begin{itemize}
	\item For the KL divergence, we get from the chain rules \eqref{chainruleKL} and \eqref{KLcontcreq},
	$$
	D(f\et g\,|\,p) = D(fp\et gp) = \sum_{x\in X} p(x)\,D(f_x\et g_x)
	$$
	in the discrete case, and
	$$
	D(f\et g\,|\,p) = D(fp\et gp) = \int_X D(f_x\et g_x)\, p(dx)
	$$
	in the continuous case. As above, we get an infinite integral (or sum) if $f_x$ fails to be absolutely continuous w.r.t.\ $g_x$ for $x$ on a set of nonzero $p$-measure.
	This quantity is sometimes called the \emph{conditional KL divergence}. 
	\item For Rényi's $\alpha$-divergence, looking directly at the continuous case, we get from \eqref{chainalphaeq} that 
	\begin{align*}
	D_\alpha(f\et g\,|\,p) = D_\alpha(fp\et gp) &= \dfrac{1}{\alpha-1} \, \ln \left( \int_X e^{(\alpha-1) D_\alpha(f_x\et f'_x)} \, p(dx) \right) \\
	&= \dfrac{1}{\alpha-1} \, \ln \left( \int_X \int_Y \left(\dfrac{df_x}{df'_x} \right)^{\alpha-1} df_x \, p(dx) \right) .
	\end{align*}
	Once again, the quantity is infinite if $f_x$ is not absolutely continuous w.r.t.\ $g_x$ for $x$ on a set of positive measure.
	While this quantity does not seem to appear in the literature, we could call it the \emph{conditional Rényi divergence}.\footnote{Other definitions are possible, see \cite{berens} for an analogous discussion on possible definitions of the conditional Rényi entropy.}
	
	\item For the total variation distance, we get once again directly in the continuous case, that 
	\begin{align*}
	d_T(f, g\,|\,p) = d_T(fp, gp) &= \sup_{S\in\Sigma_{X\times Y}}  \big| pf\,(S) - pg\,(S) \big| \\
	&= \sup_{S\in\Sigma_{X\times Y}} \left| \int_X \big( f(S_x|x) - g(S_x|x) \big) \, p(dx) \right| ,
	\end{align*}
	and by equivalently choosing an $S$ such that $f(S_x|x) \ge g(S_x|x)$ for all $x\in X$, we get
	\begin{align*}
		d_T(f, g\,|\,p) &= \sup_{S\in\Sigma_{X\times Y}} \int_X \big| f(S_x|x) - g(S_x|x) \big| \, p(dx) .
	\end{align*}
	Again motivated by the ideas above, we can call this quantity the \emph{conditional total variation distance}. Again, other definitions may be possible. 
\end{itemize}

\section{Measures of stochastic interaction}\label{mi}

In a Markov category, one says \cite[Lemma~12.11 and Definition~12.12]{fritz2019synthetic} that a joint source $h$ on $X$ and $Y$ displays \emph{independence} between $X$ and $Y$ if and only if
\begin{equation}
\tikzfig{entropy-ind} 
\end{equation}
i.e. if $h$ is the products of its marginals. 
For discrete probability measures, this is exactly the condition 
\[
 p(x,y) = p(x)\,p(y) . 
\]

More generally, a morphism $h:A\to X\otimes Y$ displays \emph{conditional independence} between $X$ and $Y$ if and only if
\begin{equation}\label{indep}
\tikzfig{entropy-cond_ind}
\end{equation}
For discrete probability measures, this is exactly the condition 
\[ 
 p(x,y|a) = p(x|a)\,p(y|a) ,
\]
where copying $A$ corresponds to $a$ appearing twice on the right-hand side of the equation above.  

It is then natural to quantify the stochastic dependence of the variables $X$ and $Y$ by taking the divergence between both sides of the equation. 

\begin{definition}\label{defmi}
 Let $\cat{C}$ be a Markov category with a divergence $D$. The \emph{mutual information} displayed by a morphism $h:A\to X\otimes Y$ is the divergence between the two sides of equation \eqref{indep},
 $$
 \mathrm{I}_D(h) \coloneqq  D\big( h \et (h_X\otimes h_Y)\circ\cop_A \big) .
 $$
 Note that the order of the arguments of $D$ matters.
\end{definition}
 
By construction, every morphism $h$ as above exhibiting (conditional) independence of $X$ and $Y$ (given $A$) has $\mathrm{I}_D(h)=0$.
The converse holds if $D$ is a strict divergence.

\subsection{Data processing inequality}\label{datapmi}

The data processing inequality for a divergence makes these mutual information measures satisfy automatically a data processing inequality.

\begin{proposition}[Data processing inequality for mutual information]\label{dataprocmi}
 Let $f:~A\to X\times Y$, $g:X\to X'$ and $h:Y\to Y'$ be channels in a Markov category with a divergence $D$.
 Then 
 \begin{equation}\label{dataprocpar}
 I_D\big((g\otimes h)\circ f\big) \le I_D(f) .
\end{equation}
\end{proposition}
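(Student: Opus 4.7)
The plan is to reduce the statement to a direct application of the data processing inequality \eqref{Ddataprocgen} (equivalently, the special case of \eqref{enrichmentcondition} with $g=g'$), after rewriting the product of marginals in a convenient form.

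First I would compute the marginals of $(g\otimes h)\circ f$. Using counitality ($\discard_{Y'}\circ h = \discard_Y$ and $\discard_{X'}\circ g = \discard_X$), postcomposing $(g\otimes h)\circ f$ with $\id\otimes\discard$ yields $g\circ f_X$, and symmetrically the $Y'$-marginal is $h\circ f_Y$. So by \Cref{defmi},
\[
 I_D\big((g\otimes h)\circ f\big) = D\Big((g\otimes h)\circ f \,\et\, \big((g\circ f_X)\otimes(h\circ f_Y)\big)\circ\cop_A\Big).
\]

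Second, by functoriality of the tensor product,
\[
 \big((g\circ f_X)\otimes(h\circ f_Y)\big)\circ\cop_A \;=\; (g\otimes h)\circ\big((f_X\otimes f_Y)\circ\cop_A\big).
\]
Thus the right-hand side of the divergence in $I_D((g\otimes h)\circ f)$ is obtained from the right-hand side of the divergence in $I_D(f)$ by postcomposition with $g\otimes h$; the left-hand sides are similarly related.

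Third, I would apply the data processing inequality \eqref{Ddataprocgen} (which follows from \eqref{enrichmentcondition} by setting the outer morphisms equal to $g\otimes h$) to the pair $f$ and $(f_X\otimes f_Y)\circ\cop_A$, both viewed as morphisms $A\to X\otimes Y$, and both postcomposed with $g\otimes h:X\otimes Y\to X'\otimes Y'$. This yields
\[
 D\Big((g\otimes h)\circ f \,\et\, (g\otimes h)\circ\big((f_X\otimes f_Y)\circ\cop_A\big)\Big) \;\le\; D\big(f\,\et\,(f_X\otimes f_Y)\circ\cop_A\big) \;=\; I_D(f),
\]
and by the previous rewriting the left-hand side is exactly $I_D((g\otimes h)\circ f)$.

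There is no real obstacle here; the only bookkeeping to be careful about is the identification of the marginals of $(g\otimes h)\circ f$ with $g\circ f_X$ and $h\circ f_Y$, which uses counitality, and the naturality (functoriality) of $\otimes$ that moves the product outside the post-processor. Once these are in place, the inequality is an immediate instance of data processing.
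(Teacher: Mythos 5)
Your proposal is correct and follows essentially the same route as the paper's proof: identify the marginals of $(g\otimes h)\circ f$ as $g\circ f_X$ and $h\circ f_Y$, use functoriality of $\otimes$ to rewrite the product of marginals as $(g\otimes h)$ postcomposed onto $(f_X\otimes f_Y)\circ\cop_A$, and then apply the data processing inequality \eqref{Ddataprocgen}. No gaps.
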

The situation is as follows,
\[
 \tikzfig{entropy-indproc} 
\]
where $X$ and $Y$, coming from a joint source or channel, are processed independently using $g$ and $h$ respectively.

\begin{proof}
 First notice that the first marginal of $(g\otimes h)\circ f$ is $g\circ f_X$, 
 \[
  \tikzfig{entropy-indprocm}
 \]
 and similarly the second one is $h\circ f_Y$.
 Now, using \Cref{Ddataprocgen},
 \begin{align*}
 I_D\big((g\otimes h)\circ f\big)  &=  D\big((g\otimes h)\circ f \et (((g\otimes h)\circ f)_X\otimes ((g\otimes h)\circ f)_Y) \circ\cop\big) \\
 &= D\big((g\otimes h)\circ f \et ((g\circ f_X)\otimes (h\circ f_Y))\circ\cop\big) \\
 &= D\big((g\otimes h)\circ f \et (g\otimes h)\circ(f_X\otimes f_Y) \circ\cop\big) \\
 &\le D\big(f \et (f_X\otimes f_Y) \circ\cop\big) \\
 &=  I_D(f) . \qedhere
 \end{align*}
\end{proof}

For sources, in particular, one gets
\[
 I_D((g\otimes h)\circ p) \le I(p) .
\]
In terms of random variables,
\[
 I(g(X):h(Y)) \le I(X:Y) .
\]
This is a quantitative version of the fact that if $X$ and $Y$ are independent, then $g(X)$ and $h(Y)$ are independent too. 
Applying this to marginalizations we get that 
\[
 I(X:Y) \le I(X,A : Y,B) .
\]
i.e.~a monotonicity property in the number of random variables that we are observing.
All the examples that we have satisfy automatically these inequalities.

\subsection{Particular cases}\label{micases}

\subsubsection{Shannon mutual information}

 As it is well known, from the KL divergence on $\cat{FinStoch}$ we get exactly the Shannon mutual information. If $p$ is a discrete probability measure on $X\times Y$,
 \begin{align*}
  \mathrm{I}_{KL}(p)  &=  D_{KL}( p \et p_X\otimes p_Y ) \\
   &= \sum_{x,y} p(x,y) \ln \dfrac{p(x,y)}{p(x)\,p(y)} \\
   &= I(X:Y) .
 \end{align*}
 
 Similarly, for a channel between finite sets $h:A\to X\times Y$, we maximize over the inputs, obtaining the maximal conditional mutual information:
\begin{align*}
 I_{KL}(h) &= \max_{a\in A} \left( \sum_{x,y} h(x,y|a) \ln \dfrac{h(x,y|a)}{h(x|a)\,h(y|a)}  \right) \\
 &= \max_{a\in A} I(X:Y| A=a) .
\end{align*}
 
\begin{remark}
 Since the maximum of an affine function on a convex set is attained on an extremum, we also have 
 \begin{align*}
 I_{KL}(h) &= \max_{p\in PA} \sum_{a,x,y} p(a) \, h(x,y|a) \ln \dfrac{h(x,y|a)}{h(x|a)\,h(y|a)} \\
   &= \max_{p\in PA} I(X:Y | A) ,
 \end{align*}
 the maximum possible conditional mutual information for the joint source $hp$ on $A\times X\times Y$ (maximized over the input source $p$).
\end{remark}

In the nondiscrete case the situation is similar. In particular, when $X$ and $Y$ are standard Borel, 
we can apply the disintegration theorem and express $p$ as the joint $fp_X$ for a regular conditional (kernel) $f:X\to Y$. This way, using \Cref{jointdens},
we can look at whether $f_x\ll p_Y$ for $p_X$-almost all $x$. In this case we can apply the chain rule of entropy, \Cref{KLcontcr}, and we get that
\begin{align*}
 I_{KL}(p) &= D_{KL}( p \et p_X\otimes p_Y ) \\
  &= D_{KL}( fp_X \et p_X\otimes p_Y ) \\
  &= \int_{X} D_{KL}\left( f_x \et p_Y \right) \,p_X(dx) .
\end{align*}
Instead, if $f_x$ fails to absolutely continuous with respect to $p_Y$ on a set of positive $p_X$-measure, we have 
that $I_{KL}(p)=\infty$. 
The same can be done interchanging $X$ and $Y$.

\subsubsection{Rényi or alpha-mutual information}

The Rényi divergence of order $\alpha$, for $\alpha\in (0,\infty), \alpha\ne 1$, gives the following measure of interaction for a discrete probability measure $p$ on $X\times Y$,
\begin{align*}
 I_\alpha(p) &= D_\alpha(p \et p_X\otimes p_Y) \\
  &= \dfrac{1}{\alpha-1} \, \ln \left( \sum_{x,y} \dfrac{p(x,y)^\alpha}{p(x)^{\alpha-1}\,p(y)^{\alpha-1}} \right) ;
\end{align*}

This quantity is sometimes known as \emph{Rényi mutual information} or \emph{$\alpha$-mutual information}~\cite{alphami}.

Let's now look at the nondiscrete case, just as we did for Shannon. When $X$ and $Y$ are standard Borel, 
again by the disintegration theorem, we can write $p$ as the joint $fp_X$ for some kernel $f:X\to Y$. Again, using \Cref{jointdens},
we can look at whether $f_x\ll p_Y$ for $p_X$-almost all $x$. In this case, we can apply the logarithmic chain rule of \Cref{chainalpha}, which gives us
\begin{align*}
 I_{\alpha}(p) &= D_{\alpha}( p \et p_X\otimes p_Y ) \\
  &= D_{\alpha}( fp_X \et p_X\otimes p_Y ) \\
  &= \dfrac{1}{\alpha-1} \, \ln \left( \int_{X} e^{(\alpha-1)\,D_\alpha(f_x \et p_Y)}\, p_X(dx) \right) .
\end{align*}
Instead, if $f_x$ fails to absolutely continuous with respect to $p_Y$ on a set of positive $p_X$-measure, we have 
that $I_{\alpha}(p)=\infty$. 
Once again, the same holds interchanging $X$ and $Y$.

\subsubsection{Total variation mutual information}

The total variation distance gives the following measure of interaction for a discrete probability measure $p$ on $X\times Y$,
\begin{align*}
 I_{T}(p) &= d_t(p, p_X\otimes p_Y) \\
  &= \dfrac{1}{2} \sum_{x,y} \big| p(x,y) - p(x)\,p(y) \big| \\
  &=  \dfrac{1}{2} \sum_{x\in X} p(x) \left( \sum_{y\in Y} \big| p(y|x) - p(y) \big| \right) .
\end{align*}
If we set 
\[
 A_x = \{y\in Y : p(x,y)>p(x)\,p(y)\},
\]
we get
\begin{align*}
 I_T(p) &= \dfrac{1}{2} \sum_{x\in X} p(x) \left( p(A_x|x) - p(A_x) - p(Y\setminus A_x|x) + p(Y\setminus A_x) \right) \\
 &= \sum_{x\in X} p(x) \left( p(A_x|x) - p(A_x) \right) .
\end{align*}
If we moreover set
\[
 B= \{(x,y)\in X\times Y : p(x,y)>p(x)\,p(y)\} = \coprod_{x\in X} \{x\}\times A_x ,
\]
we get exactly
\[
 I_T(p) = p(B) - p_X\otimes p_Y(B) .
\]

In the nondiscrete case, the situation is similar. As before, if $X$ and $Y$ are standard Borel, by the disintegration theorem we can express $p$ as a joint $fp_X$ for some regular conditional $f:X\to Y$, and we get the following.
\begin{align*}
 I_{T}(p) &= d_t(p, p_X\otimes p_Y) \\
 &= \sup_{S\in\Sigma_{X\times Y}} \big| p(S) - p_X\otimes p_X\,(S) \big| \\
 &= \sup_{S\in\Sigma_{X\times Y}} \int_X \big| f(S_x|x) - p_Y(S_x) \big| \, p_X(dx) ,
\end{align*}
where again we see a comparison of the conditional with the marginal.

\subsection{Conditional mutual information}\label{condmi}

Following the guidelines of \Cref{conddiv}, we can also measure the departure from \emph{almost sure} independence. 
Given a channel $h:A\to X\otimes Y$ and a source $p$ on $A$, we say that $h$ exibits conditional independence of $X$ and $Y$ $p$-almost surely if and only if equation \eqref{indep} holds $p$-almost surely, that is, the following equation holds.
\begin{equation}\label{asind}
	\tikzfig{entropy-as-cond-ind}
\end{equation}
For discrete probability measures, this means that 
$$
p(a)\,p(x,y|a) = p(a)\,p(x|a)\,p(y|a) ,
$$
which means that $p(x,y|a) = p(x|a)\,p(y|a)$ for all the $a$ of nonzero probability. The continuous case is analogous. 

We can take the divergence between both sides of \eqref{asind}, and call the resulting quantity the \emph{conditional mutual information}, denoted by $I_D(h|p)$. Let's see this in our examples for the discrete case.
\begin{itemize}
	\item For the KL divergence, we get exactly Shannon's conditional mutual information:
	$$
	I_{KL}(h|p) = \sum_{a,x,y} p(a) \, h(x,y|a) \ln \dfrac{h(x,y|a)}{h(x|a)\,h(y|a)} .
	$$
	(In the continuous case the situation is analogous, with the usual infinities if absolute continuity fails $p$-almost surely.)
	
	\item For the Rényi divergences, we get the following quantity:
	$$
	I_\alpha(h|p) = \dfrac{1}{\alpha-1} \ln \left( \sum_{a,x,y} p(a) \,\dfrac{h(x,y|a)^\alpha}{h(x|a)^{\alpha-1}\,h(y|a)^{\alpha-1}} \right) ,
	$$ 
	which analogously we can call the \emph{Rényi conditional mutual information}. 
	
	\item For the total variation distance, 
	$$
	I_T(h|p) = \dfrac{1}{2} \sum_{a,x,y} p(a)\,\big| h(x,y|a) - h(x|a)\,h(y|a) \big| .
	$$
\end{itemize}

\section{Measures of randomness}\label{H}

In a Markov category, one says \cite[Definition~10.1]{fritz2019synthetic} that a morphism $f:~X\to Y$ in a Markov category is \emph{deterministic} if and only if applying $f$ and copying its output is the same as copying its input and applying it to both copies:
\begin{equation}\label{det} 
\tikzfig{entropy-deterministic}
\end{equation} 
The intuition is that, if $f$ involves some randomness, the left-hand side of the equation will display perfectly correlated results, possibly noisy, but guaranteed to be equal. 
On the right-hand side, instead, the two functions are executed independently, and so the outputs will be completely independent (technically speaking, conditionally independent given the input).
If $f$ is just a function, both sides of the equation map an input $x$ simply to $(f(x), f(x))$. 

For sources, in particular, the equation reduces to the following.
\begin{equation}
\tikzfig{entropy-pdet}
\end{equation} 
In $\cat{Stoch}$ and $\cat{FinStoch}$, the left-hand side is a measure supported on the diagonal of $X\times X$, and the right-hand side is a product measure.
Recalling the idea of independence and conditional independence (equation \eqref{indep}), a deterministic source can be seen as a source that is independent of itself, and a deterministic morphism is a morphism which is conditionally independent of itself given its input. This is compatible with the usual interpretation in probability theory.
In $\cat{FinStoch}$, deterministic morphisms are precisely the stochastic matrices whose entries are zero and one, i.e.~those defined by (deterministic) functions. 
In $\cat{Stoch}$, deterministic measures are precisely those that give probability either zero or one to each event.\footnote{Such measures are sometimes said to satisfy a \emph{zero-one law}, more on that in \cite{fritzrischel2019zeroone}.} 
For standard Borel spaces, those are exactly the Dirac delta measures. For more general measurable spaces, there are measures which are zero-one, but which are not Dirac deltas (for example ergodic, measures on invariant $\sigma$-algebras, see \cite{ergodic}). 
The situation is exactly the same for kernels. 

We define as our measure of randomness the discrepancy between the two sides of equation \eqref{det}.

\begin{definition}\label{defentropy} 
 Let $\cat{C}$ be a Markov category with divergence $D$. The \emph{entropy} of a morphism $f:X\to Y$ is the quantity
 \begin{equation}
  H_D(f) \coloneqq D \big( \cop \circ f \et (f\otimes f)\circ \cop \big) ,
 \end{equation}
 i.e.~the divergence between the two sides of \eqref{det}. (Note that the order matters.)
\end{definition}

In other words, entropy is the mutual information of the left-hand side of \eqref{det}. 
This corresponds to the usual identity for Shannon entropy and mutual information,
\[
 H(X) = I(X:X) .
\]

By construction, every deterministic morphism has zero entropy. 
If the divergence $D$ is strict, we have that conversely a morphism of zero entropy is deterministic.

\subsection{Data processing inequality}

Defined in this way, entropy automatically satisfies a data processing inequality.

\begin{proposition}[Data processing inequality for entropy]\label{dataprocH}
 Let $\cat{C}$ be a Markov category with a divergence $D$, and consider morphisms $f:X\to Y$ and $g:~Y\to Z$, with $g$ deterministic.
 Then
 \[
   H_D(g\circ f ) \le H_D(f) .
 \]
\end{proposition}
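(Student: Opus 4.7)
The plan is to unfold the definition of $H_D$ on both sides, use determinism of $g$ to rewrite $\cop_Z \circ g$, and then apply the sequential data processing inequality \eqref{Ddataprocgen} with $g \otimes g$ as the post-processing channel.

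First I would write out $H_D(g\circ f) = D(\cop_Z \circ g\circ f \et ((g\circ f)\otimes(g\circ f))\circ \cop_X)$. On the right-hand side of the inner divergence, functoriality of $\otimes$ gives $(g\circ f)\otimes(g\circ f) = (g\otimes g)\circ(f\otimes f)$, so this term equals $(g\otimes g)\circ(f\otimes f)\circ \cop_X$. On the left-hand side, the determinism equation \eqref{det} applied to $g$ says $\cop_Z \circ g = (g\otimes g)\circ \cop_Y$, hence $\cop_Z \circ g \circ f = (g\otimes g)\circ \cop_Y \circ f$. Thus both arguments of $D$ have been expressed as $(g\otimes g)$ post-composed with something, namely $\cop_Y\circ f$ and $(f\otimes f)\circ\cop_X$ respectively.

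Now I would invoke the data processing inequality in the form \eqref{Ddataprocgen}, which says $D(h\circ a\et h\circ a')\le D(a\et a')$ for any common post-processing channel $h$. Applying this with $h = g\otimes g$ gives
\[
H_D(g\circ f) = D\bigl((g\otimes g)\circ\cop_Y\circ f \et (g\otimes g)\circ (f\otimes f)\circ \cop_X\bigr) \le D\bigl(\cop_Y\circ f \et (f\otimes f)\circ \cop_X\bigr) = H_D(f),
\]
which is the desired inequality. (The inequality \eqref{Ddataprocgen} was itself derived from \eqref{enrichmentcondition} by setting the second pair of morphisms equal, so this is legitimate in any Markov category with a divergence in the sense of \Cref{defdivmarkov}.)

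There is no real obstacle: the only nontrivial input is the determinism equation, which lets us ``pull $g$ through the copy map'' so that both sides of the divergence differ only in a pre-composed piece, after which \eqref{Ddataprocgen} finishes the job. The assumption that $g$ is deterministic is essential precisely because without it the two ways of duplicating the output of $g$ disagree, and the argument breaks down.
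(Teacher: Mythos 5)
Your proof is correct and is essentially the paper's argument: the paper writes $H_D(h)=I_D(\cop\circ h)$, uses determinism to rewrite $\cop\circ g\circ f=(g\otimes g)\circ\cop\circ f$, and then invokes the data processing inequality for mutual information (\Cref{dataprocmi}), whose own proof is exactly your step of cancelling the common post-composed channel $g\otimes g$ via \eqref{Ddataprocgen}. You have simply inlined that lemma rather than citing it, so the two proofs coincide once unfolded.
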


In particular, for sources,
\[
 H_D(g\circ p ) \le H_D(p) .
\]
Note that this requires the morphism $g$ to be deterministic, otherwise additional randomness might be added.

\begin{proof}[Proof of \Cref{dataprocH}.]
 Note that since $g$ is deterministic, by definition it satisfies \eqref{det}. 
 Now by \Cref{dataprocpar},
 \begin{align*}
  H_D(g\circ f) &= I_D(\cop \circ g\circ f) \\
  &=  I_D\big((g\otimes g)\circ\cop\circ f\big) \\
  &\le I_D\big(\cop\circ f\big) \\
  &= H_D(f) . \qedhere
 \end{align*}
\end{proof}

For sources, and in terms of random variables, this inequality reads as 
\[
 H_D(g(X)) \le H_D(X) .
\]
In particular, since marginalizations are deterministic, this implies the monotonicity property,
\[
 H_D(X) \le H_D(X,Y) ,
\]
i.e.~that entropy increases with the number of variables observed, as it happens with Shannon entropy.

\subsection{Particular entropies in the finite case}\label{Hcasefin}

\subsubsection{Shannon entropy}

The relative entropy divergence on $\cat{FinStoch}$ gives exactly the Shannon entropy:
 \begin{align*}
  H (p) &= D_{KL} \big( \cop\circ\,p \et p\otimes p \big) \\ 
   &= \sum_{x,x'\in X} p(x)\,\delta_{x,x'} \ln \dfrac{ p(x)\,\delta_{x,x'}}{p(x)\,p(x')} \\
   &= \sum_{x\in X} p(x) \ln \dfrac{ p(x)}{p(x) \,p(x)} \\
   &= \sum_{x\in X} p(x) \ln \dfrac{1}{p(x)} \\
   &= - \sum_{x\in X} p(x) \ln p(x) ,
 \end{align*}
 recalling \Cref{logzero}. 
 
Similarly, for a channel between finite sets $f:X\to Y$, we maximize over the inputs,
\[
 H (f) = \max_{x\in X} \left( - \sum_{y} f(y|x) \ln f(y|x) \right) .
\]

\begin{remark}
 Since the maximum of an affine function on a convex set is attained on an extremum, we also have that
 \begin{align*}
  H (f) 
   &= \max_{x\in X} \left( - \sum_{y} f(y|x) \ln f(y|x) \right) \\
   &= \max_{p\in PX} \left( - \sum_{y,x} p(x) f(y|x) \ln f(y|x) \right) \\
   &= \max_{p\in PX} H_{pf}(Y|X) ,
 \end{align*}
 i.e.~the maximum possible entropy of the channel, maximized over the inputs. (Conditional entropy is not to be confused with relative entropy.)
\end{remark}

\subsubsection{Rényi entropy}

For $\alpha\ne 1$,
\begin{align*}
 D_\alpha(\cop\circ\,p \et p\otimes p) &= \dfrac{1}{\alpha-1} \, \ln \left( \sum_{x,x'\in X} \dfrac{\big(p(x)\,\delta_{x,x'}\big)^\alpha}{\big(p(x)\,p(x')\big)^{\alpha-1}} \right) \\
  &= \dfrac{1}{\alpha-1} \, \ln \left( \sum_{x\in X} \dfrac{p(x)^\alpha}{\big(p(x)\,p(x)\big)^{\alpha-1}} \right) \\
  &= \dfrac{1}{\alpha-1} \, \ln \left( \sum_{x\in X} p(x)^{2-\alpha} \right) .
\end{align*}

This quantity is known as the \emph{Rényi entropy}~\cite{renyientropy}, but \emph{of a different order}: namely, usually the Rényi entropy of order $\alpha$ is usually defined as 
\[
 H_\alpha(p) = \dfrac{1}{1-\alpha} \, \ln \left( \sum_{x\in X} p(x)^\alpha \right) .
\]
Therefore, 
\[
 H_{D_\alpha}(p) = H_{2-\alpha}(p) .
\]
Notice that the factor in front of the logarithm does not need a correction, and that the case $\alpha=1$ is still saying that the KL divergence gives the Shannon entropy.
This fact is known at least since \cite{renyi-div}, see the end of section I-A therein.

\subsubsection{Total variation and the Gini-Simpson index (linear entropy)}\label{gini}

The entropy measure defined by the total variation distance for a probability measure $p$ on $X$ reads as follows. 
\[
 H_T(p) = d_T(\cop\circ\,p, p\otimes p) = \sup_{S\in\Sigma_{X\times X}} \big| \copy\circ p\,(S) - p\otimes p\,(S) \big| .
\]

If $p$ is a discrete probability measure, the entropy reduces to the following quantity:
\begin{align*}
 H_T(p) &= \dfrac{1}{2} \sum_{x,y\in X} \big| p(x)\,p(y) - \delta_{xy}\,p(x) \big| \\
  &= \dfrac{1}{2} \sum_{x\in X} p(x) \sum_{y\in Y} \big| p(y) - \delta_{xy} \big| \\
  &= \dfrac{1}{2} \sum_{x\in X} p(x) \left( 1-p(x) + \sum_{y\ne x} p(y) \right) \\
  &= \dfrac{1}{2} \sum_{x\in X} p(x) \big( 1-p(x)+ 1-p(x) \big) \\
  &= \sum_{x\in X} p(x) \big( 1-p(x) \big) \\
  &= 1 - \sum_{x\in X} p(x)^2 .
\end{align*}

This quantity is called \emph{Gini-Simpson index} \cite[Example~4.1.3.iii]{leinster-entropy}, and is used in ecology to quantify the diversity in an ecosystem (see for example~\cite{loujost}). (This is related to, but different from, the \emph{Gini coefficient} used to measure income inequality in economics.)
This can also be seen as a classical counterpart of \emph{linear entropy} in quantum information theory (see \cite{linearentropy} and references therein). 

The next-to-last line of the calculations above can be interpreted as the probability that any two points drawn independently from the probability distribution $p$ are different. This will be the case also outside the discrete case, see \Cref{gini-cont}.

\begin{question}
 The Gini-Simpson index happens to be the Tsallis entropy of order 2 (see~\emph{\cite{leinster-entropy}}). 
 Can we deform the total variation distance to a family of divergences on $\cat{FinStoch}$ which give us the Tsallis entropy of different orders?
 (Recall that the $q$-divergences do not in general give divergences on $\cat{FinStoch}$, see \Cref{nonexample}.)
\end{question}

\subsection{Entropy for nondiscrete distributions}\label{Hcaseinf}

Let's now study the infinite, nondiscrete case. For general measures, entropy tends to be maximal whenever a measure has no atoms.
The idea is such measures are ``maximally spread''.

Let's see this in detail for the case of standard Borel spaces. 
Recall that given a finite measure $p$ on $X$, an \emph{atom} of $p$ is a measurable subset $A\subseteq X$ with $p(A)>0$ such that if any subset $A'\subseteq A$ has measure $p(A') < p(A)$ strictly, then $p(A')=0$.
If $X$ is standard Borel, any atom is necessarily a singleton $\{x\}$ for some $x\in X$.
We call $p$ \emph{purely atomic} if every set of positive measure contains at least an atom. 
On standard Borel spaces, this is equivalent to say that $p$ is \emph{discrete}, i.e.~it is a countable sum of Dirac deltas. 
At the other extreme, we call $p$ \emph{atomless} if it has no atoms. The Lebesgue (uniform) measure on the unit interval $[0,1]$ is an example of a nonatomic probability measure, the normal distribution on $\R$ is another one. 
Note that there are measures which are neither purely atomic nor atomless, for example a nontrivial convex combination of an atomless distribution and a Dirac delta. 

Recall that on a standard Borel space $X$, the ``diagonal'' set
\[
 \Delta \coloneqq \{(x,y) : x=y\} \subseteq X\times X
\]
is measurable. (This in turn implies that all singletons are measurable.)

\begin{lemma}\label{atomless}
 Let $p$ be a finite measure on a standard Borel space $X$. 
 Then $p\otimes p\,(\Delta) = 0$ if and only if $p$ is atomless.
\end{lemma}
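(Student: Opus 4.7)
The plan is to use Fubini's theorem to reduce the computation of $p\otimes p(\Delta)$ to an integral involving the measures of singletons, and then exploit the characterization of atoms on a standard Borel space as singletons.

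First, I would recall that on a standard Borel space $X$ every singleton $\{x\}\subseteq X$ is measurable (this follows from the diagonal $\Delta\subseteq X\times X$ being measurable, since $\{x\} = \Delta_x$ is a fiber of a measurable set). Then, for each $x\in X$, the $x$-slice of $\Delta$ is exactly $\Delta_x = \{x\}$. Applying Fubini to the product measure gives
\[
 p\otimes p\,(\Delta) \;=\; \int_X p(\Delta_x)\, p(dx) \;=\; \int_X p(\{x\})\, p(dx).
\]

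For the ``if'' direction, suppose $p$ is atomless. Since atoms on a standard Borel space are exactly singletons of positive measure, $p(\{x\})=0$ for every $x\in X$. The integrand above vanishes identically, hence $p\otimes p(\Delta)=0$.

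For the ``only if'' direction, I would argue by contrapositive. If $p$ is not atomless, there exists an atom, which must be a singleton $\{x_0\}$ with $p(\{x_0\})>0$. Then $(x_0,x_0)\in\Delta$ and $\{(x_0,x_0)\}$ is measurable, so
\[
 p\otimes p\,(\Delta) \;\ge\; p\otimes p\,(\{(x_0,x_0)\}) \;=\; p(\{x_0\})^2 \;>\; 0.
\]
No step here looks particularly hard; the only point requiring care is the appeal to the standard Borel structure to ensure that singletons are measurable and that every atom is a singleton, so that the Fubini reduction and the direct lower bound are both meaningful.
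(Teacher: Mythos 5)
Your proof is correct and follows essentially the same route as the paper: both reduce $p\otimes p(\Delta)$ to $\int_X p(\{x\})\,p(dx)$ via the slices $\Delta_x=\{x\}$ and then use that atoms of a standard Borel space are singletons. The only cosmetic difference is in the ``only if'' direction, where you argue by contrapositive with the direct lower bound $p\otimes p(\{(x_0,x_0)\})=p(\{x_0\})^2>0$, while the paper notes that a $p$-a.e.\ vanishing of $x\mapsto p(\{x\})$ already forces it to vanish identically; both are fine.
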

\begin{proof}[Proof of \Cref{atomless}]
 We have, by definition of product measure, that
 \[
  p\otimes p\,(\Delta) = \int_X p(\Delta_x)\, p(dx) .
 \]
 Notice now that 
 \[
  \Delta_x = \{y\in X : (x,y) \in\Delta\} = \{x\} ,
 \]
 so that we are left with 
 \begin{equation}\label{ppdelta}
  p\otimes p\,(\Delta) = \int_X p(\{x\})\, p(dx) .
 \end{equation}
 Now, $p$ is atomless if and only if the measurable function 
 \[
  \begin{tikzcd}[row sep=0]
   X \ar{r}{\tilde p} & {[0,1]} \\
   x \ar[mapsto]{r} & p(\{x\}) 
  \end{tikzcd}
 \]
 is identically zero. If this is the case, \eqref{ppdelta} is zero. Conversely, if \eqref{ppdelta} is zero, it means that $\tilde p(x) = p(\{x\})$ can only be nonzero on a set of $p$-measure zero, but this implies that $\tilde p$ is zero identically.
\end{proof}

Let's now turn to the examples.

\subsubsection{Shannon and Rényi entropies, standard Borel case}

\begin{theorem}\label{abscatomic}
 Let $p$ be a probability distribution on a standard Borel space $X$. Then 
 \[
  \cop\circ\,p \ll p\otimes p
 \]
 if and only if $p$ is discrete.
\end{theorem}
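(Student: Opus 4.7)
The plan is to decompose $p$ into its atomic and atomless parts and use \Cref{atomless} to exhibit an explicit witness set that breaks absolute continuity whenever the atomless part is nontrivial.

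First I would fix notation. Let $A\coloneqq\{x\in X : p(\{x\})>0\}$ be the set of atoms of $p$. Since $X$ is standard Borel, every singleton is measurable, and a finite measure has at most countably many atoms, so $A$ is measurable (a countable union of singletons) and hence so is $X_c\coloneqq X\setminus A$. Write $p=p_a+p_c$ where $p_a\coloneqq p|_A$ is purely atomic and $p_c\coloneqq p|_{X_c}$ has no singletons of positive mass; since $X$ is standard Borel, atoms of a finite measure can only be singletons, so $p_c$ is atomless. Finally, $p$ is discrete if and only if $p(X_c)=0$.

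For the nontrivial direction, suppose $p$ is not discrete, so $p(X_c)>0$, and take the witness set $S\coloneqq\Delta\cap(X_c\times X_c)$, which is measurable because $\Delta$ is. On the one hand,
\[
\cop\circ p\,(S) = p\big(\{x\in X : (x,x)\in S\}\big) = p(X_c) > 0.
\]
On the other hand, for subsets of $X_c\times X_c$ the measures $p\otimes p$ and $p_c\otimes p_c$ coincide, and \Cref{atomless} applied to the atomless measure $p_c$ gives $p_c\otimes p_c(\Delta)=0$, whence $p\otimes p(S)=0$. This rules out $\cop\circ p\ll p\otimes p$.

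Conversely, suppose $p$ is discrete, so $p=\sum_i c_i\,\delta_{x_i}$ with $c_i>0$. Then
\[
p\otimes p = \sum_{i,j} c_i c_j\,\delta_{(x_i,x_j)}, \qquad \cop\circ p = \sum_i c_i\,\delta_{(x_i,x_i)}.
\]
Any measurable $S$ with $p\otimes p(S)=0$ must avoid every $(x_i,x_j)$ (since every product $c_ic_j$ is strictly positive); in particular it avoids each diagonal point $(x_i,x_i)$, so $\cop\circ p(S)=0$. The only mildly delicate point is the atomic decomposition itself, which on a standard Borel space reduces to the measurability of singletons and the countability of the atom set of a finite measure.
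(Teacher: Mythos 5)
Your proof is correct and follows essentially the same route as the paper's: the same atomic/atomless decomposition of $p$, the same appeal to \Cref{atomless}, and the same witness set (your $\Delta\cap(X_c\times X_c)$ is exactly the paper's $\Delta\setminus A(p\otimes p)$), with the forward direction merely packaged as a contrapositive rather than a direct derivation. The converse argument is identical (and in fact slightly more careful than the paper's, which omits the coefficients $c_i$).
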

\begin{proof}[Proof of \Cref{abscatomic}]
 First, suppose that $\cop\circ\,p \ll p\otimes p$. 
 Denote by $A(p)$ the set of atoms of $p$, which is countable (and possibly empty). 
 Notice that 
 \[
  A(p\otimes p) = A(p)\times A(p)\subseteq X\times X ,
 \]
 and that 
 \begin{equation}\label{atomdiag}
  A(\cop\circ\,p)=\big(A(p)\times A(p)\big)\cap \Delta = A(p\otimes p) \cap \Delta.
 \end{equation} 
 We now decompose $p$ into a discrete and an atomless part as follows,
 \[
  p = p_{\mbox{atomless}} + p_{\mbox{discrete}}
 \]
 where for each measurable set $S\subseteq X$, we set 
 \[
  p_{\mbox{discrete}}(S) \coloneqq p\big(S\cap A(p) \big)  \quad\mbox{and}\quad
  p_{\mbox{atomless}}(S) \coloneqq p\big(S\setminus A(p) \big) . 
 \]
 By \Cref{atomless}, we have that 
 \[
  p_{\mbox{atomless}}\otimes p_{\mbox{atomless}}\,(\Delta) = 0 .
 \]
 Rewriting the left-hand side in terms of $p$,
 \[
  p\otimes p\,\big(\Delta \setminus A(p\otimes p) \big) = 0 .
 \]
 Now since $\cop\circ\,p \ll p\otimes p$, and using \eqref{atomdiag},
 \[
  \cop\circ\,p\,\big(\Delta \setminus A(p\otimes p) \big) = \cop\circ\,p\,\big(\Delta \setminus A(\cop\circ\,p) \big) = 0 ,
 \]
 i.e.
 \[
  \cop\circ\,p_{\mbox{atomless}}\,\big(\Delta \big) = 0 .
 \]
 Since $\cop\circ\,p\,(\Delta)=1$, we must have that $p$ is discrete.
 
 Conversely, suppose that $p$ is discrete. Then 
 \[
  p = \sum_i \delta_{x_i}
 \]
 for countably many $x_i$. 
 This way,
 \[
  \cop\circ\,p = \sum_{i} \delta_{(x_i,x_i)} \quad\mbox{and}\quad p\otimes p = \sum_{i,j} \delta_{(x_i,x_j)} .
 \]
 Every measurable set $S\subseteq X\otimes X$ has $p\otimes p$-measure zero if and only if it does not contain any of the ordered pairs $(x_i,x_j)$, and so in particular it does not contain the ``diagonal'' ones in the form $(x_i,x_i)$, so that $S$ must also have $\cop\circ\,p$-measure zero.
 Therefore $\cop\circ\,p \ll p\otimes p$.
\end{proof}

\begin{corollary}
 Let $p$ be a probability measure on a standard Borel space $X$. We have that
 \[
  H_{KL}(p) = \begin{cases} 
               - \sum_{x\in A(p)} p(x) \ln p(x) & \mbox{if } p \mbox{ is discrete;} \\
               \infty & \mbox{otherwise.}
              \end{cases}
 \]
\end{corollary}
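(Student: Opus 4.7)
The plan is to split on whether $p$ is discrete, and in each case read off the value of $H_{KL}(p) = D_{KL}(\cop\circ p \et p\otimes p)$ using the definition of the KL divergence together with \Cref{abscatomic}.

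First, suppose $p$ is \emph{not} discrete. Then \Cref{abscatomic} says $\cop\circ p \not\ll p\otimes p$, so by the very definition of the Kullback--Leibler divergence on general measurable spaces we immediately get $H_{KL}(p) = D_{KL}(\cop\circ p \et p\otimes p) = \infty$. This half is essentially a one-line invocation of \Cref{abscatomic}.

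Next, suppose $p$ is discrete, so $p = \sum_i p_i\,\delta_{x_i}$ where $p_i = p(\{x_i\})$ and $A(p) = \{x_i\}$ is the (countable) set of atoms. Then $\cop\circ p = \sum_i p_i\,\delta_{(x_i,x_i)}$ is concentrated on the diagonal $\Delta$, while $p\otimes p = \sum_{i,j} p_i\,p_j\,\delta_{(x_i,x_j)}$. I would identify a Radon--Nikodym derivative by inspection: on the diagonal atom $(x_i,x_i)$ the ratio of masses is $p_i/(p_i\cdot p_i) = 1/p_i$, and off the diagonal atoms $\cop\circ p$ vanishes, so one can take
\[
\frac{d(\cop\circ p)}{d(p\otimes p)}(x,y) = \begin{cases} 1/p_i & (x,y) = (x_i,x_i),\\ 0 & \text{otherwise,}\end{cases}
\]
which is well defined $(p\otimes p)$-a.e.\ since $p\otimes p$ is itself discrete and concentrated on the $(x_i,x_j)$. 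Plugging into the definition of the KL divergence and using \Cref{logzero} to handle any $p_i = 0$ terms,
\[
H_{KL}(p) \;=\; \int_{X\times X} \ln\!\frac{d(\cop\circ p)}{d(p\otimes p)}\,d(\cop\circ p)
\;=\; \sum_i p_i \ln\!\frac{1}{p_i}
\;=\; -\sum_{x\in A(p)} p(x)\ln p(x).
\]

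The only subtlety I anticipate is justifying the Radon--Nikodym formula above rigorously, since $p\otimes p$ need not be supported only on the diagonal. However, because both $\cop\circ p$ and $p\otimes p$ are purely atomic measures on a standard Borel space and the atoms of $\cop\circ p$ are precisely the diagonal atoms of $p\otimes p$ (as noted in \eqref{atomdiag} of the proof of \Cref{abscatomic}), the derivative is simply the ratio of point masses at each atom, vanishing elsewhere; no serious measure-theoretic work is required beyond this observation.
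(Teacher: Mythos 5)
Your proof is correct and matches the paper's (implicit) argument: the paper derives this corollary directly from \Cref{abscatomic} for the non-discrete case and from the discrete-case computation of $D_{KL}(\cop\circ p \et p\otimes p)$ already carried out in the finite setting, which is exactly your identification of the Radon--Nikodym derivative as $1/p_i$ on the diagonal atoms and $0$ elsewhere. The justification you give for that derivative (both measures purely atomic, atoms of $\cop\circ p$ being the diagonal atoms of $p\otimes p$) is sound and suffices.
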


In particular, for a generic probability measure on $\R$ we do not get differential entropy (see \cite[Chapter~8]{cover-thomas} for the definition). 
Note that differential entropy cannot be obtained from any divergence: it can be negative, and it is nonzero (but instead, negative infinity) on Dirac deltas. 
In the literature it is well known that differential entropy differs from the limit of discrete entropy by an infinite constant \cite[Section~8.3]{cover-thomas}. The entropy $H_{KL}$ that we find here corresponds to the limiting discrete entropy, rather than to differential entropy. 

\begin{question}
 Can we obtain entropy as a supremum over countable partitions, as we saw for the divergence in \Cref{partitions}?
\end{question}

We get a similar result for the Rényi divergence.
\begin{corollary}
 Let $p$ be a probability measure on a standard Borel space $X$. For $\alpha\in(0,1),\alpha\ne 1$,
 \[
  H_{\alpha}(p) = \begin{cases} 
               \dfrac{1}{1-\alpha} \, \ln \left( \sum_{x\in X} p(x)^{\alpha} \right) & \mbox{if } p \mbox{ is discrete;} \\
               \infty & \mbox{otherwise,}
              \end{cases}
 \]
 recalling that the Rényi entropy of order $\alpha$ is given by the divergence of order $2-\alpha$. 
 For $\alpha=0$ and $\alpha=\infty$ one can again take the limit. 
\end{corollary}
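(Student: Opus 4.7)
The proof will mirror the structure of the previous corollary (the Shannon/KL case) and rely on \Cref{abscatomic} as the workhorse. The plan is to split into the two cases of the piecewise formula, corresponding exactly to whether or not $\cop\circ p \ll p\otimes p$.

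For the ``otherwise'' branch, the argument is immediate: by \Cref{abscatomic}, whenever $p$ is not discrete we have $\cop\circ p \not\ll p\otimes p$, and by the very definition of the Rényi $\alpha$-divergence, $D_\alpha(\cop\circ p\et p\otimes p)=\infty$ in that case. Hence $H_\alpha(p)=\infty$ for every $\alpha\in(0,\infty)\setminus\{1\}$, and then also for $\alpha=0,\infty$ by taking limits as in the definition.

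For the discrete branch, write $p=\sum_i a_i\delta_{x_i}$ with $a_i=p(\{x_i\})>0$, so that $\cop\circ p=\sum_i a_i\,\delta_{(x_i,x_i)}$ and $p\otimes p=\sum_{i,j}a_ia_j\,\delta_{(x_i,x_j)}$. The Radon–Nikodym derivative $\frac{d(\cop\circ p)}{d(p\otimes p)}$ is supported on the diagonal pairs $(x_i,x_i)$, where it takes the value $a_i/(a_i a_i)=1/a_i$. Substituting into the defining integral of $D_\alpha$ and integrating against $\cop\circ p$ (which concentrates on the diagonal) yields
\[
  \int \Big(\tfrac{d(\cop\circ p)}{d(p\otimes p)}\Big)^{\alpha-1} d(\cop\circ p)
   \;=\; \sum_i a_i\cdot a_i^{-(\alpha-1)}
   \;=\; \sum_i a_i^{2-\alpha},
\]
so that $H_{D_\alpha}(p)=\frac{1}{\alpha-1}\ln\sum_i a_i^{2-\alpha}$. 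Since $H_{D_\alpha}$ corresponds to the Rényi entropy of order $2-\alpha$ in the standard convention, reindexing by $\alpha\mapsto 2-\alpha$ delivers the stated formula $\frac{1}{1-\alpha}\ln\sum_x p(x)^{\alpha}$. The endpoint cases $\alpha=0,\infty$ then follow by continuity from the definition of $D_0$ and $D_\infty$ as limits.

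The argument is essentially mechanical once \Cref{abscatomic} is in hand; the only subtle point is bookkeeping the index shift between the divergence convention used throughout the paper and the classical Rényi entropy convention appearing in the displayed formula, so I would make this explicit with a brief parenthetical remark. No convergence issues arise for discrete $p$ since the sum $\sum a_i^{2-\alpha}$ is manifestly convergent for $\alpha<2$ and the $\alpha\to\infty$ limit is handled by the separate definition.
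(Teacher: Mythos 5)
Your proof is correct and takes essentially the same route the paper intends for this corollary: the non-discrete case is immediate from \Cref{abscatomic} (failure of absolute continuity forces $D_\alpha=\infty$), and the discrete case reproduces the $\cat{FinStoch}$ computation of $D_\alpha(\cop\circ p\et p\otimes p)$ via the Radon--Nikodym derivative $1/a_i$ on the diagonal, followed by the $\alpha\mapsto 2-\alpha$ reindexing. One minor quibble: for exponents in $(0,1)$ the sum $\sum_x p(x)^{\alpha}$ need \emph{not} converge when $p$ has infinite support, contrary to your closing remark --- but since both sides of the identity are then $+\infty$, the stated formula still holds.
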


\subsubsection{Gini-Simpson index (linear entropy), standard Borel case}\label{gini-cont}

For the total variation distance, we have a similar situation to the discrete case of \Cref{gini}.

\begin{theorem}\label{ginisimpson}
 Let $X$ be a standard Borel space, and denote by $\Delta\subseteq X\times X$ the diagonal subset.
 For each probability measure $p$ on $X$, we have
 \[
  H_T(p) = 1 - p\otimes p \,(\Delta) ,
 \]
 i.e.~the probability that two points drawn independently from the distribution $p$ are not equal.
\end{theorem}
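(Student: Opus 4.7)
The plan is to recognize the set $\Delta$ as the optimal witness for the supremum defining $d_T(\cop\circ p, p\otimes p)$, and then to compute both $\cop\circ p(\Delta)$ and $p\otimes p(\Delta)$ directly. Since both measures are probability measures, the supremum of $|\mu(S)-\nu(S)|$ equals the supremum of the signed difference $\mu(S)-\nu(S)$, so it is enough to show that this signed difference is maximized at $S=\Delta$.

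First I would verify two facts about $\mu \coloneqq \cop\circ p$ and $\nu \coloneqq p\otimes p$. (i) $\mu$ is concentrated on $\Delta$: for any measurable $S\subseteq X\times X$, by definition of the copy map $\mu(S) = p(\{x : (x,x)\in S\}) = \mu(S\cap\Delta)$, so $\mu(\Delta)=1$. (ii) For any measurable $T\subseteq\Delta$, setting $A=\{x\in X : (x,x)\in T\}$ (measurable since $\Delta$ is measurable in the standard Borel setting), we have $\mu(T) = p(A)$, whereas by Fubini $\nu(T) = \int_A p(\{x\})\,p(dx) \le \int_A 1\,p(dx) = p(A)$. Thus $\mu \ge \nu$ on $\Delta$-measurable subsets, and in particular $\mu(T)-\nu(T)$ is monotone nondecreasing in $T\subseteq\Delta$.

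Next, for an arbitrary measurable $S\subseteq X\times X$, I would split $S = (S\cap\Delta)\sqcup(S\setminus\Delta)$ and use concentration of $\mu$ on $\Delta$ to get
\[
\mu(S)-\nu(S) \;=\; \bigl[\mu(S\cap\Delta)-\nu(S\cap\Delta)\bigr] \,-\, \nu(S\setminus\Delta) \;\le\; \mu(S\cap\Delta)-\nu(S\cap\Delta),
\]
since $\nu(S\setminus\Delta)\ge 0$. Then monotonicity from step (ii) gives $\mu(S\cap\Delta)-\nu(S\cap\Delta) \le \mu(\Delta)-\nu(\Delta) = 1-p\otimes p(\Delta)$, and the choice $S=\Delta$ attains this bound. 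Therefore
\[
H_T(p) \;=\; d_T(\cop\circ p, p\otimes p) \;=\; 1 - p\otimes p(\Delta),
\]
as desired. The interpretation as the probability that two independent draws differ follows from $(p\otimes p)(\Delta^c) = 1 - (p\otimes p)(\Delta)$.

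There is no real obstacle here beyond bookkeeping; the only place one needs the standard Borel hypothesis is to know that $\Delta$ is a measurable subset of $X\times X$, which the excerpt has already pointed out. All other manipulations reduce to Fubini and to the elementary inequality $p(\{x\})\le 1$.
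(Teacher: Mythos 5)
Your proof is correct and follows essentially the same route as the paper's: identify $\Delta$ as the optimizer by showing $\cop\circ p \ge p\otimes p$ on measurable subsets of $\Delta$, then split an arbitrary $S$ into $S\cap\Delta$ and $S\setminus\Delta$ using that $\cop\circ p$ is concentrated on $\Delta$. The only (minor, welcome) differences are that you prove the key inequality on subsets of $\Delta$ by a direct Fubini computation with $p(\{x\})\le 1$ rather than via projections and equality of marginals, and that you explicitly dispose of the absolute value by passing to complements, a point the paper leaves implicit.
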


Let's prove the theorem using the following auxiliary statement.

\begin{lemma}\label{diagmore}
 Let $T$ be a measurable subset of $\Delta$. Then 
 \[
  \cop\circ\,p\,(T) \ge p\otimes p\, (T) .
 \]
\end{lemma}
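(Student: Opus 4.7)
The plan is to reduce both sides of the inequality to integrals over a single measurable ``diagonal coordinate''. Since $T \subseteq \Delta$, for each $x \in X$ the vertical slice $T_x = \{y \in X : (x,y)\in T\}$ can only contain $x$, so $T_x$ is either $\{x\}$ or $\varnothing$. I would introduce the set $T_0 := \{x \in X : (x,x) \in T\}$, which is measurable (it is the preimage of $T$ under the measurable diagonal map $X \to X\times X$), and observe that $T_x = \{x\}$ exactly when $x \in T_0$.

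Next I would compute each side. By definition of the copy kernel as the pushforward along the diagonal, $\cop\circ p\,(T) = p(T_0)$. On the other hand, by Fubini/definition of the product measure,
\[
 p\otimes p\,(T) = \int_X p(T_x)\, p(dx) = \int_{T_0} p(\{x\})\, p(dx),
\]
since $p(T_x) = 0$ for $x \notin T_0$ and $p(T_x) = p(\{x\})$ for $x \in T_0$. The function $x \mapsto p(T_x)$ is measurable by Fubini, so this integral is well-defined.

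Finally, because $p(\{x\}) \le 1$ for every $x$, I conclude
\[
 p\otimes p\,(T) = \int_{T_0} p(\{x\})\, p(dx) \le \int_{T_0} 1 \, p(dx) = p(T_0) = \cop\circ p\,(T).
\]

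I do not expect any real obstacle here; the only mildly delicate point is the measurability of $T_0$ and of $x \mapsto p(T_x)$, but both are immediate (the diagonal is measurable in a standard Borel square, and slice-measurability of product measures is standard). The statement is essentially the observation that a product measure ``loses mass'' on the diagonal whenever the measure has no atoms, since it must integrate $p(\{x\})$ rather than the indicator $\mathbf{1}_{T_0}(x)$.
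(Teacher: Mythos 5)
Your proof is correct. It takes a slightly different (more computational) route than the paper: the paper introduces the projection $T'=\pi_1(T)$, notes that both $\cop\circ p$ and $p\otimes p$ assign the value $p(T')$ to the cylinder $T'\times X$, and then gets the inequality from monotonicity ($p\otimes p\,(T)\le p\otimes p\,(T'\times X)$) combined with the observation that $\cop\circ p$ puts all of its mass on $T$ already, i.e.\ $\cop\circ p\,(T)=\cop\circ p\,(T'\times X)$. You instead compute both sides outright: your $T_0$ is the same set as the paper's $T'$ (for $T\subseteq\Delta$ the first-coordinate projection coincides with the preimage under the diagonal map, which is also the clean way to see that this set is measurable), and you evaluate $p\otimes p\,(T)=\int_{T_0}p(\{x\})\,p(dx)$ via slices before bounding $p(\{x\})\le 1$. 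The paper's version never needs the Fubini-type slice formula; yours has the small bonus of producing the exact value of $p\otimes p\,(T)$, which is the same computation underlying \Cref{atomless}. One cosmetic remark: your closing sentence about the product measure ``losing mass on the diagonal whenever $p$ has no atoms'' is a gloss on the atomless case rather than on this lemma, whose inequality holds for arbitrary $p$; the proof itself is unaffected.
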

\begin{proof}[Proof of \Cref{diagmore}.]
 Denote by $T'\subseteq X$ the projection $\pi_1(T)\subseteq X$ of $T$ onto its first coordinate.
 Since both measures have $p$ as first (and second) marginal, we have that 
 \[
  \cop\circ\,p\,(T'\times X) = p(T') = p\otimes p\,(T'\times X) .
 \]
 Now $T$ is a measurable subset of $T'\times X$, and so we have 
 \[
  p\otimes p \,(T) \le p\otimes p\,(T'\times X) = \cop\circ\,p\,(T'\times X) .
 \]
 On the other hand,
 \begin{align*}
  \cop\circ\,p \,(T'\times X) &= p(\{x\in X : (x,x)\in T'\times X\}) \\
   &= p(\{x\in X : (x,x)\in T\} \\
   &= \cop\circ\,p(T) . \qedhere
 \end{align*}
\end{proof}

We are now ready to prove the theorem.

\begin{proof}[Proof of \Cref{ginisimpson}.]
 First of all, 
 \begin{equation}\label{suphere}
  H_T(p) = \sup_{S\in\Sigma_{X\times X}} \big| \cop\circ\,p\,(S) - p\otimes p\,(S) \big| .
 \end{equation}
 Using \Cref{diagmore}, we have that 
 \begin{align*}
  H_T(p) &\ge \big| \cop\circ\,p\,(\Delta) - p\otimes p\,(\Delta) \big| \\
   &= \cop\circ\,p\,(\Delta) - p\otimes p\,(\Delta) \\
   &= 1 - p\otimes p\,(\Delta) .
 \end{align*}
 
 Let's now show that $\Delta$ (or equivalently, its complement) maximizes \eqref{suphere}. Let $S$ be any (other) measurable subset of $X\times X$. Then 
 \begin{align*}
  \cop\circ\,p\,(S) - p\otimes p\,(S) &= \begin{multlined}[t]
                                         \cop\circ\,p\,(\Delta\cap S) - p\otimes p\,(\Delta\cap S) \\
                                           + \cop\circ\,p\,(S\setminus\Delta) - p\otimes p\,(S\setminus\Delta)
                                        \end{multlined} \\
 &\le \cop\circ\,p\,(\Delta\cap S) - p\otimes p\,(\Delta\cap S) ,
 \end{align*}
 since $\cop\circ\,p\,(S\setminus\Delta)=0$ and $p\otimes p\,(S\setminus\Delta)\ge 0$. 
 Therefore, in the supremum \eqref{suphere}, we can equivalently restrict to $\Delta$ and its subsets. 
 
 So let $T$ be a measurable subset of $\Delta$. We have that 
 \[
  \cop\circ\,p\,(T) - p\otimes p\,(T) = \begin{multlined}[t]
                                         \cop\circ\,p\,(\Delta) - p\otimes p\,(\Delta) \\
                                           + \cop\circ\,p\,(\Delta\setminus T) - p\otimes p\,(\Delta\setminus T) ,
                                        \end{multlined}
 \]
 but the last line is nonnegative by \Cref{diagmore}, and so the optimum is attained at $\Delta$. 
\end{proof}

\begin{corollary}
 Let $X$ be a standard Borel space. Then for each probability measure $p$ on $X$,
 \[
  H_T(p) = 1 - \sum_{x\in A(p)} p(x)^2 ,
 \]
 where $A(p)$ is the set of atoms of $p$.
 In particular, $H_T(p)=1$ if and only if $p$ is atomless.
\end{corollary}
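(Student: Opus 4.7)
The plan is to combine the preceding theorem (\Cref{ginisimpson}) with the formula for $p\otimes p(\Delta)$ that appeared inside the proof of \Cref{atomless}. By \Cref{ginisimpson},
\[
H_T(p) = 1 - p\otimes p\,(\Delta),
\]
so the task reduces to showing
\[
p\otimes p\,(\Delta) = \sum_{x\in A(p)} p(x)^2.
\]

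First I would recall the identity
\[
p\otimes p\,(\Delta) = \int_X p(\{x\})\, p(dx),
\]
which was established in the proof of \Cref{atomless} by computing $\Delta_x = \{x\}$ and applying Fubini for product measures on a standard Borel space (where the diagonal and all singletons are measurable). Next, I would observe that the integrand $\tilde p(x) = p(\{x\})$ vanishes outside the (countable) set $A(p)$ of atoms, since by definition a point $x \notin A(p)$ has $p(\{x\}) = 0$. Hence the integral collapses to a sum over atoms with respect to the atomic part of $p$, giving
\[
\int_X p(\{x\})\, p(dx) = \sum_{x\in A(p)} p(\{x\}) \cdot p(\{x\}) = \sum_{x\in A(p)} p(x)^2,
\]
which is the claimed formula.

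For the second statement, $H_T(p) = 1$ is equivalent to $p\otimes p\,(\Delta) = 0$, which by \Cref{atomless} is exactly the condition that $p$ be atomless. Equivalently, the sum $\sum_{x\in A(p)} p(x)^2$ vanishes iff $A(p) = \varnothing$.

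I do not expect any serious obstacle here, since all the analytic content is already packed into \Cref{ginisimpson} and \Cref{atomless}; the only mild point to be careful about is justifying that the integral $\int_X p(\{x\})\,p(dx)$ splits along the decomposition of $p$ into its discrete and atomless parts, which is immediate from countable additivity of $p$ restricted to $A(p)$.
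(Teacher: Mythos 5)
Your proof is correct and follows exactly the route the paper intends: apply \Cref{ginisimpson} to reduce the claim to computing $p\otimes p\,(\Delta)$, then use the identity $p\otimes p\,(\Delta)=\int_X p(\{x\})\,p(dx)$ from the proof of \Cref{atomless} and observe that the integrand is supported on the countable set $A(p)$, so the integral collapses to $\sum_{x\in A(p)}p(\{x\})^2$. The atomless characterization then follows directly from \Cref{atomless}, as you say.
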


This generalizes the formula for the Gini-Simpson index that we found in the discrete case.

\subsection{Conditional entropy}\label{condent}

Given a source $p$ on $X$, we say that a channel $f:X\to Y$ is $p$-almost surely deterministic if equation \eqref{det} holds $p$-almost surely, i.e.\ if the following equation holds.
\begin{equation}\label{asdet}
	\tikzfig{entropy-asdet}
\end{equation}

We can take the divergence between both sides of \eqref{asdet}, and call the resulting quantity the \emph{conditional entropy}, denoted by $H_D(f|p)$. Let's see this in our examples, once again for the discrete case (the continuous case is analogous).
\begin{itemize}
	\item For the KL divergence, we get exactly Shannon's conditional entropy:
	$$
	H_{KL}(f|p) = \sum_{x,y} p(x) \, f(y|x) \ln f(y|x) .
	$$
	
	\item For the Rényi divergences, we get the following quantity:
	$$
	H_\alpha(f|p) = \dfrac{1}{\alpha-1} \ln \left( \sum_{x,y} p(x)\,f^{2-\alpha} \right) ,
	$$ 
	which analogously we can call the \emph{Rényi conditional entropy}. This agrees (once again up to $\alpha\mapsto 2-\alpha$) with the definition of conditional Rényi entropy given for example in \cite[Definition~7]{sharplowerbounds}. (As we remarked before, this is not the only possible definition, see \cite{berens} for more.)
	
	\item For the total variation distance, 
	\begin{align*}
	H_T(f|p) &= \sum_x p(x) \left( 1- \sum_yf(y|x)^2 \right) \\
	&= 1 - \sum_{x,y} p(a)\,f(y|x)^2 .
	\end{align*}
	We can call this the \emph{conditional linear entropy} or \emph{conditional Gini-Simpson coefficient}.
\end{itemize}

\subsection{Future work: beyond measurable spaces}\label{futurework}

Entropy, as constructed in this work, is an invariant of measurable spaces, or more generally of the Markov category that one is considering.
The same can be said about divergences. 
However, very often in information theory one uses more structure than just $\sigma$-algebras. In $\R$, say, one also uses the order, the metric, and so on, and divergences and entropies on the Markov category $\cat{Stoch}$ are not detecting them.
For example, consider two points $x,y\in\R$ with distance $|x-y|=\varepsilon>0$. 
No matter how small $\varepsilon$ is, for Dirac delta distributions at $x$ and $y$ we always have 
\[
D_{\alpha}(\delta_x\et\delta_y) = \infty, \qquad d_T(\delta_x,\delta_y) = 1 .
\]
Similarly, given a delta at $0$ and normal distributions centered at zero with variances $1$ and $100$,
we have that 
\[
D_{\alpha}(\delta_0\et N_{0,1}) = \infty, \qquad d_T(\delta_0,N_{0,1}) = 1 ,
\]
and in exactly the same way,
\[
D_{\alpha}(\delta_0\et N_{0,100}) = \infty, \qquad d_T(\delta_0,N_{0,100}) = 1 .
\]
As a consequence, the two normal distributions have the same entropy:
\[
H_\alpha(N_{0,1}) = \infty = H_\alpha(N_{0,100}), \qquad H_T(N_{0,1}) = 1 = H_T(N_{0,100}) .
\]

This is not a feature of the choice of divergence ($D_\alpha,d_T$, and so on), but rather, of the \emph{category}, namely $\cat{Stoch}$. While for discrete probability distributions $\cat{FinStoch}$ and its countable analogue are good models, once we consider infinite alphabets (for example, in $\R$), a measurable structure is not enough. 
To see this, let $p$ and $q$ be atomless probability measures on $\R$. As it is well known, all atomless probability measures on $\R$ are equivalent, meaning that there is going to be an isomorphism of measurable spaces $f:\R\to\R$ for which $f_*p=q$. 
Now, let $D$ be any divergence on $\cat{Stoch}$. Since it is by construction an invariant of measurable spaces, also the entropy $H_D$ is, and so, necessarily,
\[
H_D(p) = H_D(q) .
\]
Another way to see this is that we can partition $\R$ into countably many measurable sets, and by the data processing inequality \eqref{dataprocH}, necessarily the entropies of both $p$ and $q$ have to be larger or equal than the entropies of the induced discrete distributions. But by partitioning $\R$ with an atomless measure, regardless of its variance, one can form all possible discrete probability distributions. 
By a similar argument, the divergence between Dirac deltas at distinct points $x,y\in\R$ does not depend on the distance between $x$ and $y$.

In general, the question to ask is,
\begin{question}
	Which category can one use instead of $\cat{Stoch}$ to accurately quantify the randomness of atomless distributions?
\end{question}

One possible answer to the question could be to choose a metric or divergence between distributions on a metric space which is sensitive to the underlying geometry. This would be a categorical counterpart to the growing interest, in the information theory community, for information \emph{geometry}~\cite{amari-foundation,infgeo}.
We leave the study of these metric-based Markov categories, and other more general finer invariants, to future work.

\appendix

\section{The category of divergence spaces}\label{Div}

Here we briefly mention the enriching category which gives divergence-enriched categories, by defining its morphisms and its closed monoidal structure. 

Recall divergence spaces from \Cref{defdiv}.
One can say that a divergence is to a metric as a reflexive relation (or graph) is to a preorder (or a reflexive, transitive graph).

\begin{definition}
 A \emph{morphism} of divergence spaces $f:X\to Y$ is a function which does not increase the divergences:
 $$
 D\big( f(x)\et f(x') \big) \le D(x\et x') .
 $$
\end{definition}

Denote by $\cat{Div}$ the category of divergence spaces and their morphisms.
The isomorphisms of $\cat{Div}$ are divergence-preserving bijections.

\begin{definition}
 Let $X$ and $Y$ be divergence spaces. We denote by $X\boxtimes Y$ the cartesian product $X\times Y$, together with the following divergence.
 $$
 D\big( (x,y)\et (x',y') \big) \coloneqq D(x\et x') + D(y\et y') .
 $$
\end{definition}

The monoidal unit is given by the one-point divergence space.

\begin{definition}
 Let $X$ and $Y$ be divergence spaces. We denote by $[X,Y]$ the set of morphisms $X\to Y$, together with the following divergence.
 $$
 D(f\et g) \coloneqq \max \left\{0,  \sup_{x,x'\in X} \Big( D\big(f(x)\et g(x')\big) - D(x\et x') \Big) \right\} . 
 $$
\end{definition}
This can be considered an enriched version of the ``graph exponential'' construction.

\begin{proposition}
 The category $\cat{Div}$, with the tensor product and internal hom defined above, is monoidal closed. 
\end{proposition}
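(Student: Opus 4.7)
My plan is to prove monoidal closedness in three steps: first, verify that $\boxtimes$ together with the one-point space gives a symmetric monoidal structure on $\cat{Div}$; second, check that $[X, Y]$ as defined is indeed a divergence space; third, establish the adjunction $\cat{Div}(X \boxtimes Y, Z) \cong \cat{Div}(X, [Y, Z])$ via the standard currying bijection, verifying that each side correctly lives in $\cat{Div}$.

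For the first step, the underlying structure comes from the cartesian monoidal structure on $\cat{Set}$: the associator, unitors, and symmetry are inherited from there. I need only check that these are morphisms in $\cat{Div}$, which reduces to commutativity, associativity, and neutrality of $0$ for $+$ in $[0, \infty]$. Functoriality of $\boxtimes$ follows because if $f : X \to X'$ and $g : Y \to Y'$ are morphisms, then $D(f(x) \et f(x')) + D(g(y) \et g(y')) \le D(x \et x') + D(y \et y')$. For the internal hom, reflexivity $D(f \et f) = 0$ is immediate: for a morphism $f$, each term $D(f(x) \et f(x')) - D(x \et x')$ in the supremum is $\le 0$, so the outer $\max\{0, \cdot\}$ yields $0$.

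The heart of the argument is the adjunction. Given $h : X \boxtimes Y \to Z$, define $\hat h(x)(y) \coloneqq h(x, y)$; since $D((x, y) \et (x, y')) = D(y \et y')$, the map $\hat h(x)$ is a morphism $Y \to Z$. Then $\hat h$ itself is a morphism because, for all $y, y'$, the quantity $D(h(x, y) \et h(x', y')) - D(y \et y') \le D(x \et x')$, so taking the supremum over $y, y'$ and then the max with $0$ preserves the bound. Conversely, given $k : X \to [Y, Z]$, set $\check k(x, y) \coloneqq k(x)(y)$; the definition of the divergence on $[Y, Z]$ implies the pointwise bound $D(k(x)(y) \et k(x')(y')) \le D(k(x) \et k(x')) + D(y \et y') \le D(x \et x') + D(y \et y')$, as required. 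The bijection $h \leftrightarrow \hat h$ is the set-level currying, so invertibility and naturality in $X$, $Y$, $Z$ are automatic once we know both sides land in $\cat{Div}$.

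The subtle point — and the place where one could slip — is the $\max\{0, \cdot\}$ inserted in the internal-hom divergence. It is genuinely needed for reflexivity, but one has to check it does not break the adjunction. It does not, because the key inequality $D(k(x)(y) \et k(x')(y')) - D(y \et y') \le D(k(x) \et k(x'))$ used to show $\check k$ is a morphism holds in both regimes: if the defining supremum is positive, the bound is by definition; if it is non-positive, the left-hand side is $\le 0$ while the right-hand side is $\ge 0$. This asymmetry between $+$ on the tensor side and $\max\{0, \sup(\cdot - \cdot)\}$ on the hom side is precisely the ``divergence-theoretic'' analogue of the reflexive-graph/exponential construction alluded to in the text, and it is the one place where I expect the verification to demand care rather than being a routine transcription from $\cat{Set}$.
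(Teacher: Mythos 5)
Your proof is correct and follows essentially the same route as the paper: the set-level currying bijection, together with the verification that $h$ is divergence-nonincreasing if and only if $\hat h$ lands in $[Y,Z]$ and is itself divergence-nonincreasing. You are in fact slightly more careful than the paper about the role of the $\max\{0,\cdot\}$ in the hom divergence (the paper's chain of inequalities silently uses that the final bound $D(x\et x')$ is nonnegative), and you additionally verify the underlying monoidal structure, which the paper takes for granted.
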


The proof is similar to the case of graphs.

\begin{proof}
 Let $f:X\times Y\to Z$ be a function. Consider the ``curried'' function $f^\sharp:X\to Z^Y$ which maps $x\in X$ to 
 $$
 \begin{tikzcd}[row sep=0]
  Y \ar{r}{f^\sharp_x} & Z \\
  y \ar[mapsto]{r} &f^\sharp_x(y) = f(x,y) .
 \end{tikzcd}
 $$
 As $\cat{Set}$ is cartesian closed, we know that the assignment $f\mapsto f^\sharp$ gives a bijection $Z^{X\times Y}\cong (Z^Y)^X$. It remains to prove that $f$ is a morphism of divergence spaces if and only if $f^\sharp$ is, so that we get an isomorphism
 $$
 \cat{Div}(X\boxtimes Y, Z) \cong \cat{Div}(X, [Y,Z]) .
 $$
 So suppose $f$ is a morphism of divergence spaces. First of all, for all $x\in X$, we have that for all $y,y'\in Y$, 
 $$
 D \big( f^\sharp_x(y) \et f^\sharp_x(y') \big) = D\big( f(x,y)\et f(x,y')\big) \le D(y\et y') ,
 $$
 so that $f^\sharp_x\in [Y,Z]$. Moreover, for every $x,x'\in X$, we have that 
 \begin{align*}
 D(f^\sharp_x\et f^\sharp_{x'}) &\le \sup_{y,y'\in Y} \Big( D\big(f^\sharp_x(y)\et f^\sharp_{x'}(y')\big) - D(y\et y') \Big) \\
  &= \sup_{y,y'\in Y} \Big( D\big(f(x,y)\et f(x',y')\big) - D(y\et y') \Big) \\
  &\le \sup_{y,y'\in Y} \Big( D(x\et x') + D(y\et y') - D(y\et y') \Big) \\
  &= D(x\et x') ,
 \end{align*}
 so that $f^\sharp$ is a morphism of divergence spaces. 
 
 Conversely, suppose that $f^\sharp:X\to [Y, Z]$ is a morphism of divergence spaces. Then for every $x,x'\in X$ and $y,y'\in Y$,
 \begin{align*}
 D\big( f(x,y)\et f(x',y') \big) - D(y\et y') &= D \big( f^\sharp_x(y) \et f^\sharp_{x'}(y') \big) - D(y\et y') \\
  &\le \sup_{y,y'\in Y} \Big( D \big( f^\sharp_x(y) \et f^\sharp_{x'}(y') \big) - D(y\et y') \Big) \\
  &= D \big( f^\sharp_x \et f^\sharp_{x'} \big) \\
  &\le D(x\et x') ,
 \end{align*}
 so that 
 $$
 D\big( f(x,y)\et f(x',y') \big) \le D(x\et x') + D(y\et y') ,
 $$
 and so $f$ is a morphism of divergence spaces. 
\end{proof}

Therefore we can talk about $\cat{Div}$-categories, categories enriched in $\cat{Div}$.

Now, a category $\cat{C}$ is enriched in $\cat{Div}$ if all the hom-sets are equipped with a divergence, and moreover the composition maps
 \[
  \begin{tikzcd}[row sep=0]
  \cat{C}(A, B) \boxtimes \cat{C}(B, C) \ar{r}{\circ} & \cat{C}(A,C) \\
  (p, q) \ar[mapsto]{r} & q\circ p
  \end{tikzcd}
  \]
are divergence-nonincreasing.
We also say that a monoidal category $(C,\otimes, I)$ is \emph{monoidally} enriched in $\cat{Div}$ if moreover the tensor product maps
\[
   \begin{tikzcd}[row sep=0]
    \cat{C}(X,Y) \boxtimes \cat{C}(A,B) \ar{r}{\otimes} & \cat{C}(X\otimes Y, A \otimes B) \\
    (f,p) \ar[mapsto]{r} & f\otimes p
   \end{tikzcd}
  \]
are divergence-nonincreasing.
These are precisely the conditions appearing in \Cref{defdivmarkov}.

\bibliographystyle{alpha}
\bibliography{entropy}
\addcontentsline{toc}{section}{References}

\end{document}